\newif\ifanonymous
\newif\ifdraft
\newif\ifshort

\anonymousfalse
\draftfalse
\shortfalse 
\InputIfFileExists{localflags}{}

\documentclass[runningheads]{llncs}

\usepackage[utf8]{inputenc}

\usepackage{versions}
\usepackage{xr}

\excludeversion{conf}
\includeversion{full}


\usepackage{booktabs}
\usepackage{amsmath}

\usepackage{amsthm} 
\usepackage{amssymb}
\usepackage{stmaryrd}
\usepackage[utf8]{inputenc}
\usepackage{xspace}
\usepackage{proof}
\usepackage{color}
\usepackage{paralist}
\usepackage[inline]{enumitem}
\usepackage{thm-restate}
\usepackage{booktabs} 
\usepackage[framemethod=tikz]{mdframed}
\usepackage{pifont} 

\usepackage{listings} 
\usepackage{syntax} 
\usepackage{enumitem} 
\setdescription{itemsep=4pt} 
\usepackage[skip=2pt]{caption}

\usepackage[operators, adversary, advantage, probability, lambda, mm, sets, asymptotics]{cryptocode}
\def\pseudocodenoskipamount{-4.5pt}
\def\pseudocodeskipamount{-1.1pt}

\usepackage{floatrow}
\usepackage{msc}
\usepackage{relsize}


\usepackage[hyphens]{url}
\usepackage{hyperref}
\hypersetup{breaklinks=true}
\urlstyle{same}
\usepackage{cite}




\makeatletter
\DeclareRobustCommand{\defeq}{\mathrel{\rlap{%
  \raisebox{0.3ex}{$\m@th\cdot$}}%
  \raisebox{-0.3ex}{$\m@th\cdot$}}%
  =}
\DeclareRobustCommand{\eqdef}{=\mathrel{\rlap{%
  \raisebox{0.3ex}{$\m@th\cdot$}}%
  \raisebox{-0.3ex}{$\m@th\cdot$}}%
  }
\makeatother













\newcommand{\ledot}{\mathrel{\ooalign{\hss\raise.200ex\hbox{$\cdot$}\hss\cr$\le$}}}
\newcommand{\gedot}{\mathrel{\ooalign{\hss\raise.200ex\hbox{$\cdot$}\hss\cr$\ge$}}}

\newcommand\bit{\lbrace0,1\rbrace}
\newcommand\bits[1]{\bit^{#1}}




\newcommand{\Com}{\mathit{com}}

\newcommand{\sig}{\mathit{sig}}
\newcommand{\ver}{\mathit{ver}}

\newcommand{\pk}{\mathit{pk}}
\newcommand{\sk}{\mathit{sk}}
\newcommand{\User}{{\sf User}}
\newcommand{\Issuer}{{\sf Issuer}}
\newcommand{\com}{{\sf com}}
\newcommand{\chall}{{\sf chall}}

\newcommand{\h}{\mathit{h}}
\newcommand{\w}{\mathit{w}}
 

\newcommand{\mathcmd}[1]{{\normalfont\ensuremath{#1}}\xspace}

\newcommand{\mathstring}[1]{\mathcmd{\text{\texttt{\small{}#1}}}}
\newcommand{\mathname}[1]{\mathcmd{\text{\textrm{#1}}}}

\newcommand{\mathfun}[1]{{\sf #1}} 
\newcommand{\mathtext}[1]{\mathcmd{\text{#1}}}
\newcommand{\mathvalue}[1]{\mathcmd{\mathit{#1}}}
\newcommand{\mathentity}[1]{\mathcmd{\text{\textsc{#1}}}}

\newcommand{\mathlabel}[1]{\mathcmd{\textsf{#1}}}
\newcommand{\mathset}[1]{\mathname{#1}}

\newcommand{\textop}[1]{\relax\ifmmode\mathop{\text{#1}}\else\text{#1}\fi}
 
\hyphenation{In-co-er-ci-bi-li-ty in-co-er-ci-ble un-e-co-no-mic -eff-ec-ti-ve po-ly-no-mial-ti-me sche-me wide-spread}
 
\newcommand{\fullversionref}{trollthrottle-full}

\newcommand{\smallappendixorfull}[1]{%
\processifversion{conf}{\cite[App.~\ref{F-#1}]{\fullversionref}}%
\processifversion{full}{App.~\ref{#1}}%
}

\newcommand{\theappendixorfull}[1]{%
\processifversion{conf}{the full version~\cite[Appendix~\ref{F-#1}]{\fullversionref}}%
\processifversion{full}{Appendix~\ref{#1}}%
        }

\newcommand{\fcite}[1]{%
\processifversion{conf}{}%
\processifversion{full}{~\cite{#1}}%
        }

\ifshort
\newlist{penum}{enumerate*}{1}
\setlist[penum,1]{label=(\arabic*)}
\newlist{pdesc}{description*}{1}
\else
\newlist{penum}{enumerate}{1}
\setlist[penum,1]{label=(\arabic*)}
\newlist{pdesc}{description}{1}
\fi

\newfloatcommand{capbtabbox}{table}[][\FBwidth]
\newcolumntype{L}[1]{>{\relsize{-0.6}\arraybackslash}p{#1}}
 

\newcommand{\Lucjan}[1]{}


\newcommand{\CreateProof}{\mathfun{CreateProof}}
\newcommand{\VerifyProof}{\mathfun{VerifyProof}}

\newcommand{\kgsign}{\mathfun{KG}_\mathfun{sig}}
\renewcommand{\sig}{\mathfun{sig}}
\renewcommand{\ver}{\mathfun{ver}}
\newcommand{\kgaenc}{\mathfun{KG}_\mathfun{enc}} 
\newcommand{\kdf}{\mathit{kdf}}
\newcommand{\aenc}{\mathit{enc}}
\newcommand{\adec}{\mathit{dec}}
\newcommand{\DAA}{\mathfun{DAA}}
\newcommand{\Setup}{\mathfun{Setup}}
\newcommand{\Sign}{\mathfun{Sign}}
\newcommand{\Verify}{\mathfun{Verify}}
\newcommand{\Link}{\mathfun{Link}}
\renewcommand{\Join}{\mathfun{Join}} 
\newcommand{\HJoin}{\mathfun{HJoin}}
\newcommand{\IHJoin}{\mathfun{IHJoin}}
\newcommand{\Issue}{\mathfun{Issue}}
\newcommand{\NymExtract}{\mathfun{NymExtract}}
\newcommand{\NymGen}{\mathfun{NymGen}}
\newcommand{\VerifyBsn}{{\mathfun{VerifyBsn}}}
\newcommand{\JoinIssue}{{\mathfun{Join-Issue}}}

\newcommand{\iss}{\mathentity{I}}

\newcommand{\Trev}{\mathstring{'update'}}

\newcommand{\nbd}{\mathvalue{nbd}}

\newcommand{\login}{\mathvalue{login}}
\newcommand{\pw}{\mathvalue{pw}}
\newcommand{\sid}{\mathvalue{sid}}
\newcommand{\gpk}{\mathvalue{gpk}}
\newcommand{\nym}{\mathvalue{nym}}

\newcommand{\seq}{\mathvalue{seq}}
\newcommand{\thres}{\mathvalue{\tau}}
\newcommand{\dom}{\mathvalue{dom}}


\newcommand{\trollthrottle}{\text{TrollThrottle}\xspace}

\newcommand{\uid}{\mathvalue{U}}
\newcommand{\KeyGen}{\mathfun{KeyGen}}
\newcommand{\crs}{\rho}
\newcommand{\ZK}{{\sf ZK}}

\newcommand{\cred}{\mathvalue{cred}}
\newcommand{\Comment}{\mathfun{Comment}}
\newcommand{\Claim}{\mathfun{Claim}}
\newcommand{\VerifyClaim}{\mathfun{VerifyClaim}}
\newcommand{\Attribute}{\mathfun{Attribute}}
\newcommand{\comm}{\mathvalue{\gamma}}
\newcommand{\evidence}{\mathvalue{evidence}}
\newcommand{\gb}{\mathvalue{gb}}
\newcommand{\signedcomm}{\mathcmd{\psi}}
\newcommand{\updateMsg}{\mathvalue{u}}
\newcommand{\Or}{\mathcal{O}}
\newcommand{\claimvalue}{\mathvalue{claim}}

\newcommand{\CorruptU}{\mathlabel{CorruptUser}}
\newcommand{\Chall}{\mathlabel{Chall}}
\newcommand{\CorruptV}{\mathlabel{CorruptVer}}
\newcommand{\CComment}{\mathlabel{CreateComment}}
\newcommand{\CUser}{\mathlabel{CreateHonestUser}}
\newcommand{\UserJoin}{\mathlabel{JoinSystem}}
\newcommand{\HUserJoin}{\mathlabel{HJoinSystem}}

\newcommand{\HU}{{\mathset{HU}}}
\newcommand{\USK}{\mathset{USK}}

\newcommand{\CU}{\mathset{CU}}
\newcommand{\VM}{\mathset{VM}}
\newcommand{\GB}{\mathset{GB}}
\newcommand{\COMM}{\mathset{COMM}}
\newcommand{\OUT}{\mathset{OUT}}

\newcommand{\Exp}{\mathlabel{Exp}}
\newcommand{\ExpTroll}{\Exp_{\A}^{\mathlabel{troll}}}
\newcommand{\ExpFrame}{\Exp_{\A}^{\mathlabel{noframe}}}
\newcommand{\ExpCred}{\Exp_{\A}^{\mathlabel{credibility}}}

\newcommand{\ExpAccsound}{\Exp^\mathlabel{accsound}_{\A}}
\newcommand{\ExpAcccompl}{\Exp^\mathlabel{acccompl}_{\A}}

\newcommand{\rexec}{\leftarrow^{\hspace{-0.53em}\scalebox{0.5}{\$}}\hspace{0.2em}}
\newcommand{\A}[1][]{%
    \ifthenelse{\equal{#1}{}}{\mathcal{A}}{\mathcal{A}^\mathlabel{#1}}%
}
    
\newcommand{\R}{\mathcal{R}}
\newcommand{\challenger}{\mathcal{C}}
\newcommand{\Extr}{{\sf Extr}}
\newcommand{\Sim}{{\sf Sim}}
\newcommand{\exec}{\ensuremath{\leftarrow}}

\newcommand{\Adv}{\mathbf{Adv}} 



\newcommand{\accept}{\mathtt{accept}}
\newcommand{\reject}{\mathtt{reject}}



\newcommand{\anon}{\mathsf{anon}} 
\newcommand{\ExpAnon}{{\sf Exp}_{\A}^{\anon}}

\newenvironment{experiment}[1]
{{\bf Experiment} $#1${\bf :}~\\}{}

\newcommand{\fooArg}{} 
\newcommand{\foooArg}{} 

\newfloat{Scheme}{H}{ext}

\createprocedurecommand{algocode}{}{}{linenumbering,codesize=\scriptsize}

\usetikzlibrary{arrows,shapes,positioning,shadows,trees,automata,tikzmark,fit,backgrounds}
\usetikzlibrary{decorations.pathreplacing}


\tikzset{
	box/.style       = {draw, drop shadow, rectangle, rounded corners=2pt, thick, fill=white},
	basic/.style       = {font=\large,anchor=center},
	verifier/.style    = {basic},
	user/.style          = {basic},
	issuer/.style         = {basic},
	website/.style         = {basic},
	ledger/.style          = {basic},
	group/.style       = {draw, fill=lightgray!50, very thick,draw=gray, ellipse, sloped,inner sep=2pt},
	label/.style       = {font=\sffamily, text centered},
	edgel/.style
        = {midway,sloped,above,font=\small},
	edgen/.style
        = {midway,left, font=\small},
	brace/.style
        = {decorate,decoration={brace},thin, gray},
	bracelabel/.style
        = { label,midway,above, yshift=1mm, black},
}

\begin{document}

\ifanonymous
\author{}

\else
\author{ Ilkan Esiyok$^1$ \and Lucjan Hanzlik$^2$ \and Robert K\"{u}nnemann$^1$ \and Lena Marie Budde$^3$ \and Michael Backes$^1$ } 
\institute{CISPA Helmholtz Center for Information Security \and CISPA Helmholtz Center for Information Security, Stanford University \and Saarland University }
\fi
\authorrunning{Esiyok I., Hanzlik L., K\"{u}nnemann R., Budde L.M and Backes M.}

\title{\trollthrottle{} --- Raising the Cost of Astroturfing}




\maketitle

\begin{abstract}
{Astroturfing, i.e., the fabrication of public discourse by
private or state-controlled sponsors via the creation of fake
online accounts, 
has become incredibly widespread in recent years.
It gives a disproportionally
strong voice to wealthy and technology-savvy actors, permits
targeted attacks on public forums and could in the long run harm the
trust users have in the internet as a communication platform.\\
Countering these efforts without deanonymising the participants
has not yet proven effective; however, we can raise the cost of
astroturfing. 
Following the principle `one person, one voice', we introduce
\trollthrottle, a protocol that limits the number of comments
a single person can post on participating websites.
Using direct anonymous attestation and
a public ledger,
the user is free to choose any
nickname, but the number of comments is aggregated
over all posts on all websites, no matter which nickname was used.
We demonstrate the deployability of \trollthrottle by retrofitting
it to the popular news aggregator website Reddit and by evaluating
the cost of deployment for the scenario of a national newspaper
(168k comments per day), an international newspaper (268k c/d)
and Reddit itself (4.9M c/d).}
\end{abstract}


\section{Introduction}

Astroturfing describes the practice of masking the sponsor of
a message in order to give it the credibility of a message that 
originates from `grassroots' participants (hence the name).
Classic astroturfing involves paid agents fabricating false public
opinion surroundings, e.g., some product.
The anonymity of the cyberspace makes astroturfing very inexpensive;
now, it can even be mechanised~\cite{Ferrara:2016:RSB:2963119.2818717}.
This form of astroturfing, also called `cyberturfing', is a Sybil
attack that exploits a useful, but sometimes fallible heuristic strategy in
human cognition: roughly speaking, the more people claim something,
the improved judgement of credibility~\cite{EJLT501,kuran1998availability}.
In the wake of the 2016 US elections,
Twitter identified, `3,814 [..] accounts' that could be linked to the
Internet Research Agency (IRA), a purported Russian `troll factory'. These
accounts  `posted 175,993 Tweets, approximately 8.4\% of which were
election-related'~\cite{twitter-bots}, which is likely only a fraction
of the overall activity.
This influence comes at a modest price, as the 
IRA
had a \$1.25M budget in the run-up to the 2016
presidential election~\cite{ira-budget} and only 90 members of staff
producing comments~\cite{ira-employees}.

The everyday political discourse has also suffered. Many
newspapers have succumbed under the weight of moderation, e.g., the
New York Times~\cite{nyt-stop}. Some newspapers decided to move discussion
to social media~\cite{sz-stop}, where they only moderate a couple of stories
each day and leave out sensitive topics such as migration
altogether~\cite{netzpolitik-mod}.
Kumar et.\,al.\ show that many popular news pages have hundreds of active
sock puppets, i.e., accounts controlled by individuals with at least one other
account~\cite{kumar2017army}.
The New York Times, one of the largest newspapers worldwide,
has put serious effort and technological skills into moderating
discussion, but ultimately, they had to give up.
In mid-2017, they reported how 
they 
employ modern text analysis techniques to cluster similar comments
and moderate them in one go. At that point in time, they had
12 members of staff 
dedicated to moderation,
handling a daily
average of
12,000 comments~\cite{nyt-comments}.
Despite the effort and expertise put into this,
they had to give up three months later,
deactivating the commenting function on controversial
topics~\cite{nyt-stop}.

In this paper, we propose a cryptographic protocol that permits
throttling the number of comments that a single user can post on all
participating websites \emph{in total}.
The goal is raising the cost of astroturfing: if the threshold is $\thres$,
the cost of posting $n$ comments is the cost of acquiring 
$\lceil \frac{n}{\thres} \rceil$ identities, be it by employing personnel,
by bribery or by identity theft.
Our proposal retains the
anonymity of users and provides accountability for censorship, i.e.,
if a user believes her comment ought to appear on the website, she
can provide evidence that can be evaluated by
the public to confirm misbehaviour on the part of the website. 
\begin{full}
We took care to
explicitly devise 
a clear system of incentives for all participating parties.
\end{full}
Part of this system
is a pseudo-random audit process to ensure honest behaviour, which we have
formally verified.

We show that this protocol, \trollthrottle, can be retrofitted
to existing websites. 
We set up a 
forum\footnote{\url{https://old.reddit.com/r/trollthrottle/}}
on Reddit that demonstrates our proposal.
We also compute the additional cost of operation incurred by our
protocol by
simulating user interaction for three real-life scenarios:
an international newspaper, 
a nationwide publication and
all comments posted on Reddit in one day.
In the newspaper case, the computational overhead incurs a cost of about
\$1.20; for the whole of Reddit, \$3.60 is sufficient.

As a by-product and second contribution, we extend the notion of 
direct anonymous attestation (DAA) by proposing two features with
applications outside our protocol. Both are already supported by 
an existing DAA scheme by Brickell and Li \cite{DBLP:journals/iacr/BrickellL10}. 
First, updatability, which means
that the issuer can non-interactively update the users' credentials.
This allows for easy key rollover in the mobile setting and for
implicit revocation of credentials by not updating them (old
credentials invalidated). Second, instant linkability, which means
that each signature contains a message-independent pseudonym that
determines whether two signatures can be linked. This allows to
efficiently determine whether a signature can be linked to any
existing signature within a given set.\medskip

\begin{full}
\paragraph*{Paper structure:}
We first define the problem and outline the approach in
Section~\ref{sec:outline}.
In Section~\ref{sec:protocol}, we define the protocol in terms of a set
of cryptographic algorithms, which we analyse for security in
Section~\ref{sec:security}. In Section~\ref{sec:practical}, we
consider caveats of implementing
this system that are not covered in the cryptographic model: 
the incentive structure for the participants,
who performs verification, etc.
We evaluate our proposal
(Section~\ref{sec:eval}) and consider its impact on society
(Section~\ref{sec:society}). We conclude after discussing limitations
(Section~\ref{sec:limitations}) and the related work
(Section~\ref{sec:related}).
\end{full}

\section{\trollthrottle}\label{sec:outline}

Despite text analysis techniques that can facilitate moderation, e.g., clustering~\cite{nyt-comments},
many local and international newspaper websites gave up on moderating and
disabled commenting
sections~\cite{nyt-stop,sz-stop}.
Even if troll detection could be automated, e.g., via machine learning, 
as soon as the detection algorithm becomes
available to attackers, numerous techniques permit the creation
of adversarial examples~\cite{Li:2018:LAN:3219819.3219956} to evade
classifiers.
Fundamentally, astroturfing does not even rely on automated content
generation and can be conducted by paid authors in
countries with low labour cost: e.g., the so-called 50-cent party,
a group of 
propagandists sponsored by China, was named after
the remuneration they receive per 
comment~\cite{bbc-50cent}.

Our approach is orthogonal to detection by content. If we can limit
the number of messages to a certain threshold $\thres$ 
that each physical person can send per day,
bots become largely useless, and troll farms need to pay, bribe or
steal identities from sufficiently many actual people to send
messages in their name.
Besides raising the cost, this also raises the probability of detecting larger
operations.
\begin{full}
This approach comes at a cost for honest users, as it imposes a bound
on power users, too. We will discuss this issue in-depth
in Section~\ref{sec:society}.
\end{full}
\medskip

\begin{full}
\noindent
\textbf{Problem statement:}
We  want to establish a system that serves
a set of websites $W_1$ to $W_n$ and that, for each user $U$,
provides the following guarantees:
\begin{enumerate}[label=\Roman*]
    \item If the number of messages a user posts to \emph{any} of the
        websites exceeds $\thres$, all subsequent messages should be
        discarded.
    \item A user is free to choose a virtual identity of her choice for
        any comment. Her comments are unlinkable, even if one or more
        websites conspire against her.
    \item A website should be accountable for censorship: should it
        choose not to display a comment, the user is able to provide
        a piece of verifiable evidence for the public that this comment was
        withheld, without revealing
        her identity.
    \item The trust placed in the organisation running the system
        should be limited.
\processifversion{full}
    \item For each party in the system, there should be a clear
        incentive to participate.

\end{enumerate}
\end{full}


\begin{figure}[t!]
\centering
    \scalebox{0.7}{
        \begin{tikzpicture}[]
        \newcommand{\groupdist}{4mm}
        \node[verifier] (v1) {$V_1$};
        \node[verifier, right = \groupdist of v1.center] (v2) {$\ldots$};
        \node[verifier, right = \groupdist of v2.center] (v3) {$V_n$};

        \node[issuer, below = 20mm of v1] (i) {$I$};

        \node[user, right = 40mm of i] (u) {$U$};
        \node[ledger, right = 40mm of u] (l) {$L$};

        \node[website, above = 20mm of l] (w3) {$W_m$};
        \node[website, left = \groupdist of w3.center ] (w2) {$\ldots$};
        \node[website, left = \groupdist of w2.center] (w1) {$W_1$};



        \node[fit = (v1) (v3),rectangle] (vs) {};
        \node[fit = (w1) (w3),rectangle] (ws) {};

        \draw [brace] (vs.north west) -- (vs.north east) node
            [bracelabel] {verifiers};
        \draw [brace] (ws.north west) -- (ws.north east) node
            [bracelabel] {web servers};

             \draw[<->,draw=gray] (v3) -- (u) node[edgel]  {identify};
             \draw[->,draw=gray] (i) -- (v1) node[edgel]  {audits};
             \draw[->,draw=gray] (i) -- (v2) node[]  {};
             \draw[->,draw=gray] (i) -- (v3) node[]  {};
             \draw[->,draw=gray] (i) -- (u) node[edgel]  {issues DAA
                key };
             \draw[->,draw=gray] (u) -- (l) node[edgel]  {comments};
             \draw[->,draw=gray] (l) -- (w2) node[edgel]  {};
             \draw[->,draw=gray] (l) -- (w1) node[edgel,below]  {represent};
             \draw[->,draw=gray] (l) -- (w3) node[edgel]  {};
\end{tikzpicture}
    }
\caption{Approach}
\label{fig:approach}
\end{figure}
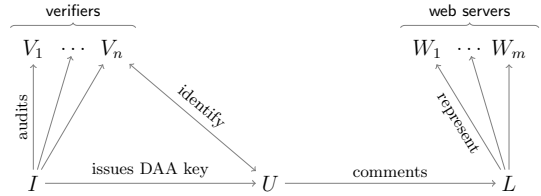 

We built our approach on direct anonymous attestation
(DAA~\cite{DBLP:conf/ccs/BrickellCC04}).
In DAA, an issuing party distributes membership
credentials to signers (in our case users) that it considers legitimate.
Each signer can prove membership by signing data: a valid DAA
signature guarantees that a valid signer signed this
data, but does not reveal the signer's identity.
DAA schemes can also be seen as group signature schemes
that prevent the issuing party to identify the signer of the
message, a feature known as \emph{opening}.

To avoid a single point of trust, the identification of the user is
not only a matter of the issuer, who is likely to be the provider of this
service.
Instead, an agreed upon set of verifiers establishes the legitimacy of users, i.e., that they are real people
and that they have not received a DAA key before. 
We will discuss how the issuer and the verifiers keep each other honest in
\begin{conf}
the full version~\cite{trollthrottle-full}.
\end{conf}
\begin{full}
Section~\ref{sec:incentives}.
\end{full}
To provide accountability, a public ledger keeps records about the
comments that websites ought to publish.

Thus, the following parties cooperate in \trollthrottle:
an issuer $I$, who issues DAA keys,
a set of verifiers $V$, who verify the users' identities,
a set of users $U$, who create DAA signatures of their comments,
a public append-only ledger $L$, who records these signatures,
and 
a set of websites, who verify these signatures and are bound to
publish comments whose signatures exist on the ledger.

In DAA, a signature can be created and verified with respect to a so-called
\emph{basename}. Signatures created by the same user with the
same basename can be linked. This is the key feature to achieve
throttling. Within a commenting period $t$, e.g., a day, only
signatures with a basename of the form $(t,\seq)$ are accepted, where $\seq$ is
a sequence number between $1$ and the desired threshold $\thres$.
If a user signs two messages with the same basename, they can be
linked and discarded by the website.
Hence a user can create at most $\thres$ signatures that are
unlinkable to each other.
A valid DAA signature assures the website that a valid user signed
this comment, but neither the website, nor the issuer or the verifier
learns who created the comment, or which other comments they created.

By storing the signatures on the ledger $L$,
the websites 
(a) can enforce a global bound, and
(b) provide accountability for censorship by promising to
    represent all comments addressed to this website that appear in the ledger.
    If a website does not publish
    a user's comment, it must have sufficient grounds for censorship.

We build on Brickell and Li’s DAA scheme~\cite{DBLP:journals/iacr/BrickellL10}
for its efficiency, but extend it with various features to make
\trollthrottle
more efficient (see \theappendixorfull{sec:instant-linkability}),
more secure (Section~\ref{sec:protocol-zero-knowledge}),
more practical (Section~\ref{sec:revocation}),
and more resistant against compromise (Section~\ref{sec:practical-extended}).

We assume the issuer $I$ is known to all users and websites;
the verifiers $V$ are known to the issuer and all websites;
and the public ledger $L$ is known to all participants in the protocol.
The ledger can be implemented using a consensus mechanism between the
websites and some trusted representatives of civil society (e.g., via
Tendermint~\cite{herlihy2016enhancing} or PBFT~\cite{castro1999practical})
or open consensus mechanisms like blockchains.
We will formalise our approach in terms of PPT algorithms and an interactive protocol.

\begin{definition}[Accountable commenting scheme]\label{def:accountable-commenting-scheme} 
    An accountable commenting scheme consists of
    a tuple of algorithms 
    $(\Setup$, $\KeyGen$, $\Comment$, $\Verify$, $\Claim$, $\VerifyClaim)$
    and an interactive protocol
    $(\JoinIssue)$.
    The algorithms and the protocol are specified as follows.
\end{definition}

$\Setup(\secparam)$ models the generation of a setup parameter $\crs$ used by
all participants from
the security parameter
$\secparam$.
This parameter is an implicit argument for the other
algorithms, but we omit it for brevity.
The issuer $I$ invokes $\KeyGen(\crs)$ to generate its secret key
$\sk_\iss$ and public key $\pk_\iss$ from this parameter.

The issuing procedure
$\langle \Join(\pk_\iss,\uid)\leftrightarrow \Issue(\sk_\iss,\ver,\uid) \rangle$
is an interactive protocol between $I$ and a new user (identified with $\uid$) that has
not registered so far.
At the end of the protocol, the user receives a credential
$\cred_\uid$ and a secret key $\sk_\uid$.
For now, we abstract away from the verifiers by giving the issuer access to
a read-only database $\ver$ such that
$\ver[V,\uid] \in \{0,1\}$
is $1$ iff the verifier $V$ confirms the identity of
a user.
In Section~\ref{sec:practical-verification}, we present and verify a protocol
to implement and audit this verification step.

The commenting procedure is split into four PPT algorithms,
$\Comment$ for $U$ to generate comments that she sends to the ledger,
$\Verify$ for $W$ to verify that a comment on the ledger should be
displayed,
$\Claim$ for $U$ to generate a claim that a valid comment on the
ledger ought to be published,
and $\VerifyClaim$ for the public to verify that said claim is valid.

$\Comment(\pk_\iss,\sk_\uid,\cred_\uid,\dom,m)$ is executed by $U$, who
knows
the issuer's public key $\pk_\iss$, 
its own secret key $\sk_\uid$ and credentials $\cred_\uid$.
%
$U$ chooses
a basename $\dom \in \{0,1\}^*$
and
a message $m \in \{0,1\}^*$
and obtains 
a signed comment $\comm$
and a pseudonym $\nym$, both of which she stores on the ledger. 
%
The basename determines a user's $\nym$, so that anyone can
check whether two comments were
submitted with the same basename by checking their respective nyms
for equality.
This is a key feature: in \trollthrottle, all basenames have to be
of the form 
$\langle t, i\rangle$
for a commenting period $t$ and an integer $i \in
\set{1,\ldots,\thres}$. Hence there are at most $\thres$ unique
basenames within $t$, and thus at most $\thres$ nyms per 
$\sk_\uid$ and $t$.

$\Verify(\pk_\iss,\nym,\dom,m,\comm)$ can be computed by any website
that has access to the issuer's public key $\pk_\iss$,
the comment on the ledger $\comm$, pseudonym $\nym$, domain $\dom$
(which can be determined by trial and error)
and a message $m\in\bits*$ received from the user.
If the output is $1$ and $\comm$ is valid w.r.t.\ $m$,
the website $W$ must display $m$.
If $W$ fails to do that, the user computes 
$\Claim(\pk_\iss,\sk_\uid,\cred_\uid,\dom,m,\comm)$ 
on
the same data as before.
The output $\evidence$ can be publicly verified using the 
$\VerifyClaim(\pk_\iss,\dom,m,\comm,\evidence)$ algorithm. It outputs 1 iff 
$\evidence$ and the ledger entry $\comm$ prove that
$m$ ought to be displayed during the commenting period indicated by
$\dom$.

\section{Protocol definition}\label{sec:protocol}
\label{sec:trollthrottle-def}
\label{sec:protocol-zero-knowledge}

Before we present \trollthrottle as an instance of
an accountable commenting scheme,
we introduce the necessary cryptographic notions.


We follow the DAA definition proposed in~\cite{DBLP:journals/iacr/BrickellL10}. 
A DAA scheme consists of four PPT algorithms 
$(\Setup_\DAA,\Sign_\DAA, \Verify_\DAA, \Link_\DAA)$
and an interactive protocol
$(\JoinIssue_\DAA)$, between parties: 
an issuer $I$, 
a verifier $V$ 
and a signer $S$. 
In our case, the websites take the role of the verifiers, and the users
the role of the signer. 

$\Setup_\DAA(1^\secpar)$ is run by $I$; 
based on the security parameter $1^{\lambda}$, 
it computes the issuer's secret key $\sk_{I}$ 
and public key $\pk_{I}$, 
including global public parameters.
$\JoinIssue_\DAA$ is an interactive protocol between $I$ and $S$
to provide credentials 
issued by $I$
to $S$. It consists of sub-algorithms $\Join_\DAA$ and $\Issue_\DAA$.
$S$ executes
$\Join_\DAA(\pk_{I},\sk_S)$
on input $\pk_{I}$ and $\sk_S$ to obtain
the commitment $\com$.\footnote{We slightly alter the original
definition and assume that instead of sampling this key inside the algorithm,
$S$ provides the key as an input.}
$I$ executes 
$\Issue_\DAA(\sk_{I}, \com)$
to create a credential $\cred_S$ that is 
associated with $\sk_S$ and sent to $S$.
Note the key of $S$ remains hidden from $I$. 

$\Sign_\DAA(\sk_S,\cred_S,\dom, m)$ is executed by $S$ to create a signature $\sigma$ for
a message $m$ w.r.t.\ a basename $\dom$, \begin{full}%
\footnote{%
In the original definition, $\Sign_\DAA$ also takes as input a nonce
$n_V$, which the signature verifier provides to prove freshness of
the message. Brickell and Li make this nonce explicit but it can be
part of the signed message. What's more, we will not make use of such
a nonce in our system and assume that if freshness of a signature is
    required then this nonce will be part of the signed message.}
\end{full}%
which is optionally provided by $V$. If $\dom\neq\bot$,
signatures created by the same signer can be linked.

$\Verify_\DAA(\pk_I,m,\dom,\sigma,\mathvalue{RL})$ is a deterministic
algorithm run by $V$ on a message $m$, a basename $\dom$, a signature $\sigma$, 
and a revocation list $\mathvalue{RL}$
to determine if a signature $\sigma$ is valid.
In \cite{DBLP:journals/iacr/BrickellL10}, $I$ stores revoked secret keys in the revocation list
$\mathvalue{RL}$;
signatures created with a revoked secret key are not valid.

$\Link_\DAA(\sigma_0,\sigma_1)$ is a deterministic algorithm that determines
with overwhelming probability
whether signatures $\sigma_0$ and $\sigma_1$ 
were created by the 
same signer with the same basename $\dom \neq \bot$.
It outputs $1$ if the signatures are linked, $0$ for unlinked 
and $\bot$ for invalid ones.
\medskip

\noindent
\textbf{DAA features}
Brickell and Li's DAA scheme~\cite{DBLP:journals/iacr/BrickellL10} 
has the following security
properties (formally stated in
\theappendixorfull{sec:daa-brickell}).

\emph{Correctness:} if an honest signer's secret key is not in the
        revocation list $RL$, then,  
        with overwhelming probability,
        signatures created by the
        signer are accepted and correctly linked by an honest
        verifier.

\emph{User-controlled-anonymity:} a PPT adversary has a negligible advantage over guessing in a game
        where she has to distinguish 
        whether two given signatures associated with different basenames 
        were created by the same signer or two different signers.

\emph{User-controlled-traceability:}
        no PPT adversary can forge a non-traceable yet
        valid signature with $\dom \neq \bot$\footnote{Note
        that basenames in \trollthrottle are always different from
        $\bot$, see Section~\ref{sec:trollthrottle-def}.}
        without knowing the secret key that was used to create the signature,
        or if her key is in
        the revocation list $RL$.

We add the following property (formally stated in \theappendixorfull{sec:instant-linkability}):

\emph{Instant-linkability}
        There is a deterministic poly-time algorithm $\NymGen$ s.t.
        $\NymGen(\sk_{S},\dom)$ generates a nym that is otherwise
        contained in the signature, and two nyms are equal iff the
        corresponding signatures are linkable.

    \paragraph{Zero-knowledge}


The user creates non-interactive proofs of knowledge to show
that her key was honestly generated. 
We highlight the notation here and 
refer to \theappendixorfull{sec:def-zero-knowledge}
for the full security definitions.
Let $\R$ be an efficiently computable binary relation. For 
$(x,w) \in \R$, we call $x$ a \emph{statement}  and $w$ a \emph{witness}.
Moreover, $L_\R$ denotes the language consisting of statements in $\R$, i.e.,
$L_\R = \{x \mid \exists w : (x,w) \in \R \}$.

\begin{definition}
A non-interactive proof of knowledge system $\Pi$ 
consists of the following three algorithms $(\Setup,\CreateProof,\VerifyProof)$.
\begin{pdesc}
\item[$\Setup(\secparam)$:] on input security parameter $\secparam$, this algorithm
outputs a common reference string $\crs$.
\item[$\CreateProof(\crs,x,w)$:] on input common reference string $\crs$, statement $x$ and
witness $w$; this algorithm outputs a proof $\pi$.
\item[$\VerifyProof(\crs,x,\pi)$:] on input common reference string $\crs$, statement $x$ and
  proof $\pi$; this algorithm outputs either $1$ or $0$.
\end{pdesc}
\end{definition}

    \paragraph{\trollthrottle}

We will now present \trollthrottle in terms of an accountable
commenting scheme (see Def.~\ref{def:accountable-commenting-scheme}).
Besides an instantly linkable DAA scheme, we assume
a collision-resistant hash function $\h$ and
a non-interactive proof of knowledge system for
the relation:
\begin{multline*}
    ((\com,\pk_{I,\DAA}),(\sk_{S,\DAA})) \in \R_\Join
     \Longleftrightarrow 
\com \rexec \Join_\DAA(\pk_{I,\DAA},\sk_{S,\DAA}).
\end{multline*}
We assume that the witness for the statement $(\com,\pk_{I,\DAA})$
contains the random coins used in
$\Join_\DAA$.

\begin{definition}{\trollthrottle Protocol}\label{def:trollthrottle}
    
\begin{description}

\item[$\Setup(\secparam)$] -
compute 
the parameters for the zero-knowledge proof of knowledge
$\crs_\Join \rexec \Setup_\ZK(\secparam)$ and
 output $\crs = (\secparam,\crs_\Join)$.

\item[$\KeyGen(\crs)$]  - 
execute $(\pk_{I,\DAA},\sk_{I,\DAA}) \rexec \Setup_{\DAA}(\secparam)$, set and return
 $\pk_{\iss}=\pk_{I,\DAA}$ 
and $\sk_{\iss}= (\pk_{I,\DAA},\sk_{I,\DAA})$. 

\item[$\Join(\pk_\iss,\sk_{\uid},\uid)$] -
let $\pk_{\iss}=\pk_{I,\DAA}$ and $\sk_{\uid} = \allowbreak \sk_{S,\DAA}$.
Run $\com \rexec \Join_\DAA(\pk_{I,\DAA},\sk_{S,\DAA})$ and
compute proof $\Pi_\Join = \CreateProof(\crs_\Join,\allowbreak (\com,\pk_{I,\DAA}),\sk_{S,\DAA})$.
Send $(\com,\Pi_\Join)$ to the issuer and receive $\cred_\uid$.
Return $(\cred_{\uid},\sk_{\uid})$. 

\item[$\Issue(\sk_\iss,\ver,\uid)$] - 
parse $\sk_{\iss}= (\pk_{I,\DAA},\sk_{I,\DAA})$.
Receive $(\com,\allowbreak \Pi_\Join)$ from the User. 
Abort if the proof is invalid, i.e., $\VerifyProof(\crs_\Join,(\com,\pk_{I,\DAA}),\Pi_\Join)=0$.
Otherwise,
execute the $\Issue_\DAA$ protocol with input $(\com,\sk_{I,\DAA})$, receiving
credentials $\cred_{\uid}$. Send  $\cred_{\uid}$ to the user.

\item[$\Comment(\pk_\iss,\sk_{\uid},\cred_{\uid},\dom,m)$] - set and return 
    $\comm= \allowbreak   (\sigma,\allowbreak \nym,\allowbreak \dom, \h(m))$
where 
        $\sigma \rexec \Sign_{\DAA}(\sk_{\uid},\allowbreak \cred_{\uid},\dom, \h(m))$ and
        $\nym \rexec \NymGen(\sk_\uid,\dom)=
        \NymExtract(\sigma).
        $

\item[$\Verify(\pk_\iss,\nym,\dom,m, \comm)$] - 
Parse 
$\comm = (\sigma,\allowbreak \nym, \allowbreak\dom, h^*)$
and
        $\pk_{\iss}=\pk_{I,\DAA}$.
Output $1$ iff
\begin{full}
\begin{itemize}
\item $\Verify_{\DAA}(\pk_{I,\DAA},h^*,\dom,\sigma,RL_\emptyset)=1$ and $h(m) = h^*$,
\item $\NymExtract(\sigma)=\nym$, and
\item $\VerifyBsn(\sigma,\dom)=1$.
\end{itemize}
\end{full}
\begin{conf}
$\Verify_{\DAA}(\pk_{I,\DAA},h^*,\dom,\sigma,RL_\emptyset)=1$, $h(m) = h^*$,
$\NymExtract(\sigma)=\nym$, and
$\VerifyBsn(\sigma,\dom)=1$.
\end{conf}

\item[$\Claim(\pk_\iss,\sk_\uid,\cred_\uid,\dom, m, \comm)$] - 
 return $\evidence= \comm$.

\item[$\VerifyClaim(\pk_\iss, \dom, m, \comm, \evidence)$] -
    %
    Parse $\comm = \allowbreak (\sigma,\allowbreak \nym, \allowbreak \dom, h)$ and
output 1 iff $\Verify(\pk_\iss,\nym,\allowbreak\dom, m,\comm)=1$.
\end{description}
\end{definition}

%


The algorithms $\Setup$ and $\KeyGen$ generate the issuer's DAA keys and
parameters for the non-interactive zero-knowledge proof of knowledge for 
the relation $\R_\Join$.
%
%
%
The $\JoinIssue$ protocol closely resembles the $\JoinIssue_\DAA$ 
protocol of the DAA scheme with two main differences.
Firstly, the user provides her secret key as input to the $\Join$ algorithm.
This is for practical reasons: in Section~\ref{sec:practical-credentials}, we
explain how this key can be recomputed from a pair of login and password using a
key derivation function when a user switches machines.
The second difference
is the $\Pi_\Join$ proof created by the user to ensure honestly generated secret keys
and allow the security reduction to extract secret keys generated by the adversary.
We remark that during
the $\JoinIssue$ protocol, the user communicates
with a publicly known verifier who validates her identity and
confirms it to $I$.
In Section~\ref{sec:practical-verification}, we present a protocol for
obtaining this confirmation and running a pseudo-probabilistic audit
of $V$ by $I$.

$\Comment$ creates the information that $U$ stores on the ledger,
consisting of the signed comment $\comm$ 
and pseudonym $\nym$.
To provide accountability for censorship, $U$ sends the
signature to the ledger, which notifies the website $W$.
At this point, $W$ must publish the comment $\comm = (\sigma,\nym,\dom,m)$ as long as the signature $\sigma$, 
message and $\dom$ are deemed valid,
and $\nym$ appears exactly once on the ledger.

With the validity requirement on the basename $\dom$ and the ability to detect
repeated basenames in the ledger, we can easily implement
the desired throttling mechanism. 
Let $\thres$ be a threshold for some time frame (e.g.,
a day) and let $t$ mark the current period. Then, a valid $\dom$ is of the form
$(t,\seq)$ with $\seq\in\set{1,\ldots,t}$. 
The sequence number $\seq$ in $\dom$ is 
allowed to arrive out-of-order, but it cannot be larger than \thres.
The throttling is ensured because there exist only $\thres$ valid
basenames per commenting period and thus only $\thres$
valid $\nym$ per $(\sk_{\uid},\dom)$.


If $W$ refuses to publish the comment,
then $U$ can use $\Claim$ to 
claim censorship and
provide 
the entry on the ledger $\gamma$ and $m$
as evidence to the public that $m$
ought to be displayed.
%
%
The public checks the same conditions that $W$ should have
applied.
Part of this check is to interpret a common agreement for moderation,
which
we discuss in more detail in Section~\ref{sec:practical-moderation},
but do not model explicitly.
We show the security of this protocol
in the cryptographic model, see \theappendixorfull{sec:security}.

\section{Practical implementation}\label{sec:practical}

A deployable system needs more than just a cryptographic specification,
but a system of incentives and checks. First, we discuss what methods for
identity verification are available. 
We detail how to identity verification can be deferred to the
verifiers and misbehaviour can be detected using pseudo-probabilistic
audits. 
A realistic system also has to deal with revocation, which we
solve by exploiting a novel property called \emph{updatability}. 
Finally, we discuss questions related to the end user: how
moderation is handled and where to store credentials.
%
Table~\ref{tab:practical-analysis} summarises the protocol components and their security
analysis.

\begin{full}
    \begin{table}[b]
        \centering
        \caption{Overview: security analysis.
        }\label{tab:practical-analysis}
        \begin{tabular}{ll}
            \toprule
            components & security analysis
            \\
            \midrule 
            base protocol & cryptographic proof (\theappendixorfull{sec:security})
            \\
            encrypted ledger & strictly weakens the attacker 
            \\
            identity verification \;\;\; & formally verified (\theappendixorfull{sec:practical-verification})
            \\
            revocation & simple hybrid argument using \theappendixorfull{sec:proof-updatability}
            \\
            extended protocol & cryptographic proof (\theappendixorfull{sec:extended})
            \\
            storing credentials & trivial modification
            \\
            \bottomrule
        \end{tabular}
    \end{table}
\end{full}

\begin{conf}

    \newcommand{\secanalysistab}{
            \resizebox{.48\textwidth}{!}{
            \begin{tabular}{lcl}
                \toprule
                components & \quad & security analysis
                \\
                \midrule 
                base protocol & \quad & cryptographic proof (\smallappendixorfull{sec:security})
                \\
                encrypted ledger & \quad & strictly weakens the attacker 
                \\
                identity verification & \quad & formally verified (\smallappendixorfull{sec:practical-verification})
                \\
                revocation & \quad & simple hybrid argument using \smallappendixorfull{sec:proof-updatability}
                \\
                extended protocol & \quad & cryptographic proof (\smallappendixorfull{sec:extended})
                \\
                storing credentials & \quad & trivial modification
                \\
                \bottomrule
            \end{tabular}
            }

    }

    \newcommand{\timeperiodstab}{
            \resizebox{.44\textwidth}{!}{
            \begin{tabular}{lcccr}
                \toprule
                name & symbol & purpose  & \quad & \vtop{\hbox{\strut typical}\hbox{\strut duration}}
                \\\midrule
                epoch & $t_e$ & \vtop{\hbox{\strut implicit}\hbox{\strut revocation}} & \quad & one week
                \\
                billing period & $t_b$ &  billing & \quad & one month
                \\
                commenting period & $t$ & throttling & \quad & one day
                \\
                \bottomrule
            \end{tabular}
            }

    }

    \begin{figure}[b!]
        
        \begin{floatrow}

            \capbtabbox{%
                \secanalysistab
            }{%
              \caption{\scriptsize{Overview: security analysis.}}
              \label{tab:practical-analysis}
            }
            \capbtabbox{%
                \timeperiodstab
            }{%
              \caption{\scriptsize{Time periods used in protocol.}}
              \label{tab:time}
            }

        \end{floatrow}

    \end{figure}

\end{conf}

    \paragraph{Identity providers}

The verifiers need to attest that only real people receive digital
identities and each person obtains only one.
We discuss multiple competing solutions to this problem, none 
perfect by itself. 
In combination, however, they
cover a fair share of the users for our primary target, news websites.
\medskip

\noindent
\textbf{Identity verification services (IVS):}
Banks, insurers and other online-only services already rely
on so-called identity verification services, e.g., to comply with
banking or anti-money laundering regulations.
Usually, IVS providers verify the authenticity of claims 
using physical identity documents, authoritative sources, or by
performing ID checks via video chat or post-ID\@.
McKinsey anticipates the market for IVS to reach \$16B-20B by 2022~\cite{mckinsey-ivs}. The business model of these companies revolves around their trustworthiness.
\smallskip

\noindent
\textbf{Subscriber lists:}
Newspaper websites are the main targets of our proposal,
because of their political and societal relevance
and the moderation cost they are currently facing.
It is in their interest to provide easy access to their subscribers.
Insofar as bills are being paid, they do have some assurance of the
identity of their subscribers, so they can use their existing
relationship to bootstrap the system by giving access to their customers right away.
\smallskip

\noindent
\textbf{Biometric passports and identification documents:}
Biometric passports are traditional passports that have an embedded
electronic microprocessor chip containing information for
authenticating the identity of the passport holder.  The chip was
introduced to enable additional protection against counterfeiting and
identity theft. This authentication process can be performed locally
(as part of e.g., border control) or against a remote server. Biometric
passports are standardised by the International Civil Aviation
Organization (ICAO)~\cite{international2006machine} and issued by
around 150 countries~\cite{epassport}.
\begin{full}
More importantly,
even many electronic identification documents are supporting this
standard, e.g., the German eID~\cite{TR-03110}.

Our system can easily leverage this infrastructure to
authenticate users. Of course, we need to assume that governments are
issuing those documents honestly, however, large-scale fraud would have
serious repercussions for the issuing government.
\end{full}

\begin{full}

These methods can be combined: even if somebody is neither a subscriber of a newspaper
nor the owner of a digital passport, they still have
the option of identifying to the IVS\@.
We note that a natural person could use
two methods (e.g., IVS and subscriber list) to obtain two DAA credentials and thus effectively double her threshold.
As we provide a method for revocation (see
Section~\ref{sec:practical-verification}),
the verifiers can run private set-intersection protocols
(see~\cite{Pinkas:2018:SPS:3175499.3154794} for an overview) and revoke parties in the intersection.
\end{full}

    \begin{full}
	\begin{table}[t]
	    \centering
	    \caption{Time periods used in protocol.}\label{tab:time}
	    \scalebox{0.85}{
	    \begin{tabular}{lccl}
	        \toprule
	        name & symbol \;\; & purpose  & typical duration
	        \\\midrule
	        epoch & $t_e$ &  implicit revocation \;\;\; & one week
	        \\
	        billing period & $t_b$ &  billing & one month
	        \\
	        commenting period \;\; & $t$ & throttling  & one day
	        \\
	        \bottomrule
	    \end{tabular}
	    }
	\end{table}
\end{full}

\paragraph{Encrypting comments on the
ledger}\label{sec:ledger-encryption}

We distinguish a billing period $t_b$ that is distinct from the
commenting period $t$ (see Table~\ref{tab:time}).
Assume a CCA-secure public key encryption scheme
$(\kgaenc,\allowbreak \aenc,\allowbreak \adec)$, a collision-resistant hash function $h$
and a standard existentially
unforgeable digital signature scheme $(\kgsign,\sig,\ver)$. 
We apply the accountable commenting scheme from
Def.~\ref{def:trollthrottle}.
The output of $\Comment$ is 
encrypted with a public key $\pk_{\vec W,t_b}$ distributed to all
websites participating in the current billing period $t_b$. 
Claims need to include the randomness used to encrypt.
See Fig.~\ref{fig:comment-spec} for the complete message flow.


    \paragraph{Deferring identity verification with pseudo-probabilistic auditing}\label{sec:practical-verification}

\begin{conf}

    \newcommand{\commentspecfig}{
        \resizebox{.49\textwidth}{!}{


            \fbox{
            \pseudocode[]{%
                \mathtext{0. $\uid$ can restore $\sk_{\uid}$ from $\login$ and $\pw$,} \\ 
                \mathtext{\quad and download $\h(\login), \cred_{t_e}$ from $L$:} \\
                \mathtext{\quad \quad $\sk_{\uid}\defeq \kdf(\login, \pw)$}
                    \\
                \mathtext{\quad \quad $L \xrightarrow{} \uid : \{\h(\login), \cred_{t_e}\}$ } \\
                \mathtext{1. $\uid$ computes the basename from date and sequence: } \\
                \mathtext{\quad \quad $\dom\defeq (t, \seq)$} \\
                \mathtext{2. $\uid$ computes the $\nym$ from $\sk_{\uid}$ and $\dom$: } \\
                \mathtext{\quad \quad $\nym\defeq \NymGen(\sk_{\uid}, \dom))$} \\
                \mathtext{3. $\uid$ signs the hash of his comment: } \\
                \mathtext{\quad \quad $\sigma \defeq \Sign(\sk_{\uid}, \cred, \dom, \h(m),W)$} \\
                \mathtext{4. $\uid$ encrypts $\sigma$, attaches metadata and sends it to $L$:} \\
                \mathtext{$\quad \quad \comm\defeq \{\aenc(\pk_{\vec{W},t_b},(\sigma,\nym);r), h(m), W, \dom\}$} \\
                \mathtext{$\quad \quad \uid \xrightarrow{} L : \comm$} \\
                \mathtext{5. $L$ notifies $W$ and $\uid$ sends the raw comment to $W$: } \\
                \mathtext{$\quad \quad L \xrightarrow{} W : \comm$} \\
                \mathtext{$\quad \quad \uid \xrightarrow{} W : m$} \\
                \mathtext{6. $W$ decrypts $\comm$ and verifies the following: } \\
                \mathtext{~\quad $\sigma$ valid, $\VerifyBsn(\sigma,\dom)=1$, $\seq \le  \thres$, $m$ acceptable.} \\
                \mathtext{7. $W$ queries $L$ with $\nym$: } \\
                \mathtext{$\quad \quad \nym\defeq \NymExtract(\sigma)$} \\
                \mathtext{$\quad \quad W \xrightarrow{} L : \nym $ } \\
                \mathtext{8. $W$ publishes $m$ if $\nym$ fresh and
                $m$ acceptable.} \\
                \mathtext{9. $\uid$ claims censorship to public, if $m$ not published: } \\
                \mathtext{$\quad \quad \claimvalue_{\uid}\defeq \{\sigma, \nym, r, m\}$} \\
                \mathtext{$\quad \quad \uid \xrightarrow{} \textit{public} : \claimvalue_{\uid}$} 
                }  

       
        }
        }
    }

    \newcommand{\identificationspecfig}{
        \resizebox{.49\textwidth}{!}{
        \fbox{
        \pseudocode[]{%
            \mathtext{1. $\uid$ creates a secure channel with $I$, with session id $\sid$:} \\
            \mathtext{$\quad \quad I \xrightarrow{} \uid : \sid$} \\
            \mathtext{2. $\uid$ chooses $\login$, $\pw$ and random $r_U$, sends to $I$:} \\
            \mathtext{$\quad \quad r_U \sample \bin$} \\
            \mathtext{$\quad \quad \uid \xrightarrow{} I : login, h(r_U, nbd, 1)$} \\
            \mathtext{3. $I$ chooses random nonce $r_I$, creates a commitment \,$c_I$,} \\
            \mathtext{$\quad$ signs it and sends it to $\uid$:} \\
            \mathtext{$\quad \quad r_I \sample \bin$} \\
            \mathtext{$\quad \quad c_I\defeq h(r_I, sid, h(r_U, nbd, 1))$} \\
            \mathtext{$\quad \quad I \xrightarrow{} \uid : sig(sk_I, c_I)$} \\
            \mathtext{4. $\uid$ creates a secret key from his $\login$ and $\pw$:} \\
            \mathtext{$\quad \quad \sk_{\uid} \defeq \kdf(login, pw)$} \\
            \mathtext{5. $\uid$ creates a secure channel with $V$ and sends:} \\
            \mathtext{$\quad \quad \uid \xrightarrow{} V : \{nbd, c_I, r_U, sig(sk_I, c_I)\}$} \\ 
            \mathtext{6. $V$ verifies $\uid$'s identity with evidence $E$,} \\
            \mathtext{\quad signs the commitment and sends it to $\uid$:} \\
            \mathtext{$\quad \quad \signedcomm\defeq sig(sk_V, c_I)$} \\ 
            \mathtext{$\quad \quad V \xrightarrow{} \uid : \signedcomm$} \\ 
            \mathtext{7. $\uid$ recreates the secure session with the previous $\sid$,} \\
            \mathtext{\quad and sends the commitment (signed by $V$) to $I$:} \\
            \mathtext{$\quad \quad \uid \xrightarrow{} I : \signedcomm$} \\
                \mathtext{** $\uid$ and $I$ run $\JoinIssue$ protocol
                (Fig.~\ref{fig:join-issue-spec})} \\
                \mathtext{** $I$ use $\signedcomm$ to start auditing with $V$
                (Fig.~\ref{fig:audit-spec})} \\ \\
        }

        }
        }
    }

    \begin{figure}[t]

        \begin{floatrow}
        \ffigbox{%
           \commentspecfig
        }
        {%
          \caption{\scriptsize{Message flow for commenting. In step-0, the user's
        secret DAA key is restored using a password, see
        Section~\ref{sec:practical-credentials},
        and the entries in the ledger are encrypted.
        Also,
        we identify the comment $m$ by its hash to save space on the ledger.}}
          \label{fig:comment-spec}
        }
        \!
        \ffigbox{%
            \identificationspecfig
        }{%
          \caption{\scriptsize{Identity verification protocol specification}}%
          \label{fig:ident-spec}
        }
        \end{floatrow}
        
    \end{figure}

\end{conf}

\begin{full}

    \begin{figure}
    \begin{center}
    \begin{mdframed}

        \pseudocode[]{%
                \mathtext{0. $\uid$ can restore $\sk_{\uid}$ from $\login$ and $\pw$,} \\ 
                \mathtext{\quad and download $\h(\login), \cred_{t_e}$ from $L$:} \\
                \mathtext{\quad \quad $\sk_{\uid}\defeq \kdf(\login, \pw)$}
                    \\
                \mathtext{\quad \quad $L \xrightarrow{} \uid : \{\h(\login), \cred_{t_e}\}$ } \\
                \mathtext{1. $\uid$ computes the basename from date and sequence: } \\
                \mathtext{\quad \quad $\dom\defeq (t, \seq)$} \\
                \mathtext{2. $\uid$ computes the $\nym$ from $\sk_{\uid}$ and $\dom$: } \\
                \mathtext{\quad \quad $\nym\defeq \NymGen(\sk_{\uid}, \dom))$} \\
                \mathtext{3. $\uid$ signs the hash of his comment: } \\
                \mathtext{\quad \quad $\sigma \defeq \Sign(\sk_{\uid}, \cred, \dom, \h(m),W)$} \\
                \mathtext{4. $\uid$ encrypts $\sigma$, attaches metadata and sends it to $L$:} \\
                \mathtext{$\quad \quad \comm\defeq \{\aenc(\pk_{\vec{W},t_b},(\sigma,\nym);r), h(m), W, \dom\}$} \\
                \mathtext{$\quad \quad \uid \xrightarrow{} L : \comm$} \\
                \mathtext{5. $L$ notifies $W$ and $\uid$ sends the raw comment to $W$: } \\
                \mathtext{$\quad \quad L \xrightarrow{} W : \comm$} \\
                \mathtext{$\quad \quad \uid \xrightarrow{} W : m$} \\
                \mathtext{6. $W$ decrypts $\comm$ and verifies the following: } \\
                \mathtext{~\quad $\sigma$ valid, $\VerifyBsn(\sigma,\dom)=1$, $\seq \le  \thres$, $m$ acceptable.} \\
                \mathtext{7. $W$ queries $L$ with $\nym$: } \\
                \mathtext{$\quad \quad \nym\defeq \NymExtract(\sigma)$} \\
                \mathtext{$\quad \quad W \xrightarrow{} L : \nym $ } \\
                \mathtext{8. $W$ publishes $m$ if $\nym$ fresh and
                $m$ acceptable.} \\
                \mathtext{9. $\uid$ claims censorship to public, if $m$ not published: } \\
                \mathtext{$\quad \quad \claimvalue_{\uid}\defeq \{\sigma, \nym, r, m\}$} \\
                \mathtext{$\quad \quad \uid \xrightarrow{} \textit{public} : \claimvalue_{\uid}$} 
                }  
    \end{mdframed}
    \caption{Message flow for commenting. Note that in step 0, the user's
        secret DAA key is restored using a password, see
        Section~\ref{sec:practical-credentials},
        and that entries in the ledger are encrypted, see
        Section~\ref{sec:incentives}.
        Furthermore,
        to save space on the ledger,
        we identify the comment $m$ by its hash.}
        \label{fig:comment-spec}
    \end{center}
    \end{figure}

\end{full}

Our security model in \theappendixorfull{sec:security} abstracts away from the
communication between verifier and issuer.
We propose a protocol to implement this step and formally verify it in the
symbolic setting, which is better suited for reasoning about complex interactions.
The protocol 
(Fig.~\ref{fig:ident-spec}) improves privacy by hiding the 
identity verification process from the issuer
and improves accountability by providing a pseudo-random audit.

\begin{full}
	\begin{figure}
	\begin{center}
	\begin{mdframed}
	\pseudocode[]{%
            \mathtext{1. $\uid$ creates a secure channel with $I$, with session id $\sid$:} \\
            \mathtext{$\quad \quad I \xrightarrow{} \uid : \sid$} \\
            \mathtext{2. $\uid$ chooses $\login$, $\pw$ and random $r_U$, sends to $I$:} \\
            \mathtext{$\quad \quad r_U \sample \bin$} \\
            \mathtext{$\quad \quad \uid \xrightarrow{} I : login, h(r_U, nbd, 1)$} \\
            \mathtext{3. $I$ chooses random nonce $r_I$, creates a commitment \,$c_I$,} \\
            \mathtext{$\quad$ signs it and sends it to $\uid$:} \\
            \mathtext{$\quad \quad r_I \sample \bin$} \\
            \mathtext{$\quad \quad c_I\defeq h(r_I, sid, h(r_U, nbd, 1))$} \\
            \mathtext{$\quad \quad I \xrightarrow{} \uid : sig(sk_I, c_I)$} \\
            \mathtext{4. $\uid$ creates a secret key from his $\login$ and $\pw$:} \\
            \mathtext{$\quad \quad \sk_{\uid} \defeq \kdf(login, pw)$} \\
            \mathtext{5. $\uid$ creates a secure channel with $V$ and sends:} \\
            \mathtext{$\quad \quad \uid \xrightarrow{} V : \{nbd, c_I, r_U, sig(sk_I, c_I)\}$} \\ 
            \mathtext{6. $V$ verifies $\uid$'s identity with evidence $E$,} \\
            \mathtext{\quad signs the commitment and sends it to $\uid$:} \\
            \mathtext{$\quad \quad \signedcomm\defeq sig(sk_V, c_I)$} \\ 
            \mathtext{$\quad \quad V \xrightarrow{} \uid : \signedcomm$} \\ 
            \mathtext{7. $\uid$ recreates the secure session with the previous $\sid$,} \\
            \mathtext{\quad and sends the commitment (signed by $V$) to $I$:} \\
            \mathtext{$\quad \quad \uid \xrightarrow{} I : \signedcomm$} \\
                \mathtext{** $\uid$ and $I$ run $\JoinIssue$ protocol
                (Fig.~\ref{fig:join-issue-spec})} \\
                \mathtext{** $I$ use $\signedcomm$ to start auditing with $V$
                (Fig.~\ref{fig:audit-spec})} \\ \\
        }
	\end{mdframed}
	\caption{Identity verification protocol specification}\label{fig:ident-spec}
	\end{center}
	\end{figure}
\end{full}

We assume a collision-resistant one-way hash function $\h$
to instantiate a binding commitment scheme.
When a user wants to register, the website directs her to the issuer.
They run an authentication protocol akin to the ASW protocol for
fair exchange\fcite{Asokan1998a} where, in the end, $U$ gets $V$'s
signature on a commitment $c_I$ generated by $I$.
Only with this signature, the issuer runs the $\JoinIssue$ procedure
from Def.~\ref{def:trollthrottle} (repeated in 
Fig.~\ref{fig:join-issue-spec} for completeness).
Note that the ledger distributes the issuer's public key and public parameters.
In Section~\ref{sec:revocation}, we explain a revocation
mechanism that is based on updating the issuer's public key
every epoch and publishing the fresh key in the ledger.
$U$ also makes use of the ledger by storing its credentials in 
case it needs to recover its state (see
Section~\ref{sec:practical-state}).

\begin{conf}
	\newcommand{\joinissuespecfig}{
	    \resizebox{.49\textwidth}{!}{
	    \fbox{
			\pseudocode[]{%
				\mathtext{1. $\uid$ downloads $\pk_{I}$ from $L$}: \\
				\mathtext{$\quad L \xrightarrow{} \uid : \pk_I$} \\
				\mathtext{2. $\uid$ and $I$ run $\JoinIssue_\DAA$ proto.\ using $\sk_{\uid}$ and} \\
				\mathtext{$\quad \pk_I$ for epoch $t_e$, and $\uid$ gains $\cred_{t_e}$ and $\w_{t_e}.$} \\
				\mathtext{3. $\uid$ inserts $\h(\login)$ and $\cred_{t_e}$ into $L$}: \\
				\mathtext{$\quad \uid \xrightarrow{} L : \h(\login),\cred_{t_e}$} \\
				\mathtext{4. $I$ inserts $\h(\login)$ and $\w_{t_e}$ into $L$}: \\
				\mathtext{$\quad I \xrightarrow{} L : \h(\login),\w_{t_e}$} 
			}
		}
		}
	}  


	\newcommand{\auditingspecfig}{
	    \resizebox{.49\textwidth}{!}{
	    \fbox{
			\pseudocode[]{%
				\mathtext{1. $I$ sends $\sid$ and $r_I$ to $V$:} \\
				\mathtext{$\quad I \xrightarrow{} V : \{sid, r_I\}$} \\
				\mathtext{2. $I$ and $V$ both calculate two hashes:} \\
				\mathtext{$\quad s  \defeq h(r_i, sid, 2)$} \\
				\mathtext{$\quad s' \defeq h(\signedcomm)$} \\
				\mathtext{3. Both compare the first $L$ bits of these two hashes:} \\
				\mathtext{$\quad s|^0_L = s'|^0_L $} \\
				\mathtext{4. If the session is chosen for audit, $V$ sends $E$ to $I$:} \\
				\mathtext{$\quad V \xrightarrow{} I : E$} \\
				\mathtext{5. If $V$ fails to comply, $I$ can publish a $\claimvalue$:} \\
				\mathtext{$\quad \claimvalue_{I} \defeq \{r_I, sid, \h(r_U, \nbd, 1), \signedcomm\}$} \\
			    \mathtext{$\quad I \xrightarrow{} \textit{public} : \claimvalue_{I}$} \\
				\mathtext{6. The public audits $V$, $V$ proves it acted in good faith:} \\
				\mathtext{$\quad V \xrightarrow{} \textit{public} : \{r_U, nbd, E\}$} \\
				\mathtext{7. The public gives the verdict.} \\ \\ \\ \\ \\[\pseudocodeskipamount] \\
			}
		}
		}
	}

	\newcommand{\certupdatespecfig}{
	    \resizebox{.49\textwidth}{!}{
	    \fbox{
			\pseudocode[]{%
				\mathtext{1. $I$ announces new epoch $t_e'$ and updates her $\pk_{I,t_e'}$}: \\
				\mathtext{$\quad I \xrightarrow{} L : t_e'$} \\
				\mathtext{2. $I$ asks all $V$s, to report all valid $\login$s to be updated}: \\
				\mathtext{$\quad V \xrightarrow{} I : \sig(\sk_{V}, (\Trev,c_I))$} \\
			    \mathtext{3. $I$ obtains \emph{update message} $\updateMsg$ for all valid $\login$s from $L$}: \\
				\mathtext{$\quad L \xrightarrow{} I : {\h(\login, \updateMsg)}$} \\
				\mathtext{4. $I$ creates new credentials for each $\login$ :} \\
				\mathtext{$\quad \cred_{t_e'} \defeq \Issuer_\DAA(\sk_{I},\pk_{I},\updateMsg)$} \\
				\mathtext{5. $I$ stores new credentials for each $\login$ in $L$ :} \\
				\mathtext{$\quad I \xrightarrow{} L : \h(\login, \cred_{t_e'}, t_e') $}
				}

		}
		}
	}

	\begin{figure}[t]

	    \begin{floatrow}
	    \vbox{
	    	\ffigbox{%
		       \joinissuespecfig
		    }
		    {%
		      \caption{\scriptsize{$\JoinIssue$ protocol specification}}
		      \label{fig:join-issue-spec}
		    }
		    \ffigbox{%
		       \certupdatespecfig
		    }
		    {%
		      \caption{\scriptsize{Certificate update protocol specification}}
		      \label{fig:cert-update-spec}
		    }
	    }
	    \!
	    \ffigbox{%
	        \auditingspecfig
	    }{%
	      \caption{\scriptsize{Auditing protocol specification}}%
	      \label{fig:audit-spec}
	    }
	    \end{floatrow}
	    
	\end{figure}
\end{conf}

\begin{full}
	\begin{figure}
	\begin{center}
	\begin{mdframed}

			\pseudocode[]{%
				\mathtext{1. $\uid$ downloads $\pk_{I}$ from $L$}: \\
				\mathtext{$\quad L \xrightarrow{} \uid : \pk_I$} \\
				\mathtext{2. $\uid$ and $I$ run $\JoinIssue_\DAA$ proto.\ using $\sk_{\uid}$ and} \\
				\mathtext{$\quad \pk_I$ for epoch $t_e$, and $\uid$ gains $\cred_{t_e}$ and $\w_{t_e}.$} \\
				\mathtext{3. $\uid$ inserts $\h(\login)$ and $\cred_{t_e}$ into $L$}: \\
				\mathtext{$\quad \uid \xrightarrow{} L : \h(\login),\cred_{t_e}$} \\
				\mathtext{4. $I$ inserts $\h(\login)$ and $\w_{t_e}$ into $L$}: \\
				\mathtext{$\quad I \xrightarrow{} L : \h(\login),\w_{t_e}$} 
			}

	\end{mdframed}
	\caption{$\JoinIssue$ protocol specification}\label{fig:join-issue-spec}
	\end{center}
	\end{figure}

\end{full}
After verification,
$I$ may trigger a pseudo-random audit
by sending the previously hidden values $\sid$, $r_I$ 
in the
commitment $c_I$ of the identity verification protocol to $V$ (see
Fig.~\ref{fig:audit-spec}).
If the hash of these values matches the hash of $V$'s signed
commitments, an audit is triggered. 
If we consider a random oracle in place of the hash function, the
probability of an audit is $\Pr[\mathtext{audit}]=2^{-L}$, where $L$
is the number of bits both parties compare.
$L$ is agreed upon in advance, to define this probability. 
Since the nonce $r_I$ has been revealed to $V$ before, 
$I$ cannot modify the second hash ($s'$) to avoid audit. 
As the digital signature scheme is existentially unforgeable, 
$I$ cannot fabricate a valid signature 
to raise the probability of an audit and 
to learn something about $U$.
If the session is chosen for audit, $V$ has to hand over the evidence \{E\}
it collected for identification --- this is a standard procedure for IVS\@.
If $V$ fails to comply, then $I$ can publish a \emph{claim}
and the public can determine whether to audit $V$.

Presuming that $I$ is honest, the probability
that colluding $U$ and $V$ can create
$n$ usable fake identities is thus bound by
${(1- \Pr[\mathtext{audit}])}^n+ \negl$
for some negligible function $\negl$.


The auditing protocol is very simple cryptographically, but has
many possible message interleaving. It is well
known that pen-and-paper proofs for such protocols are not only tedious, but
also prone to errors. We analyse the protocol in the
symbolic model, using the SAPIC process calculus~\cite{KK-sp2014} and
Tamarin protocol verifier~\cite{SMCB-cav13}. We formally verify that:
\begin{enumerate}
    \item Whenever $I$ accepts to run the $\JoinIssue$ protocol with
        a user, $V$ has validated her identity, unless $I$ or $V$ are
        dishonest.
    \item When determining the need for an audit, neither a dishonest
        $I$, nor a dishonest $V$ can predict the value of the
        other party, unless both are dishonest. 
        Therefore, they cannot trigger or avoid the audit.
    \item If the public accepts a claim, then
        $V$ did indeed receive the values $r_I$ and $\sid$
        and send out $\signedcomm$ (unless $V$ is dishonest and tricks
        itself into the obligation of an audit). 
        As these values determine both hashes, the public can now
        decide if an audit was justified.
\end{enumerate}

The verification takes about 10 sec on a 3.1 GHz Intel Core i7 and 16 GB RAM computer.
\footnote{See~\theappendixorfull{app:auditing} for the model code.}

\begin{full}
	\begin{figure}[t!]
	\begin{center}
	\begin{mdframed}
	\pseudocode[]{%
		\mathtext{**\,$I$ and $V$ are principals.} \\[\pseudocodeskipamount]
		\mathtext{1. $I$ sends $\sid$ and $r_I$ to $V$:} \\[\pseudocodenoskipamount]
		\mathtext{$\quad I \xrightarrow{} V : \{sid, r_I\}$} \\[\pseudocodeskipamount]
		\mathtext{2. $I$ and $V$ both calculate two hashes:} \\[\pseudocodenoskipamount]
		\mathtext{$\quad s  \defeq h(r_i, sid, 2)$} \\[\pseudocodenoskipamount]
		\mathtext{$\quad s' \defeq h(\signedcomm)$} \\[\pseudocodeskipamount]
		\mathtext{3. Both compare the first $L$ bits of these two hashes:} \\[\pseudocodenoskipamount]
		\mathtext{$\quad s|^0_L = s'|^0_L $} \\[\pseudocodeskipamount]
		\mathtext{4. If the session is chosen for audit, $V$ sends $E$ to $I$:} \\[\pseudocodenoskipamount]
		\mathtext{$\quad V \xrightarrow{} I : E$} \\[\pseudocodeskipamount]
		\mathtext{5. If $V$ fails to comply, $I$ can publish a $\claimvalue$:} \\[\pseudocodenoskipamount]
		\mathtext{$\quad \claimvalue_{I} \defeq \{r_I, sid, \h(r_U, \nbd, 1), \signedcomm\}$} \\[\pseudocodenoskipamount]
	    \mathtext{$\quad I \xrightarrow{} \textit{public} : \claimvalue_{I}$} \\[\pseudocodeskipamount]
		\mathtext{6. The public audits $V$, $V$ proves it acted in good faith:} \\[\pseudocodenoskipamount]
		\mathtext{$\quad V \xrightarrow{} \textit{public} : \{r_U, nbd, E\}$} \\[\pseudocodeskipamount]
		\mathtext{7. The public gives the verdict.}
		}
	\end{mdframed}
	\caption{Auditing protocol specification}\label{fig:audit-spec}
	\end{center}
	\end{figure}
\end{full}

    \paragraph{Revocation}\label{sec:revocation}

In case $U$ runs the identification protocol a second time with a different
$V$, or simply forgets her password and needs to re-identify, her
previous DAA key $\sk_{\uid,\DAA}$ needs to be revoked.
But how can $U$ revoke her DAA key if she forgets her password?
We circumvent this problem by \emph{implicit revocation}: DAA keys are
short-lived by default, but the system can issue new keys without
interacting with the user. Keys that are not issued are thus implicitly
revoked by the end of their 
lifetime, which we call the \emph{epoch} (see
Fig.~\ref{tab:time}).

\begin{full}
	\begin{figure}
	\begin{center}
	\captionsetup{font=footnotesize}
	\begin{mdframed}
	\pseudocode[]{%
		\mathtext{**\,$I$, $V$ and $L$ are principals.} \\[\pseudocodeskipamount]
		\mathtext{1. $I$ announces new epoch $t_e'$ and updates her $\pk_{I,t_e'}$}: \\[\pseudocodenoskipamount]
		\mathtext{$\quad I \xrightarrow{} L : t_e'$} \\[\pseudocodeskipamount]
		\mathtext{2. $I$ asks all $V$s, to report all valid $\login$s to be updated}: \\[\pseudocodenoskipamount]
		\mathtext{$\quad V \xrightarrow{} I : \sig(\sk_{V}, (\Trev,c_I))$} \\[\pseudocodeskipamount]
	    \mathtext{3. $I$ obtains \emph{update message} $\updateMsg$ for all valid $\login$s from $L$}: \\[\pseudocodenoskipamount]
		\mathtext{$\quad L \xrightarrow{} I : {\h(\login, \updateMsg)}$} \\[\pseudocodeskipamount]
		\mathtext{4. $I$ creates new credentials for each $\login$ :} \\[\pseudocodenoskipamount]
		\mathtext{$\quad \cred_{t_e'} \defeq \Issuer_\DAA(\sk_{I},\pk_{I},\updateMsg)$} \\[\pseudocodeskipamount]
		\mathtext{5. $I$ stores new credentials for each $\login$ in $L$ :} \\[\pseudocodenoskipamount]
		\mathtext{$\quad I \xrightarrow{} L : \h(\login, \cred_{t_e'}, t_e') $}
		}
	\end{mdframed}
	\caption{Certificate update protocol specification}\label{fig:cert-update-spec}
	\end{center}
	\end{figure}
\end{full}

At the start of each epoch $t_e$, $I$ defines a new public key
$\pk_{I,t_e'}$ which is chosen so that $I$ can recompute all
credentials $\cred_{t_e'}$ for the new epoch by itself (see
Fig.~\ref{fig:cert-update-spec}).
At this point, only those DAA keys remain valid, for which such
a $\cred$ is computed, all others are implicitly revoked. 
If a user forgets her password, she reports to the verifier, who
confirms (by means of the commitment $c_I$) that her old key 
is invalidated. Starting from the next epoch, she can use
her new key. 
To allow for such mechanism, the DAA scheme has to be
structured in a way that $I$ can update her public key and all users 'credentials without any interaction.

\begin{full}
\begin{definition}[Updateable DAA scheme]\label{def:updateable-daa-scheme} 

A DAA scheme is \emph{updateable} if:
\begin{enumerate}
    \item $\Setup_\DAA$ can be divided into $\Setup_1$ and $\Setup_2$, s.t.:
    \begin{itemize}
        \item $\Setup_1(1^{\lambda})$, outputs
            a \emph{persistent} group public key $\gpk_1$
        \item $\Setup_2(1^{\lambda},\gpk_1)$ outputs 
            an \emph{ephemeral} group public key $\gpk_2$ and a secret key $\sk_S$, where $\pk_{I} = (\gpk_1,\gpk_2)$.
    \end{itemize}

    \item The $\JoinIssue_\DAA$ protocol consists of two steps $\User_\DAA$ and $\Issuer_\DAA$, s.t.:
        \begin{itemize}
            \item $\User_\DAA(\gpk_1,\sk_S)$ outputs an \emph{update} $\updateMsg$,
            \item $\Issuer_\DAA$ takes $\pk_{I}$, $\updateMsg$ and
                $\sk_{I}$ as inputs and outputs valid credentials
                $\cred_S$ for secret key $\sk_S$
                w.r.t.\ the new $\pk_{I}$.
        \end{itemize}

    \item $\Setup_1$ uses only public coins to generate $\gpk_1$, i.e., there are no secrets required
    to generate $\gpk_1$ and giving those coins to the adversary only negligibly increases its advantage
    in the user-controlled anonymity and traceability experiments.

\end{enumerate}
\end{definition}

\end{full}

Brickell and Li's scheme with a minor modification possess these features (see
\theappendixorfull{sec:proof-updatability} for a formal proof).

Updatability is interesting on its own: it allows for regular,
non-interactive key
rollovers in DAA. $I$ can create each user's credential offline, so
the user can fetch this credential (in encrypted form) at later point, even
if $I$ is offline.

    \paragraph{Holding the issuer accountable}\label{sec:practical-extended}

In \trollthrottle, a corrupt issuer and verifier can collude
to introduce arbitrarily many valid credentials into the system. 
This form of Sybil attack is difficult to counter while
retaining the user's privacy:
Without trust in either the verifiers or the issuer, the only way of
determining whether a user is legitimate is to have another entity
(e.g., the websites, or the public)
check this identity --- otherwise, the adversary controls all parties
involved.
Even if done in a pseudo-random manner similar to the auditing
procedure in Section~\ref{sec:practical-verification}, the loss of
individual privacy would be considerable.

In \theappendixorfull{sec:extended} we present the extended
\trollthrottle{} protocol to mitigate this issue to the extent possible.
Here, for every user that joins,
a \emph{genesis block} is added to the ledger.
This block is signed by the verifier, which allows
the public to tell how many credentials
were validated by each verifier. Large-scale fraud could thus be
detected through an unusual number of participants coming from
a single verifier. This information is public and can be computed by
any participant at any time.

During the commenting phase, $U$ downloads a subset of genesis
tuples\footnote{To achieve, e.g., anonymity among 100 users, about
49 KB of data is downloaded once per commenting period.}
and computes a zero-knowledge proof that her genesis tuple is
part of this set.
She includes this proof along with the time point at which she
queried the list in her DAA signature.
In \theappendixorfull{sec:zero-knowledge-instantiation}, we show that for Brickell
and Li's scheme~\cite{DBLP:journals/iacr/BrickellL10}, we can
instantiate a non-interactive proof of knowledge system with proofs
that are logarithmic in the number of genesis tuples in the ledger.
We show that, in addition to the security properties in
\theappendixorfull{sec:security}, no adversary can create comments that
cannot be attributed.


    \begin{conf}
\paragraph{Other considerations}\label{sec:practical-moderation}\label{sec:practical-credentials}\label{sec:practical-state}

News websites need to moderate comments (see step $8$ in
Fig.~\ref{fig:comment-spec}). This decision 
is ultimately a human decision,
but it should be based on a binding agreement between the websites
and applying laws.

Also, many users expect a system where they can log in from any platform.
We, therefore, allow users to
restore
their identities, 
by making the users' secret keys e.g., $\sk_{U}$ derivable 
from their login and password chosen by themselves in the identification process.
Hence, we assume there exists an efficient 
key-derivation function $\kdf$\fcite{kdf}
that maps to the space of secret keys.
Such a function exists for the scheme we use, 
where the secret key is just an 
element in $\ZZ_q^*$.
The secret key
$\sk_{U}$ can be recomputed with the $\kdf$ and
the DAA credentials $\cred$ can be recovered from the ledger by
querying with the hash of the login. Note that the login should not identify the
user on other platforms, otherwise an attacker can use it to check if the user
is participating in \trollthrottle{}.
The last value of $\seq$ can be recovered by using
bisection to discover the largest $\seq$ s.t.\ $\NymGen(\sk_{U}, (t,
\seq))$ is on the ledger.

\end{conf}

\begin{full}
\paragraph{Moderation}\label{sec:practical-moderation}

News websites need to moderate comments (see step $8$ in
Fig.~\ref{fig:comment-spec}).
This decision 
is ultimately a human decision,
but it should be based on a binding agreement between the
participating websites, and in
compliance with the laws that apply to them.
When $U$ claims censorship, the public has to judge based on the
agreement and the content of $m$.

\paragraph{Storing credentials}\label{sec:practical-credentials}\label{sec:practical-state}

By default, a cookie or browser plugin may store the credentials,
however, many users expect a system to work similarly to a third-party
website, where they can log in from a computer of their choice.
We, therefore, allow users to
restore
their identities, 
by making the users' secret keys e.g., $\sk_{U}$ derivable 
from their login and password chosen by themselves in the identification process.
Hence, we assume there exists an efficiently computable 
key-derivation function $\kdf$~\cite{kdf}
that maps to the space of secret keys.
Such a function exists for the scheme we use, 
where the secret key is just an 
element in $\ZZ_q^*$.

The secret key
$\sk_{U}$ can be recomputed by applying the key-derivation function
to login and password, while
the DAA credentials $\cred$ can be recovered from the ledger by
querying with the hash of her login. Note that the login should not identify the
user on other platforms, otherwise an attacker can use it to check if the user
is participating in \trollthrottle{}.
The last value of $\seq$ can be recovered by using
bisection to discover the largest $\seq$ s.t.\ $\NymGen(\sk_{U}, (t,
\seq))$ is on the ledger.
\end{full}

\section{Evaluation}\label{sec:eval}

We evaluate \trollthrottle{} in terms of how easy
it is to deploy, and how much performance overhead it incurs.
To demonstrate the former, we retrofit
it to an existing website, without any modification to the server-side
code --- in fact, without the website being aware of this.
To demonstrate that it incurs only modest costs,
we simulate realistic
traffic patterns using a recorded message stream
and measure computational overhead and latency.

\paragraph{Deployability}\label{sec:deployability}


\begin{conf}
    \newcommand{\evaltab}{
        \resizebox{.47\textwidth}{!}{
        \begin{tabular}{ll r r r}
            \toprule
            \multicolumn{2}{l}{measure} & mean & median & variance  \\
            \midrule
            issuing (on $U$)$^1$ & $\delta^\Issue_U$ & 0.038 & 0.036 & 0.069 \\
            issuing (on $W$)$^1$ & $\delta^\Issue_I$ & 0.010 & 0.009 & 0.0006 \\
            commenting$^2$ & $\delta^\Comment$ & 0.036 & 0.032 & 0.0003 \\
            verification & $\delta^\Verify$ & 0.021 & 0.018 & 0.0002 \\
            \midrule
            latency$^3$ & $t_f - t$ & 0.022 & 0.019 & 0.0002 \\
            \midrule
            commenting \\(on $U$)$^4$ & $\delta^\Comment_U$ & 0.058 & 0.057 & 0.01 \\
            \midrule
            \multicolumn{5}{L{.54\textwidth}}{$(1)$ over all new users.$(2)$ computation overhead w/ pre-computed signatures.$(3)$ shows server-side total processing time.$(4)$ on 1000 samples, single-threaded.} \\
            \bottomrule
        \end{tabular}
        }

    }

    \begin{figure}[t!]

        \begin{floatrow}
        \ffigbox{%
        \setlength{\fboxrule}{0pt}
            \setlength{\fboxsep}{-5.2pt}
          \fbox{\includegraphics[height=14.45em]{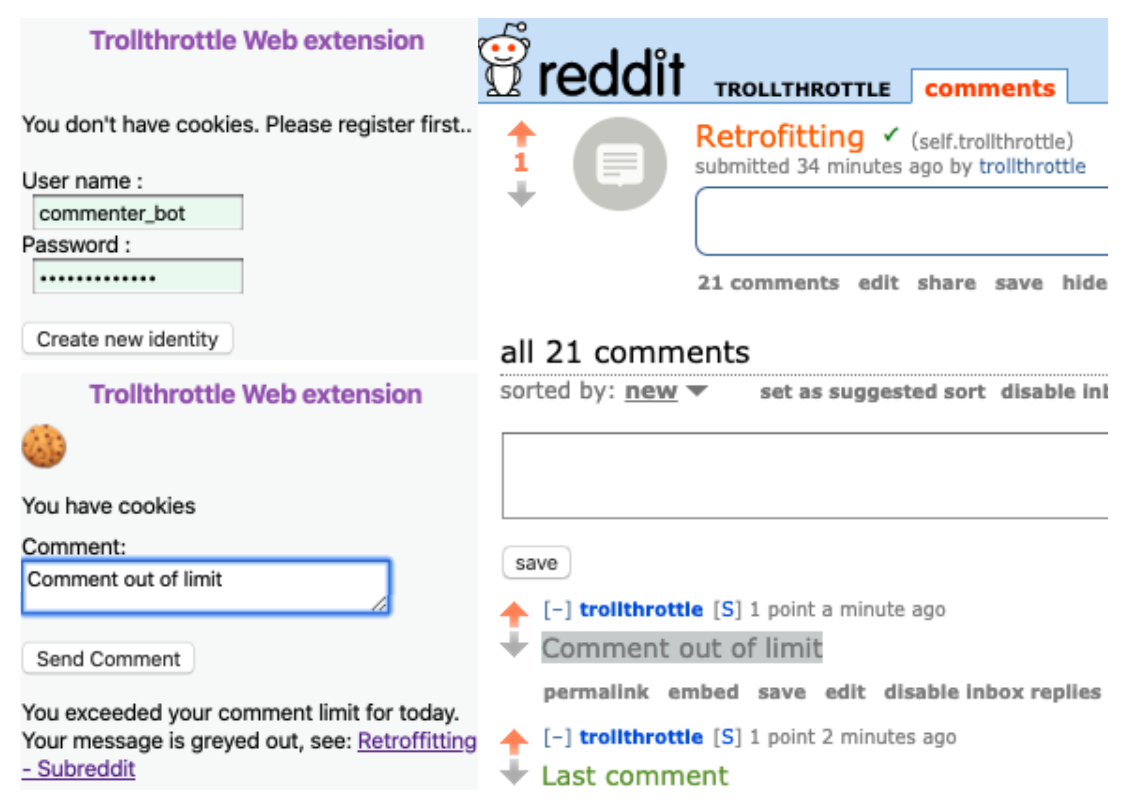}}
        }{%
          \caption{\scriptsize{Screenshot of Reddit deployment, for identity creation and commenting scenarios, see \href{https://old.reddit.com/r/trollthrottle/comments/ervowu/retrofitting/?sort=new}{Retrofitting subreddit}}}%
          \label{fig:screenshot-of-reddit-deployment}
        }
        \!
        \capbtabbox{%
            \evaltab
        }{%
          \caption{\scriptsize{Evaluation for Reddit use case (3 cores).}}
          \label{tab:report}
        }
        \end{floatrow}

    \end{figure}
\end{conf}

\begin{full}
    \begin{figure}[t]
    \centering
    \fbox{\includegraphics[width=.97\columnwidth]{a1.png}}
    \caption{Screenshot of Reddit deployment, for identity creation and commenting scenarios, see \href{https://old.reddit.com/r/trollthrottle/comments/ervowu/retrofitting/?sort=new}{Retrofitting subreddit}   \label{fig:screenshot-of-reddit-deployment}}
    \end{figure}
\end{full}

We demonstrate that the protocol can be deployed easily by
retrofitting it, without any server-side changes, to Reddit.com,
the most visited news website in the world~\cite{alexa-news} and
an alleged target for large-scale astroturfing and propaganda
efforts~\cite{wp-reddit-senate}.

On Reddit, we created a forum as a testing
ground.
We implemented signature creation and verification in
a JavaScript library and used a simple browser extension~\cite{chromePlugin} to load
this library when entering the forum. 
In an actual deployment, this library would be loaded via
JavaScript inclusions. We point out, however, the known problem that
there is no guarantee the website $W$ is transmitting the correct
script. This is a well-known issue for all web-based services that
claim end-to-end security 
\begin{full}
(e.g., ProtonMail~\cite{cryptoeprint:2018:1121}),
\end{full}
and sometimes mitigated
by offering optional plugins (e.g.,
\url{mega.co.nz}). 
\begin{full}
Note, though, that there was an incident
where the chrome extension itself was compromised via the Chrome web store, which highlights the
need to trust both the software developer and the
distribution mechanism~\cite{cointelegraph-mega}. 
\end{full}
\begin{conf}
We also present the cryptographic implementation details 
of the simulation~\cite{simulationTool}
in the full version.
\end{conf}
\begin{full}
    We forked the
    Multiprecision Integer and Rational Arithmetic Cryptographic Library
    (MIRACL)\cite{MIRACLjs} and modified it for portability. 
    Then, using emscripten, we compiled this library to javascript. 
    As this library is widely used to implement elliptic curve
    cryptography and large integers, this modified version could be of independent
    interest for front-end applications\cite{newMIRACLjs}.
    We use this library to implement Brickell and Li's pairing based DAA 
    scheme using Barreto-Naehrig (BN)~\cite{BN-pairing} 
    elliptic curve at the 128-bit security level.
    Furthermore, we use
    libsodium~\cite{libsodium}
    for 
    digital signatures used in the identity verification protocol (based on Ed25519 with 256 bit keys),
    hashing (BLAKE2B),
    authenticated public-key encryption 
    (based on XSalsa20 stream cipher and X25519 key exchange with 256 bit keys)
    and
    randomness generation.
    For key derivation, we used PBKDF2 in Crypto-js~\cite{crypto-js} with
    100\,100 iterations.\footnote{Default for server-side
    storage of passwords in LastPass~\cite{lastpass-iterations}.}
    The simulation is available in ~\cite{simulationTool}.
\end{full}

Any comment posted in this subreddit is transmitted according to the
 protocol (see Fig.~\ref{fig:comment-spec}).
As the server side is not validating the comments in this instance,
this task is performed by the JS library as well. 
It communicates with a simple HTTP server implementing the public
ledger.
Comments that do not pass are greyed out by using
a subreddit-specific stylesheet (see Fig.~\ref{fig:screenshot-of-reddit-deployment}).

\paragraph{Performance}

To evaluate \trollthrottle's performance, we compiled three realistic datasets~\cite{reddit-dataset}
to represent plausible scenarios.
Our focus is on
traditional news outlets that want to establish a close relation with
their readership.
We thus examine two scenarios in this domain,
and a third, representing
an extreme case: the entirety of Reddit, the
largest website categorised as `News' by Alexa~\cite{alexa-news}.
\begin{conf}
We use Reddit to retrieve realistic commenting patterns for the
following scenarios (more details in 
the full version~\cite{trollthrottle-full}): 
    \begin{penum}
    \item{Scenario I\@: nationwide news source}
            News websites operating on national scale
            have sharp traffic patterns, 
            e.g., the users in the same time zone.
            We take Germany as an example and
            simulate the traffic patterns of the German-speaking 
            \url{r/de} subreddit with
            a volume of 168k comments.

    \item{Scenario II\@: international newspaper}
            We collect all comments on submitted links to nytimes.com
            over two months to reach 268k comments 
            and aggregate them to a 24h period.

    \item{Scenario III\@: Number of comments per day on Reddit}
            From a 10-year dataset that includes all comments ever posted on Reddit,
            we pick the recent busiest day, which is 
            27 June 2019 with 4\,913\,934 comments.
    \end{penum}
\end{conf}

\begin{full}
\paragraph*{Scenario I\@: nationwide news source}
        Most news pages operate primarily on a national scale.
        Here, traffic patterns can have sharp peaks, 
        e.g., is the vast majority of German speakers 
        situated in the same time zone.
        Using Germany as an example, the most popular news
        page~\cite{alexaDE} reports a `clear five-figure number' of
        comments per day~\cite{heise-sorgen}, with other pages
        reporting between 12k and 80k~\cite{zeit-blog-2018}.
        %
                %
        %
        %
        As we have no access to comments \emph{before} moderation, we
        simulate the traffic patterns.
        We do so by combining the traffic of the German-speaking
        \url{r/de} subreddit of sufficiently many days until we reach
        a volume of 168k comments.
        
\paragraph*{Scenario II\@: international newspaper}
        In mid-2017, just three months before they
        decided to stop accepting comments out of lack of
        resources, the
        New York
        Times 
        reported 12K comments per day~\cite{nyt-comments}.
        %
        If we assume an exponential growth at a rate of about 140\%
        per year\footnote{Extrapolated from data
        points in 2016~\cite{zeit-blog-2016} and
        2018~\cite{zeit-blog-2018}.}, we estimate 398k incoming
        comments in January 2020.
        Again, we use Reddit data to retrieve realistic traffic
        patterns. In this case, we collected all comments on submitted
        links to \url{nytimes.com} from a 24h period.
        We aggregated the comments over a two-month period, from the beginning of May
        to the end of June 2019,
        to reach 268k comments.

\paragraph*{Scenario III\@: Number of comments per day on Reddit}

From a 10-year dataset that includes all comments ever posted on Reddit,
we pick the recent busiest day, which is 
27 June 2019 with 4\,913\,934 comments.
As for the other datasets, we did not filter out the comments that are marked as
`[deleted]', i.e., were
removed by moderators or their respective authors. They do not
contain information about their authors, but still show the request patterns that the website needs to handle.
Hence, we regarded them as one regular user.
\end{full}

\subsubsection{Performance measures}

We focus on the performance requirements from the perspective of the
news outlet that has to serve users within a given latency and
compute the additional cost due to the new computations.
To get a precise measure of the
overhead incurred, our experiment
only 
simulates the cryptographic operations and does not display the comments
or use network communication.
The computation is performed separately for the server and the client.
We assume the issuer is trusted and thus disregard the 
extension in \theappendixorfull{sec:extended}.

As for the other datasets, we collected the comments annotated with their 
author's nickname and the time point they were posted. 
The dataset is thus a sequence of tuples $(t,u,m)$ ordered by the
time point $t$ at which $u$ posted comment $m$.
We assume each
nickname corresponds to a different actual person, thus 
over-approximating the effort for key generation.
For each $(t,u,m)$, we
\begin{penum}
    \item simulate the issuing protocol, if $u$ comes out in the
        entire (10 years) dataset for the first time,\label{it:issuing}
    \item simulate the commenting protocol to produce a signature for the comment, and
        finally\label{it:commenting}
    \item simulate the server side signature verification.\label{it:ver}
\end{penum}

Step~\ref{it:issuing} and~\ref{it:commenting} 
can be done in a pre-processing step, as they are computed by the user
and issuer.
We measure the time for commenting ($\delta^\Comment$) and issuing
($\delta^\Issue_I$ and $\delta^\Issue_U$, for the issuer and the user,
respectively).
For step~\ref{it:ver}, we simulate the load of the server side 
on a Ruby-on-Rails application with Nginx load balancer.

\begin{full}
Firstly, we estimated the number of
cores needed to satisfy a latency requirement of $l=0.1$
seconds using a simple first-come-first-serve scheduler. 
To determine this value, we used the following algorithm.
We sampled the server
computation time $\delta_s$ by measuring the verification time for
a random comment. We start with one core. We compute
a first-come-first-serve scheduling 
until we reach a point where
a comment posted at $t$ is scheduled at $t'$ such that $t'+t_s < t+l$.
If we never reach such a point, we are done and output the number of
cores. Otherwise, we add a new core.

Secondly, we simulated the load on the server. 
\end{full}
For each point
$(t,u,m)$ in the database, 
we simulate the arrival 
of the encrypted signature $(\comm,\nym)$
resulting from pre-processing $m$,
at time
$t+\delta^\Comment+\delta^\Issue_I+\delta^\Issue_U$.
We run $\Verify$ on the signature and measure the finishing time
$t_f$, as well as the actual processing time $\delta^\Verify$. We
report the results in seconds for the largest dataset in Table~\ref{tab:report}.

\begin{full}
    \begin{table}
        \centering
            \caption{Evaluation for Reddit use case (3 cores).}\label{tab:report}
            \begin{tabular}{ll r r r}
                \toprule
                \multicolumn{2}{l}{measure} & mean & median & variance  \\
                \midrule
                issuing (on $U$)$^1$ & $\delta^\Issue_U$ & 0.038 & 0.036 & 0.069 \\
                issuing (on $W$)$^1$ \; & $\delta^\Issue_I$ & 0.010 & 0.009 & 0.0006 \\
                commenting$^2$ & $\delta^\Comment$ \; & \; 0.036 & 0.032 & 0.0003 \\
                verification & $\delta^\Verify$ & 0.021 & 0.018 & 0.0002 \\
                \midrule
                latency$^3$ & $t_f - t$ & 0.022 & 0.019 & 0.0002 \\
                \midrule
                commenting \\(on $U$)$^4$ & $\delta^\Comment_U$ & 0.058 & 0.057 & 0.01 \\
                \midrule
                \multicolumn{5}{L{.6\textwidth}}{$(1)$ over all new users.$(2)$ computation overhead w/ pre-computed signatures.$(3)$ shows server-side total processing time.$(4)$ on 1000 samples, single-threaded.} \\
                \bottomrule
            \end{tabular}
    \end{table}
\end{full}

\begin{table*}[t!]
    \centering
    \caption{\footnotesize{Scenarios for performance evaluation, including
    the number of comments,
    source of the data stream,
    number of Intel E5 2.6 GHz cores,
    operating cost per day,
    maximum latency,
    percentage of queries answered within 0.1 secs,
    number of genesis tuples computed (i.e., number of distinct
    nicknames),
    and total ledger size.
    }}\label{tab:perform}
    \begin{tabular*}{\textwidth}{lcccccccrrr}
    \hline
    scenario & \#comments  & \#cores & \vtop{\hbox{\strut daily}\hbox{\strut cost}}
    & \vtop{\hbox{\strut \quad max.}\hbox{\strut latency}} 
    & \qquad
    & \vtop{\hbox{\strut latency}\hbox{\strut $<$ 0.1s}}
    & \qquad
    & \vtop{\hbox{\strut \#genesis}\hbox{\strut \quad tuples}}
    & \qquad
    & \vtop{\hbox{\strut \quad ledger}\hbox{\strut size(MB)}} \\
    \hline
    \vtop{\hbox{\strut Nationwide}\hbox{\strut newspaper (r/de)}} & 168k & 1 & \$\,1.20  & 0.166s & & 99,99\% & & 13,975 & & 204 \\
    
    \vtop{\hbox{\strut International}\hbox{\strut news. ({\footnotesize url:nytimes})\: }} & 268k & 1 & \$\,1.20  & 0.391s & & 99.99\% & & 87,223 & & 633 \\
    
    \vtop{\hbox{\strut Reddit (r/all)}} & 4.9M & 3 & \$\,3.60  & 1.011s & & 99.99\% & & 1,217,761 & & 10628 \\
    \hline
    \end{tabular*}
\end{table*}

In Table~\ref{tab:perform}, we report the number of cores needed and
the cost incurred by the computations just described, i.e., the
overhead compared to normal website operations.
The number of cores to meet the latency requirement was estimated as
described above and used in the simulation.
To account for the cost,
we employ the core hours metric, which is the
product of the number of cores and the total running time on the
server. We take Amazon on Demand EC2 pricing~\cite{aws-pricing} as an example
and assume $\$0.05$ per core hour.
We also report on the maximal latency encountered in the simulation
and the percentage of comments that met the target latency of $\le 0.1s$.
Finally, 
we report the number of genesis tuples created in the ledger, i.e.,
the number of nicknames in the dataset, and the total size of the
ledger, representing an over-approximation of the storage requirements of
a single day of operation.

Since comments are hashed before signing,
the communication overhead is approximately 2.4~KB,
independent of the comment size.
%
%
To evaluate the storage requirements on a consensus-based public ledger,  we
chose Tendermint~\cite{herlihy2016enhancing} as an example.
Tendermint employs a modified AVL tree to store key-value pairs.
Values are kept in leaf nodes and keys in non-leaf nodes.  
%
%
The overhead is about 100 bytes per non-leaf
node~\cite{tender-avl-overhead}. 
For the largest dataset, each
participant in Tendermint
would thus require approximately 12 GB of 
space.
Once the current commenting period is over, the signed comments and
hence most of the data can be purged. To allow accountability for
censorship over the last month, the data of the last thirty commenting
periods can be stored on less than 0.5 TB.

In summary, the additional cost on the websites is modest compared to
the moderation effort saved. 

%

\section{Limitations}\label{sec:limitations}

Despite the auditing by the issuer and the limited accountability for
colluding issuer and verifiers in the extended protocol, we have
centralised trusted authorities. One way to remove these is to
introduce protocols that can recognise Sybils. This could relieve the
issuer from the responsibility of auditing the verifiers and
potentially allow for a protocol with accountability features to deter
misbehaviour. 
As this topic is orthogonal to our protocol, we leave it for future
work, but remark that, theoretically, Sybil-detection is possible without user
identification.
A potential approach is to combine
biometric
methods~\cite{sluganovic2016using,Azimpourkivi:2017:SMA:3134600.3134619}
with captchas.
Uzun and Chung proposed such a protocol to show liveness.
Here, the user's response to a captcha 
involves
physical actions (smiling, blinking) that she captures in
a selfie video~\cite{uzun2018rtcaptcha} within a 5s time limit.
Their approach is based on the fact that automated captcha-solving
takes considerable time, and face reenactment (e.g.,~\cite{thies2016face2face})
is difficult to do at scale.
Building on the same assumptions,
a Sybil-detection scheme could be built
by pseudo-randomly defining sets of users that need to show liveness
\emph{at the same time}.
%

%

\trollthrottle aims to provide a similar user experience to
website logins. Hence, all client-side secrets are derived from the login
and password of the user and thus vulnerable to password-guessing
attacks.
This can be mitigated by incorporating a two-factor authentication into
the protocol, or by setting up the key generation to require
a password of sufficient length and entropy, as to enforce the use of
password managers.

Finally, the client-side code is loaded by the website, which could
potentially include a different script albeit this behaviour would
leave traces.
As previously discussed (see Section~\ref{sec:deployability}),
this is a well-known problem for web-based apps, and usually mitigated
by offering optional plugins.

\section{Related work}\label{sec:related}

The detection of astroturfing has been tackled using
\begin{conf}
reputation systems (e.g.,~\cite{ORTEGA20122884}),
crowdsourcing (e.g.,~\cite{wang2012social})
and
text analysis (e.g.,~\cite{7846937}).
\end{conf}
\begin{full}
reputation systems (e.g.,~\cite{ORTEGA20122884}),
crowdsourcing (e.g.,~\cite{wang2012social})
and 
feature-based analysis
(e.g., n-gram detection~\cite{7846937},
sentiment analysis~\cite{7266641},
or by analysing responses~\cite{mihaylov2015finding}).
\end{full}
Fundamentally, the posting profile of a
politically motivated high-effort user
is not very
different from a state-sponsored propagandist\cite{kreil-socialbots},
hence we focus on prevention instead of detection.
The detection and prevention approaches could be combined, but
detection approaches either come at a loss of accountability, or they
need to explain their decisions, although many of them rely on the
fact that the bot is not adapting to the mechanism (e.g., via
adversarial machine learning\fcite{huang2011adversarial}).

Our approach is similar to anonymity protocols in which we specify
a way of exchanging messages without revealing identities.
In contrast to anonymity protocols, \trollthrottle{} provides
anonymity with respect to the ledger, but presumes the communication
channels to provide sufficient anonymity. By itself, \trollthrottle{} is
not resistant against traffic analysis --- here anonymity protocols
come into play.
One might ask whether anonymity protocols already do what
\trollthrottle{} proposes to do.
To the best of our knowledge, Dissent~\cite{corrigan2010dissent} is the only
anonymity protocol that provides explicit accountability guarantees,
but these pertain to the type of communication, not to sending more messages than
allowed.
Furthermore, unlinkability is not achieved within the group, but
towards outsiders.
\begin{full}
In each protocol phase, parties generate new secondary keys, which
they broadcast signed and encrypted to all members of the group.
This requires setting up a group in advance.
This is unsuitable for our setting, where the group comprises all
registered users of a web. \trollthrottle{} preserves unlinkability
even within this group.
\end{full}

Pseudonymity systems like Nym~\cite{holt2006nym}
or Nymble~\cite{johnson2007nymble}
provide anonymous, yet authenticated access to services, but some allow
resource owners to block access at their discretion.
By using a ledger and a common set of rules, \trollthrottle users can
claim and prove censorship, but have to trust the ledger. This is in
contrast to p2p-protocols, where censors may be sidestepped, but
cannot be forced to publish the content themselves. 
%
%
%
Dingledine et al.\ advocate for the transaction of reputation/credit between
pseudonyms~\cite{dingledine2003reputation}. By contrast, the credit
in our scheme is essentially the number of nyms. This simplifies the
system and ensures unlinkability, at the cost of inherent limitations:
the `credit' is the same for every participant ($\thres$ for each
commenting period) and cannot be transferred.

One of the main cryptographic components of \trollthrottle is a 
specific DAA scheme with
additional properties (instant-linkability and updatability). DAA was introduced as a way 
to address privacy issues of the remote attestation protocol proposed for TPMs\fcite{tpm-v1.1b}. 
\begin{full}
There exists a number of schemes, e.g., based on 
the RSA assumption~\cite{DBLP:conf/ccs/BrickellCC04},
on elliptic curve cryptography~\cite{brickell2008new,chen2008new},
on the LRSW~assumption~\cite{brickell2009simplified,bernhard2013anonymous}
and on the q-SDH assumption~\cite{cryptoeprint:2010:008,DBLP:journals/iacr/BrickellL10,cryptoeprint:2009:198}.
\end{full}
We focused on the scheme by Brickell and Li \cite{DBLP:journals/iacr/BrickellL10}, because it
supports these properties, produces short signatures and because a reference
implementation was available.
\begin{conf}
Other DAA schemes(e.g.,
~\cite{DBLP:conf/ccs/BrickellCC04,brickell2008new}) may also provide these properties.
\end{conf}

There are building blocks besides DAA that are compatible with
$\trollthrottle$.
Anonymous Credentials (AC) allow users to prove (a set of) attributes about themselves to third parties, usually
via an interactive protocol (but there are non-interactive schemes).
\begin{full}
An efficient scheme, by Baldimtsi
and Lysyanskaya\cite{DBLP:conf/ccs/BaldimtsiL13}, supports only single-time use of a credential, which
would require to store a fresh credential for each comment that the user would like to post in the future
(the shown attribute would also need to include the date and some unique value).
Multi-show credentials, for instance the one by Camenisch et al. \cite{DBLP:conf/sec/CamenischDDH19}, 
would decrease the number of fresh credentials required, but would still depend on the user's obtain an attribute for every possible day/comment number combinations. What's more concerning is that the attribute value would have to be set by the issuer to a unique value (to prevent double spending/commenting), which would decrease the privacy of this approach and allow the Issuer
to link certain comments. Therefore, it seems more efficient to use a system that supports domain-specific pseudonyms with a secret-key based attribute than lightweight credential systems. It worths noting that DAA can be viewed as 
such a credential system for just a single and secret attribute (the secret key).
\end{full}
\begin{conf}
Single-show schemes (e.g.,~\cite{DBLP:conf/ccs/BaldimtsiL13}), would require 
a fresh credential for each comment the user would like to post in the future.
Multi-show schemes (e.g.,~\cite{DBLP:conf/sec/CamenischDDH19}) mitigate
this issue, but a user would still need a unique \emph{attribute} per day and sequence number
-- this would allow the issuer to link comments.
Therefore, lightweight AC schemes are not suitable -- a fitting AC scheme
needs to support domain-specific pseudonyms with a secret-key based attribute.
Indeed, DAA can be viewed such a credential system with the DAA key as the attribute.
\end{conf}

The most similar credential system to the DAA scheme, that we used, was proposed by
Camenisch et al.~\cite{DBLP:conf/ccs/CamenischHKLM06}.
In this system, an issuer creates and distributes so-called dispensers.
Dispensers are used to create a pre-defined number of one-time credentials
valid for a given date.
This system can be immediately used in \trollthrottle.
As an implementation was not available, we perform a qualitative analysis.
On the one hand, verification is faster in their scheme, they perform 
seven multi-exponentiations in a prime order group and one in an RSA group, 
while Brickell and Li's scheme perform one
multi-exponentiation in each group i.e., $\GG_1, \GG_2, \GG_T$, and
one pairing computation.
%
On the other hand, the signatures, which consists of a unique serial number
(similar to a pseudonym) and a number of proofs of consistency are at
least twice as much larger and their size depend on how the proofs are implemented.
This produces considerable computation and communication overhead in the ledger.
Moreover, the verification of comments is performed by the websites, making
verification efficiency less important than the size of the data included in
the ledger.
Therefore, the DAA scheme represents a preferable tradeoff.

\section{Conclusion}

The prevalence of social bots and other forms of astroturfing in
the web poses a danger to the political discourse.
As many newspapers are closing down their commenting functionality
despite the availability of sophisticated detection methods,
we argue that they should be combined with a more preventive approach.

We presented \trollthrottle, a protocol that raises the cost of astroturfing by
limiting the influence of users that emit a large amount of
communication, even if using different pseudonyms.
TrollThrottle preserves anonymity, provides accountability against
censorship, it is easy to deploy and comes at a
modest cost.
We also discuss its social impact in \theappendixorfull{sec:society}.




By how much do we raise the cost of astroturfing? 
We shall regard the last week before the 2016 US election for
a rough calculation.
The computational propaganda project considered around 3.4M
election-related tweets to be originating from bots who emit more than
50 messages per day~\cite{howard2016bots}.
If we assume a threshold of 20 messages/day and perfect
coordination between the bots, 24\,178 identities need to be stolen to
reach the same target.
A lab study~\cite{benndorf2018willingness} finds that users are
willing to sell their Facebook accounts for \$26 on average, which is
only slightly above the black-market value for stolen verified Facebook
accounts. 
Such operation would thus face a cost of \$634\,501
and a risk of detection.

\begin{full}
A remaining challenge for future work is to provide Sybil
detection without identifying the users. The verification of
user identities inherently relies on the party asserting them, but, theoretically, this step is not necessary to determine
whether, at a given time, two virtual identities are controlled by
independent parties.
\end{full}


\paragraph{Acknowledgements:}
This work has been partially funded by
the German Research Foundation (DFG) via the collaborative
research center “Methods and Tools for Understanding and
Controlling Privacy” (SFB 1223), and via the ERC Synergy
Grant IMPACT (Grant agreement 610150). 
The first author holds the Turkish Ministry of National Education Scholarship. 
The second author is supported by the German Federal 
Ministry of Education and Research (BMBF) through funding for the 
CISPA-Stanford Center for Cybersecurity (FKZ: 16KIS0762). 

\Urlmuskip=0mu plus 1mu\relax
\bibliographystyle{splncs04}
\bibliography{references}






\begin{full}
    \appendix

    \section{Instant linkability}\label{sec:instant-linkability}

In Brickell and Li's scheme~\cite{DBLP:journals/iacr/BrickellL10},
signatures are composed of the actual signature and the corresponding 
pseudonym.
To check whether two valid signatures are linked, i.e., were created
by the same signer with the same $\dom\neq \bot$, one merely compares
the nyms. 
We exploit this property to instantly check whether a new comment is
linked to any of the previous comments and thus avoid `double
spending' of basenames.
To generalise to other DAA schemes, we formalise this requirement as
follows:

\begin{definition}[Instantly linkable DAA scheme]\label{def:instantly-linkable-daa-scheme} 
We call a DAA scheme \emph{instantly linkable} if:
\begin{enumerate}

    \item A $\nym$ can be generated without knowledge of the
        data $m$, i.e., there is 
        a deterministic poly-time algorithm $\NymGen$ s.t.\
        $\NymGen(\sk_{S},\dom)=$
        \[
            \NymExtract(\Sign_\DAA(\sk_{S},\cred ,\dom,\cdot)).
        \]

    \item Signatures contain a  $\nym$ that links them, i.e., there is
        a deterministic poly-time algorithm $\NymExtract$
        s.t.\ for all signatures $\sigma_1,\sigma_2$, 
        $\Link(\sigma_1, \sigma_2) =1$ iff $\NymExtract(\sigma_1) = \NymExtract(\sigma_2)$.

    \item The basename used to create a signature can be 
        checked without knowing the data $m$
        and is uniquely defined by the signature,
        i.e., there exists a poly-time algorithm $\VerifyBsn$ s.t.\
        for all PPT adversary $A$, the following probability is
        negligible in $\secpar$
        \begin{align*}
            \Pr\left[ \vphantom{\secparam}\right.& (\dom,\dom',\sigma,\pk_{I},m) \rexec A(\secparam): \\
        &\VerifyBsn(\sigma, \dom') =1 = \VerifyBsn(\sigma, \dom) \;\; \land \\
        & \left. \Verify_\DAA(\pk_{I},m,\dom,\sigma,RL_\emptyset) = 1 \right],
        \end{align*}
        where $RL_\emptyset$ corresponds to an empty revocation list.

\end{enumerate}
\end{definition}

    \section{Security Analysis}\label{sec:security}

Here we show the intuition that the protocol we propose here enforces the
threshold $\thres$, valid comments cannot be forged, users
remain anonymous and its accountability mechanism is sound and
complete. 



We define the security goals of \trollthrottle in terms of five
properties within an experiment. The adversary has access to
oracles for user creation, honest execution of the commenting
procedure and the $\JoinIssue$ protocol, and she can corrupt both
users and verifiers.
\begin{conf}
Due to lack of space, we only present the intuitive definitions, but
the formal model and full proofs are available in
\end{conf}
\begin{full}
The formal model and full proofs are available in
\end{full}
~\theappendixorfull{sec:proofs}.

\begin{enumerate}
  \item \emph{Correctness,} intuitively:
 honest users should always be able to create and publish a comment (acceptable by the
  policy of the website) and the comment should appear on the website. Moreover, if a comment
  is not published, the user should be able to generate a claim that can be publicly verified.
    \item \emph{Protection against trolling,} intuitively:
        the number of valid comments that the adversary $\A$ can produce
        \emph{per basename}
        is at most the number of users that she corrupted plus the
        number of users maliciously verified by a corrupted verifier.
        Let $n$ be the number of user identities under adversarial control (either
        by bribing the user or by bribing the verifiers)
        this directly results in the
        bound $\thres \cdot n$  for the number of comments the
        adversary can emit per epoch.
    \item \emph{Non-frameability,} $\A$ cannot create comments that can be linked to a nym
        of an honest user.
    \item \emph{Anonymity,} intuitively:
        When challenged with distinguishing a comment produced by
        a user of her choice from a freshly created user, the
        adversary can do no better than a guess. 
    \item \emph{Accountability,} intuitively:
        Whatever the adversary does, for any honestly generated comment
        one can produce a verifiable claim that this comment ought to
        be published. Furthermore, it is not possible to produce such
        a claim in the name of an honest user unless the comment has been produced by
        her.
\end{enumerate}

\begin{theorem}[Protection against trolling]
    The \trollthrottle protocol satisfies protection 
    against trolling if the DAA scheme is
    user-controlled traceable and instantly linkable, $h$ is collision resistant
    and we have proofs of knowledge for the relation
    $\R_\Join$.
\end{theorem}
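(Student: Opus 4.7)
The plan is to argue by reduction: if a PPT adversary $\A$ wins $\ExpTroll$ with non-negligible probability, then $\A$ can be used to break either the user-controlled-traceability of the DAA scheme, the collision-resistance of $h$, or the knowledge-soundness of the proof system for $\R_\Join$. The central structural observation is that by \emph{instant-linkability} (Def.~\ref{def:instantly-linkable-daa-scheme}), there is a deterministic algorithm $\NymGen$ such that for any fixed basename $\dom$, each secret key $\sk_U$ induces exactly one nym $\NymGen(\sk_U,\dom)$. Hence for every basename $\dom$, the number of \emph{pairwise-unlinked} valid DAA signatures the adversary produces is upper bounded by the number of distinct secret keys behind them; all further signatures collide in their $\nym$ and are detected by $\Verify$ (step involving $\NymExtract(\sigma)=\nym$) and by the ledger's duplicate-nym rule, so they do not contribute additional valid comments.

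\textbf{Reduction outline.} I would set up a reduction $\mathcal{B}$ that simulates $\ExpTroll$ for $\A$, playing the role of honest users and the issuer as in $\JoinIssue$, while using the knowledge extractor $\Extr$ for $\R_\Join$ on every join session that $\A$ performs (either as a corrupted user, or through a maliciously-verified identity via a corrupt verifier). For each such session $\mathcal{B}$ thus obtains a secret key $\sk^*_i$ that is consistent with the commitment $\com$ that $I$ signed. Let $n$ denote the number of users $\A$ controls either through $\CorruptU$ or through a corrupt-verifier induced join; $\mathcal{B}$ records the set $K$ of at most $n$ extracted keys. At the end of the experiment, $\A$ outputs a multiset of valid comments $\{\comm_j = (\sigma_j,\nym_j,\dom,h_j)\}$ sharing a common basename $\dom$. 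By $\Verify$, each $\sigma_j$ is a valid DAA signature on $h_j$ under $\dom$; by the binding property of $\NymExtract$ and by $\VerifyBsn$, the $\nym_j$ and $\dom$ are uniquely determined by $\sigma_j$.

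\textbf{Finishing the argument.} Counting distinct $\nym_j$ values over $j$ yields the number of $(\sk,\dom)$-classes; call it $n^\ast$. If $n^\ast>n$, then at least one $\nym_j$ corresponds to a secret key $\sk^\ast\notin K$. Two cases arise. Either $\sk^\ast$ was never registered: then $(\sigma_j,\dom,h_j)$ is a signature whose underlying $\nym$ does not trace to any key in $K$, and $\mathcal{B}$ outputs it as a forgery against user-controlled-traceability (since $\dom\neq\bot$ by construction of \trollthrottle basenames). Or $\sk^\ast$ belongs to an honest user $U$ that $\mathcal{B}$ simulated; then $\A$ has produced a signature under $\sk^\ast$ without ever querying $\CComment$ with basename $\dom$ from $U$ (otherwise $\nym_j$ would have been recorded and $\A$'s comment would be linkable to $U$'s honest output, not counting against her threshold), giving again a forgery under traceability. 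A distinct subcase is when two outputs share the same $\nym_j$ but encode different messages $m\ne m'$ with $h(m)=h(m')$: this directly breaks collision resistance of $h$. Summing the three negligible probabilities bounds $\A$'s advantage.

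\textbf{Main obstacle.} The delicate step is the extraction bookkeeping: $\Extr$ may require rewinding the interactive $\JoinIssue$ execution, and one must show that polynomially many extractions can be carried out while $\mathcal{B}$ still presents a faithful view to $\A$ (in particular, preserving the $\DAA$-challenger's state in the traceability reduction). I would handle this in the standard way via a hybrid over join sessions, combined with a straight-line variant of the extractor if the instantiation of the NIZK in $\crs_\Join$ permits (as is the case for Fischlin-style or CRS-based knowledge-sound proofs). The remaining care is to ensure that signatures produced by the $\CComment$ oracle for honest users are properly counted against $U$'s own threshold and thus cannot be double-used by $\A$ without colliding in the nym, which follows directly from the instant-linkability definition.
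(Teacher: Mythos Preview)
Your approach is essentially the same as the paper's: reduce to user-controlled-traceability by using the proof-of-knowledge extractor on every adversarial $\JoinIssue$ session, feed the extracted keys to the DAA challenger, and argue by instant-linkability that more than $|\CU|+|\VM|$ distinct nyms on a single basename forces a signature under a non-extracted key, i.e., a traceability forgery. The paper makes one mechanical point explicit that you leave implicit: the extracted keys are handed to the traceability challenger via its Case~2 $\Join_\DAA$ oracle, which places them on the revocation list $RL$; the forgery is then the signature that still verifies against this $RL$. You should phrase your ``$\sk^\ast\notin K$'' conclusion in those terms to match the winning condition of Def.~\ref{def:user-controlled-traceability}.

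There is one genuine imprecision in your case analysis. Your collision-resistance subcase (``two outputs share the same $\nym_j$ but encode different messages $m\ne m'$ with $h(m)=h(m')$'') is not the relevant one: two adversary outputs with the same nym contribute only once to $t$, so this case never helps $\A$ win. The place where collision-resistance is actually needed is in your ``$\sk^\ast$ belongs to an honest user'' branch. You write that if $\A$ had queried $\CComment(\uid,\dom,\cdot)$ the resulting nym would ``not count against her threshold'', but that is false: $\OUT$ is filtered by exact match on $(\nym,\dom,m,\comm)$, so if $\A$ replays the honest signature $\comm=(\sigma,\nym,\dom,h(m))$ together with a different message $m^*$ satisfying $h(m^*)=h(m)$, the tuple $(\nym,\dom,m^*,\comm)$ survives the filtering, verifies, and contributes an extra nym. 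This is precisely the paper's ``Case~2'', and it is handled by outputting $(m,m^*)$ as a collision. Once you move the collision argument there, your sketch is complete; your worry about rewinding is not an issue in this setting since the proof system for $\R_\Join$ is CRS-based and admits straight-line extraction.
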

\begin{proof}[Sketch]
An adversary can fake comments in three ways:
\begin{itemize}
\item by creating a fresh digital identity, 
\item using an existing signature under a different message, or
\item forging valid pseudonyms with acceptable basenames for an existing signature (this would allow her
to publish the same comment multiple times).
\end{itemize}
An adversary that uses the first strategy can be used to break the user-controlled traceability of
the DAA scheme. The second attack would break collision resistance.
Finally, due to instant linkability of the DAA scheme, we conclude that
the adversary cannot find a second nym that is valid under the same basename.
\end{proof}


\begin{theorem}[Non-frameability]
    The \trollthrottle protocol 
    satisfies non-frameability
    if the underlying DAA scheme is user-controlled traceable and instantly linkable, 
    the function $h$ is collision-resistant 
    and the proof system
    for relation $\R_\Join$ is a proof of knowledge.
\end{theorem}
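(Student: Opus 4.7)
The plan is to reduce non-frameability to the user-controlled traceability of the underlying DAA scheme, using instant linkability as the bridge between nyms and DAA signatures. Concretely, I would build a simulator $\B$ that plays the traceability game against a DAA challenger while simulating the commenting experiment for the non-frameability adversary $\A$. $\B$ adopts the DAA challenger's public key as $\pk_\iss$, invokes the challenger's join/sign/corrupt oracles to implement $\CUser$, $\CComment$ and $\CorruptU$, and in particular never learns the DAA secret keys of honest users.

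For every $\JoinIssue$ session initiated by $\A$ (via a corrupted verifier or a corrupted user), $\B$ uses the knowledge extractor guaranteed by the proof of knowledge for $\R_\Join$ to extract the witness $\sk_{S,\DAA}$ associated with the commitment $\com$ that $\A$ submits. By soundness of extraction, $\B$ therefore knows a secret key for every credential that the issuer hands out to $\A$. Consequently, if $\A$ later produces a signature whose nym coincides with the nym of an honest user, that signature cannot be produced by any key extracted from $\A$, up to negligible probability.

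Suppose $\A$ wins by outputting $\comm^*=(\sigma^*,\nym^*,\dom^*,h^*)$ that passes $\Verify$ and such that $\nym^*=\NymGen(\sk_{U^*},\dom^*)$ for some honest $U^*\in\HU$. By the second clause of instant linkability (Def.\,\ref{def:instantly-linkable-daa-scheme}), $\Link(\sigma^*,\sigma_{U^*})=1$ whenever $\sigma_{U^*}$ is any honestly generated signature of $U^*$ with basename $\dom^*$; so $\sigma^*$ is traced to $\sk_{U^*}$. Collision-resistance of $h$ handles the message layer: either $h^*$ collides with a previously signed hash (which $\B$ converts into a collision of $h$), or $h^*$ is fresh, in which case $\sigma^*$ is a genuinely new DAA signature tying to $\sk_{U^*}$. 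Either way $\B$ outputs $(\sigma^*,h^*,\dom^*)$ and wins the user-controlled traceability experiment, because $\sk_{U^*}$ was never handed to $\A$ and is demonstrably distinct from every extracted adversarial key (two signers with distinct keys have distinct nyms under the same basename, again by instant linkability).

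The main obstacle, and the step to handle most carefully, is arguing that extraction succeeds \emph{jointly} across all adversarial Join sessions while $\B$ still perfectly simulates the experiment. A standard hybrid over adversarial Join queries, invoking the extractor in one session at a time and relying on the straight-line or rewinding properties of the POK, should work; one then has to check that the extractor's state does not disturb the DAA challenger's view, which is immediate because extraction operates only on $\A$-produced transcripts. A secondary technical point is ensuring that the simulation of $\CComment$ via the DAA sign oracle respects the nym structure, which follows directly from the first clause of Def.\,\ref{def:instantly-linkable-daa-scheme}, so that the honest user's nyms in the simulated view are distributed exactly as in the real experiment.
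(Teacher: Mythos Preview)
Your proposal is correct and follows essentially the same route as the paper: both reduce to user-controlled traceability by simulating the experiment through the DAA challenger's join/sign/corrupt oracles, use the proof-of-knowledge extractor on adversarial $\JoinIssue$ sessions to obtain all maliciously registered secret keys, handle a reused hash via collision resistance of $h$, and then invoke instant linkability to argue that the forged signature with $\nym^*=\NymGen(\sk_{U^*},\dom^*)$ constitutes a DAA forgery tied to the honest identity $U^*$ (hence valid even against the revocation list of all extracted keys). Your discussion of joint extraction across sessions and of nym-consistency in the simulated $\CComment$ is, if anything, more explicit than the paper's, which defers these points to its shared ``generic oracle simulation'' paragraph; the only cosmetic omission is that the traceability challenger also expects a signer identity, so you should output $(\uid^*,\sigma^*,h^*,\dom^*)$ rather than just $(\sigma^*,h^*,\dom^*)$.
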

\begin{proof}[Sketch]
We use a similar observation.
An adversary that can post in the name of an honest user generates a
forgery for the DAA scheme and thus can be used to break user-controlled traceability or 
break the collision-resistance of the hash function.
\end{proof}

\begin{theorem}[Anonymity]
    The \trollthrottle protocol 
    satisfies anonymity
    if the underlying DAA scheme provides user-controlled anonymity 
    and the proof system
    for relation $\R_\Join$ is zero-knowledge.
\end{theorem}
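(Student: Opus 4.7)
The plan is to prove anonymity by a standard hybrid reduction: starting from the real anonymity experiment, we will gradually transform the challenger's behaviour into a form that can be simulated by a reduction $\mathcal{B}$ against the user-controlled anonymity game of the underlying DAA scheme. The transformations are justified in turn by the zero-knowledge property of $\Pi_\Join$ and by user-controlled anonymity of the DAA scheme, so the adversary's advantage in each successive hybrid changes only by a negligible amount.

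First I would set up $\mathsf{Hyb}_0$ to be the real $\ExpAnon$ experiment. In $\mathsf{Hyb}_1$, I would switch $\Setup$ to run the ZK simulator's setup, producing a simulated $\crs_\Join$ together with a trapdoor. Every honestly generated proof $\Pi_\Join$ that is produced during $\HUserJoin$/$\CUser$ queries (including the proof embedded in the $\Chall$ query for the freshly created user) is then replaced with a simulated proof computed using the trapdoor, without using the DAA secret key as a witness. By the (multi-theorem) zero-knowledge property of the proof system, $\mathsf{Hyb}_0$ and $\mathsf{Hyb}_1$ are computationally indistinguishable. Note that we do not touch proofs produced by $\A$ in $\UserJoin$ queries she interacts with, so soundness and extractability of other parts of the experiment are unaffected.

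Next I would build a reduction $\mathcal{B}$ against user-controlled anonymity of the DAA scheme that perfectly simulates $\mathsf{Hyb}_1$ to the anonymity adversary $\A$. $\mathcal{B}$ receives $\pk_{I,\DAA}$ from its DAA challenger and sets $\pk_\iss = \pk_{I,\DAA}$. For each honest user created via $\CUser$ or the challenge query, $\mathcal{B}$ calls the DAA corruption/creation oracles to obtain a signer handle and uses the ZK simulator to produce $\Pi_\Join$ (this is where the previous hybrid is essential, because $\mathcal{B}$ does not know $\sk_{S,\DAA}$). Honest comments are answered by forwarding signing queries to the DAA signing oracle, then packaging the resulting signature with its $\NymExtract$ into a $\comm$. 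When $\A$ issues its $\Chall$ query on two users, $\mathcal{B}$ forwards the corresponding pair of signer handles and basename/message to its own DAA anonymity challenger, wraps the returned signature, and hands the resulting $\comm^*$ to $\A$. $\A$'s output bit is returned verbatim, so $\mathcal{B}$'s advantage equals $\A$'s advantage in $\mathsf{Hyb}_1$, which by the previous hybrid is negligibly close to its advantage in the real experiment.

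The main obstacle is the interplay between the $\Pi_\Join$ proofs and the DAA reduction: the reduction fundamentally cannot produce real $\Pi_\Join$ proofs for the challenge user because it lacks $\sk_{S,\DAA}$, so the zero-knowledge switch in $\mathsf{Hyb}_1$ must come first and must apply uniformly to every honest user whose secret key might end up in the challenge. A subtler point is handling the $\Comment$ oracle: all signatures for honest users have to be routed through DAA's signing oracle rather than computed directly, which requires checking that the pseudonym/basename structure of \trollthrottle is compatible with the DAA anonymity game (which allows signing queries on distinct basenames). Since basenames in \trollthrottle are of the form $(t,\seq)$ and the adversary's advantage against anonymity is defined with respect to different basenames anyway, this compatibility holds, but it needs to be stated explicitly when defining the oracle simulations.
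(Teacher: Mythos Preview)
Your approach---replace the $\Pi_\Join$ proofs by simulated ones via zero-knowledge, then reduce directly to DAA user-controlled anonymity by routing honest users' $\Join$ and $\Sign$ calls through the DAA oracles---is exactly the paper's proof. One detail to fix: in the paper's $\ExpAnon$ the adversary specifies a \emph{single} user $\uid^*$ (together with $\dom^*,m^*$), and the $\Chall_b$ oracle itself creates a fresh identity as the alternative; so your reduction $\mathcal{B}$ must call its DAA $\Join$ oracle to obtain a fresh signer $\uid'$ and then submit the pair $(S_0,S_1)=(\uid^*,\uid')$ to its DAA challenger, rather than receiving two users from $\A$. Relatedly, note that in this experiment $\A$ holds $\sk_\iss$ (mirroring the DAA anonymity game, which also hands $\sk_I$ to the adversary), and the oracle set contains $\HUserJoin$ but not $\UserJoin$---so your remark about not touching adversarially produced proofs in $\UserJoin$ is moot, while the ZK simulation is needed precisely for the honest-user side of $\HUserJoin$ and for the fresh user inside $\Chall_b$.
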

\begin{proof}[Sketch]
Since user-controlled anonymity of the DAA scheme ensures that the adversary cannot
tell which of two uncorrupted users signed a message it follows that this implies that an adversary
cannot also tell which user commented (since comments are signed using DAA signatures). Note that because of 
the zero-knowledge property the proof for relation $\R_\Join$ can be simulated.
\end{proof}

\begin{theorem}[Accountability]
    The \trollthrottle protocol
    satisfies accountability if it is correct and 
    $\h$ is collision-resistant.
\end{theorem}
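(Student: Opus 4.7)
The plan is to split accountability into its two natural subgames, matching the experiment macros hinted at in the preliminaries (\ExpAcccompl{} and \ExpAccsound), and discharge each with a short argument using only the ingredients the theorem names: correctness of the scheme and collision resistance of $\h$.

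\textbf{Completeness (``an honest user can always claim'').}
First I would unfold the algorithms. On input $(\dom,m)$, an honest user computes $\comm = (\sigma,\nym,\dom,\h(m))$ via $\Comment$, and then $\Claim$ simply outputs $\evidence = \comm$. Verification by the public runs $\VerifyClaim$, which in turn executes $\Verify(\pk_\iss,\nym,\dom,m,\comm)$. Thus completeness reduces to: whenever $\Comment$ outputs a comment that passes $\Verify$, the very same $\comm$ also passes $\VerifyClaim$. Both algorithms check exactly the same four conditions (validity of $\sigma$ under the DAA public key with message $\h(m)$, $\h(m)=h^*$, $\NymExtract(\sigma)=\nym$, and $\VerifyBsn(\sigma,\dom)=1$), so this direction is immediate from the correctness of the accountable commenting scheme, which in turn follows from correctness of the underlying instantly-linkable DAA scheme.

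\textbf{Soundness (``no false attribution to an honest user'').}
Here I would reduce to collision resistance of $\h$. Suppose an adversary $\A$ wins $\ExpAccsound$ by producing $(\dom, m^*, \comm^*, \evidence^*)$ with $\VerifyClaim(\pk_\iss,\dom,m^*,\comm^*,\evidence^*)=1$, where the pseudonym $\nym^*$ extracted from $\comm^*$ coincides with a pseudonym already used by some honest user $U$ on a different message $m \neq m^*$ (the interesting case; otherwise the claim is genuinely about something $U$ wrote and there is nothing to prove). Parsing $\comm^* = (\sigma^*,\nym^*,\dom,h^*)$, the $\Verify$ check forces $h^* = \h(m^*)$ and also that $\sigma^*$ is a valid DAA signature on $h^*$ with pseudonym $\nym^*$ under basename $\dom$. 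Now I split on whether $\sigma^*$ equals the signature $\sigma$ that $U$ produced for $(\dom,m)$: if $\sigma^*=\sigma$, the ledger entry carries $\h(m)$ and $\Verify$ forces $\h(m^*) = \h(m)$ with $m^*\neq m$, which is a direct collision for $\h$; if $\sigma^*\neq\sigma$, then $\A$ has produced a fresh valid DAA signature carrying the honest user's nym, which is precisely the event ruled out by non-frameability (already proved above from user-controlled-traceability plus instant linkability). Thus either way $\A$ is converted into a breaker of a primitive assumed secure in the theorem hypotheses.

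\textbf{Main obstacle.}
Mechanically, the computation is routine; the delicate point is pinning down the formal definition of $\ExpAccsound$ so that ``attributable to an honest user'' is made precise enough to rule out trivial attacks while not trivially charging the adversary for genuine user behaviour. Concretely, I would specify the winning condition as: $\nym^*$ equals $\NymGen(\sk_U,\dom)$ for some $U \in \HU$, and the public's agreed moderation predicate $\dompred$ accepts $m^*$, yet the honest ledger trace contains no $\Comment$ call by $U$ with exactly the pair $(\dom,m^*)$. The two-case split above then matches cleanly onto the two primitives invoked, and the reduction loses only a polynomial factor in the number of honest users (to guess $U$ and the target signature in the collision-resistance reduction). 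The rest is bookkeeping on the oracle transcripts.
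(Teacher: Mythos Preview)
Your completeness argument is fine and matches the paper's: both observe that $\Claim$ returns $\comm$ itself and that $\VerifyClaim$ runs the same checks as $\Verify$, so correctness of the scheme (hence of the underlying DAA) suffices.

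Your soundness argument, however, is built on the wrong experiment. You read soundness as ``no false attribution to an honest user'' and set up the winning condition via $\nym^* = \NymGen(\sk_U,\dom)$ for some $U\in\HU$. That is essentially non-frameability, which the paper treats as a separate property. The paper's $\ExpAccsound$ is instead about the ledger: the adversary must point to an \emph{existing} entry $\comm^*\in\COMM$ and produce evidence that it encodes some $(\dom^*,m^*)$ that was never actually commented. Because $\comm^*$ sits in $\COMM$, it was generated by an honest $\Comment$ call and therefore has the shape $(\sigma',\nym',\dom',\h(m'))$ for the true message $m'$; meanwhile $\VerifyClaim$ forces $\comm^*=(\sigma^*,\nym^*,\dom^*,\h(m^*))$. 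Equating components gives $\dom'=\dom^*$ and $\h(m')=\h(m^*)$, so $m'\neq m^*$ is an immediate collision. No case split on $\sigma^*$ is needed, and crucially no appeal to non-frameability or user-controlled traceability is needed either.

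This last point is the concrete gap: you invoke non-frameability in the $\sigma^*\neq\sigma$ branch, but the theorem's hypotheses are only correctness and collision resistance of $\h$. Under the paper's intended definition that is enough; under your definition it is not, which is a signal that your formulation of $\ExpAccsound$ has drifted away from what ``accountability for censorship'' is meant to capture here. Re-anchor the winning condition on membership of $\comm^*$ in $\COMM$ rather than on matching an honest user's nym, and the proof collapses to the single collision argument.
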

\begin{proof}[Sketch]
Correctness of \trollthrottle ensures that honest users can
always generate a valid claim even if an adversary tries to prevent this. 
The evidence produced can be compared with the ledger state, as the
message is given. If the message claimed is accepted, but different from a valid
entry in the ledger, it constitutes a collision.
\end{proof}

    \section{Holding the issuer accountable}\label{sec:extended}

In this section we consider an extended version of our protocol that copes with
an untrusted issuer, i.e., how can we protect against trolling even if the Issuer is 
untrusted. The high-level idea is to use so-called genesis tuples for every new user,
which are signed using a standard signature scheme by the verifier checking the 
personal data of the user. Then while commenting, the user proofs that there
exists one genesis tuple that corresponds to her identity. A malicious issuer
can create an unlimited number of DAA credentials but cannot generate genesis 
tuples at will without colluding with a verifier. We cannot protect against such a collusion
but our approach allows the public to track suspicious behaviour, i.e., one verifier is
signing a high volume of genesis tuples.

We begin this section by recalling cryptographic primitives used in this version.
We then show how to extend the existing notion of our accountable commenting scheme
and define a property we call \emph{credibility}, which will formally capture a
dishonest issuer trying to selectively troll the system.

\begin{full}
\subsection{Preliminaries}\label{sec:def-zero-knowledge}
In addition to an instantly linkable DAA scheme,
this scheme assumes
a standard existentially unforgeable digital signature scheme
$(\kgsign,\sig,\ver)$. 
The user has access to signing oracle for a verifier with
key pair $(\sk_V, \pk_V)$.
Moreover,
we recall the definitions given by Groth et al.~\cite{groth2006perfect}. 

\begin{definition}[Zero-Knowledge]
A proof system $\Pi$ is called zero-knowledge, if 
there exists a PPT simulator $\Sim = (S_1,S_2)$ such that 
for all PPT algorithms $\A$
the following probability,
denoted by $\Adv[\A^{\ZK}_{\Pi}]$,
is negligible in the security parameter $\secparam$:
\begin{multline*}
    \left\lvert \Pr\left[\crs \exec \Setup(\secparam): \A^{\CreateProof(\crs,\cdot,\cdot)}(\crs) = 1 \right] - \right.\\
    \left. \Pr\left[(\crs,\tau) \exec S_1(\secparam): \A^{S(\crs,\tau,\cdot,\cdot)}(\crs) = 1 \right] \right\rvert,
\end{multline*}
where $\tau$ is a trapdoor information, 
$S(\crs,\tau,x,w) = S_2(\crs,\tau,x)$ for $(x,w) \in \R$ and both oracles output
$\bot$ if $(x,w) \not\in \R$.
\end{definition}

\begin{definition}[Soundness]
A proof system $\Pi$ is called sound, if for all PPT algorithms $\A$
the following probability,
denoted by $\Adv[\A^{{\sf sound}}_{\Pi}]$,
is negligible in the security parameter $\secparam$:
    \[
 \Pr\left[
     \begin{aligned}
         \crs & \exec \Setup(\secparam),
         \\
         (x,\pi) & \exec \A(\crs)
     \end{aligned}
     :
     \begin{aligned}
         & \VerifyProof(\crs,x,\pi) = \accept
           \\
       \land  &   \quad x \not\in L_\R    
     \end{aligned}
     \right ].
 \]
\end{definition}

\begin{definition}[Knowledge Extraction]
A proof system $\Pi$ is called a proof of knowledge 
for
$\R$, if 
there exists a knowledge extractor $\Extr = (E_1,E_2)$ as described below.
For all algorithms $\A$:
\begin{align*}
\lvert &\Pr[\crs \exec \Setup(\secparam): \A(\crs) = 1] - \\
&\Pr[(\crs,\tau) \exec E_1(\secparam): \A(\crs) =1] \rvert
\leq \Adv[\A^{E_1}_{\Pi}]
\end{align*}
\begin{align*}
\Pr[&(\crs,\tau) \exec E_1(\secparam), (x,\pi) \exec \A(\crs), w \exec E_2(\crs,\tau,x,\pi): \\
 &\VerifyProof(\crs,x,\pi)=\reject \quad \lor \quad (x,w) \in \R ] = 1]
\end{align*}
 $\Adv[\A^{E_1}_{\Pi}]$ is negligible in $\secparam$.
\end{definition}
\end{full}

\subsection{Accountable Commenting Scheme with Credibility (ACSC)}

We define an accountable commenting scheme with the additional property of
credibility as follows:
\begin{definition}[ACSC]
\label{def:accountable-commenting-scheme-credibility} 
    An accountable commenting scheme with credibility consists of
    a tuple of algorithms 
    $(\Setup,\KeyGen,\Comment,\Verify,\Claim, \VerifyClaim,\allowbreak \Attribute)$
    and an interactive protocol
    $(\JoinIssue)$
    with inputs and outputs specified as follows.
\end{definition}

All algorithms are defined in a similar way to the ones for the standard scheme presented 
in Def.~\ref{def:accountable-commenting-scheme}. The only differences are as follows:
\begin{enumerate}
\item the $\Join$ algorithm of the $\JoinIssue$ protocol additionally outputs a genesis tuple $\gb$,
\item the $\Comment$, $\Verify$ and $\VerifyClaim$ algorithms take as an additional list $\GB$ containing genesis tuples.
\end{enumerate}

In addition, we define the PPT algorithm $\Attribute(\gb)$ that allows the public to
attribute a genesis tuple $\gb$ to a verifier $V$ by outputting the
verifier's public key, which uniquely identifies the verifier. 
Even if the issuer is colluding with selected verifiers, the public can
attribute users to verifiers, and thus gather statistics on how many
users were verified by which $V$ 
that could expose cheaters.

\subsection{Instantiation}

We will now define an efficient instantiation of an accountable commenting scheme with credibility.
The scheme closely resembles the scheme presented in Def.~\ref{def:trollthrottle} but includes
the generation and verification of genesis tuples. In particular, let use define the following relation 
that users will use to prove knowledge of genesis tuples:
\begin{align*}
((\nym,\dom,\GB), (&\sk_S)) \in \R_\GB \Longleftrightarrow \\
\exists 
(\cdot,\nym_1,\cdot) \in \GB & \;\;  \land \;\; \nym_1 = \NymGen(\sk_S,1)\\
& \;\; \land \;\;
\nym = \NymGen(\sk_S,\dom).
\end{align*}

\begin{definition}{Extended \trollthrottle Protocol}\label{def:extendedtrollthrottle}
\begin{description}

\item[$\Setup(\secparam)$] -
compute $\crs_\GB \rexec \Setup_\ZK(\secparam)$,
$\crs_\Join \rexec \Setup_\ZK(\secparam)$ and
 output $\crs = (\secparam,\crs_\Join,\crs_\GB)$.

\item[$\KeyGen(\crs)$]  - 
    equal to $\KeyGen$ in Def.~\ref{def:trollthrottle}.

\item[$\Join(\pk_\iss,\sk_{\uid},\uid)$] -
parse $\pk_{\iss}=\pk_{I,\DAA}$ and $\sk_{\uid} = \sk_{U,\DAA}$.
Execute $\com \rexec \Join_\DAA(\pk_{I,\DAA},\sk_{U,\DAA})$ and
compute proof $\Pi_\Join = \CreateProof(\crs_\Join,(\com,\pk_{I,\DAA}),\sk_{U,\DAA})$.
Send $(\com,\Pi_\Join)$ to the issuer and receive $\cred_\uid$.

Compute the pseudonym $\nym_1= \NymGen(\sk_{\uid},1)$,
and set $\gb_{\uid}=(\pk_V,\nym_1,\sig(\sk_V,\nym_1))$, where $\sig(\sk_V,\allowbreak \nym_1)$ was 
created by an identity verifier.

Return $(\cred_{\uid},\sk_{\uid}),\gb_{\uid})$.

\item[$\Issue(\sk_\iss,\ver,\uid)$] - 
    equal to $\Issue$ in Def.~\ref{def:trollthrottle}.

\item[$\Comment(\pk_\iss,\sk_{\uid},\cred_{\uid},\dom,m)$] - set and return 
$\comm= \allowbreak   (\sigma, \nym, \dom, h(m), \Pi)$
where $\Pi = \CreateProof(\crs_\GB,(\nym,\dom,\GB), \sk_{\uid})$), 
$\sigma = \Sign_{\DAA}(\sk_{\uid},\cred_{\uid},\dom, h(m))$ and
        $\nym \rexec \NymGen(\sk_\uid,\dom)=
        \NymExtract(\sigma). $

\item[$\Verify(\pk_\iss,\nym,\dom,\GB,m,\comm)$] - 
Parse $\pk_{\iss}=\pk_{I,\DAA}$
and $\comm = (\sigma, \nym,\dom,h,\Pi)$.
Output $1$ iff
\begin{itemize}
\item $\Verify_{\DAA}(\pk_{I,\DAA},h,\dom,\sigma,RL_\emptyset)=1$,
\item if $h(m)=h$,
\item $\NymExtract(\sigma)=\nym$,
\item $\VerifyBsn(\sigma,\dom)=1$.
\item $\VerifyProof(\crs_\GB,(\nym,\dom,\GB),\Pi))=1$.
\end{itemize}

\item[$\Claim(\pk_\iss,\sk_\uid,\cred,\dom, m ,\comm)$] - 
    $\Claim$ in Def.~\ref{def:trollthrottle}.

\item[$\VerifyClaim(\pk_\iss,\GB,\dom,m,\comm,\evidence)$] -
return 1 iff $\comm$ is valid for $m$, i.e., that 
$\Verify(\pk_\iss,\nym,\dom,\GB,m,\allowbreak \comm)=1$.

\item[$\Attribute(\gb)$] -
parse $\gb=(\pk_V,\nym_1,\sig(\sk_V,\nym_1)$ and return $\pk_V$.

\end{description}
\end{definition}

\subsection{Security Analysis}

Here we will formally define what it means for an accountable commenting scheme to have the
credibility property and proof that the scheme presented above fulfils it.

\begin{definition}\label{def:credibility}
We say that the system is credible if
for every adversary $\A$, every $\secparam$, the probability
$\Pr[\ExpCred(\secparam)=1]$ is negligible $\secparam$.

\begin{figure}[H]
\begin{experiment}{\ExpCred(\secparam)}
\flushleft
$\CU \leftarrow \emptyset$; $V \leftarrow \emptyset$; $\ver \leftarrow \emptyset$;  \\
$(\sk_\iss,\pk_\iss) \rexec \Setup(\secparam)$ \\
 $\Or = \{\CorruptV(\cdot,\ver), \UserJoin(\cdot,\ver,\sk_\iss)\}$\\
$\OUT = \{(\nym_i^*,\dom_i^*,m_i^*,\comm_i^*)\}_{i=1}^{k} \rexec \A^{\Or}(\crs,\sk_\iss)$\\
Return $0$ if $\Verify(\pk_\iss,\nym_i^*,\dom_i^*,m_i^*,\comm_i^*)=0$ for any $i \in \{1,\ldots,k\}$\\
$S = \{ (\nym,\dom) : ((\cdot,\nym,\dom,\cdot)) \in \OUT\}$\\
$t = \max_{(\cdot,\dom) \in S}
 \; \;  \lvert \{ (\nym) :  (\nym,\dom) \in S\}   \rvert $\\
Return $1$ iff $t > \lvert \VM \rvert $ \\
\end{experiment}
\caption{}
\end{figure}

\begin{theorem}[Credibility]\label{thm:credibility}
    The Extended \trollthrottle protocol (see
    Def.~\ref{def:extendedtrollthrottle})
    satisfies credibility (see Def.~\ref{def:credibility})
    if the underlying DAA scheme is instantly linkable
    (Def.~\ref{def:instantly-linkable-daa-scheme})
    and the proof system for relation $\R_\GB$ is sound.
\end{theorem}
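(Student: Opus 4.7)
The plan is to argue by reduction: given an adversary $\A$ that wins $\ExpCred$ with non-negligible probability, I derive either a violation of soundness of the proof system for $\R_\GB$ or a structural contradiction with instant linkability. First, observe that the experiment returns $1$ only if every comment in the adversary's output $\OUT$ passes $\Verify$. By the definition of $\Verify$ in Def.~\ref{def:extendedtrollthrottle}, validity of $\comm_i^* = (\sigma_i, \nym_i^*, \dom_i^*, h_i, \Pi_i)$ requires in particular that $\VerifyProof(\crs_\GB, (\nym_i^*, \dom_i^*, \GB), \Pi_i) = 1$. Invoking soundness of the proof system for $\R_\GB$ on each $\Pi_i$, with all but negligible probability the statement $(\nym_i^*, \dom_i^*, \GB)$ lies in $L_{\R_\GB}$, so there exists a secret key $\sk_i$ with some $(\cdot, \NymGen(\sk_i, 1), \cdot) \in \GB$ and $\NymGen(\sk_i, \dom_i^*) = \nym_i^*$.

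Next, I would focus on the basename $\dom^*$ that realises the maximum $t > |\VM|$ in the experiment. The $t$ distinct pseudonyms associated with $\dom^*$ yield, via the witnesses given by soundness, secret keys $\sk_1, \ldots, \sk_t$. Since $\NymGen(\cdot, \dom^*)$ is a (deterministic) function, distinct output nyms enforce pairwise distinct $\sk_i$; by the injectivity of $\NymGen$ in its first argument at basename $1$ (which holds in the Brickell--Li construction where $\NymGen(\sk, \dom) = H_2(\dom)^{\sk}$ on a prime-order group), the values $\NymGen(\sk_i, 1)$ are also pairwise distinct. Hence $\GB$ contains at least $t$ distinct genesis pseudonyms; tracing how $\GB$ was populated (via honest $\UserJoin$ calls, each contributing a tuple recorded in $\VM$, plus forgeries which, by existential unforgeability of the verifier signature scheme, can only be mounted under corrupted verifier keys and are likewise tallied in $\VM$), we obtain the desired contradiction with $t > |\VM|$.

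The main obstacle I anticipate is the injectivity of $\NymGen(\cdot, 1)$ at the abstract level: Def.~\ref{def:instantly-linkable-daa-scheme} only demands that linking of signatures coincide with equality of nyms, not that distinct secret keys necessarily produce distinct nyms at a fixed basename. I would address this either by strengthening instant linkability with a collision-resistance clause on $\NymGen$, or by appealing directly to the Brickell--Li instantiation, where injectivity is inherited from the prime-order group structure. A second subtlety is that $\GB$ is under adversarial influence through the corrupt issuer: I must ensure that a reduction breaking soundness can present the same $\GB$ used at verification time, which follows because $\GB$ is publicly readable and fixed before $\Verify$ is run on each $\comm_i^*$. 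With these two points handled, the counting argument above closes the proof.
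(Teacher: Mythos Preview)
Your approach is essentially the paper's: both reduce to a counting argument showing that if the adversary produces $t>|\VM|$ valid comments under a single basename, then at least one accompanying proof $\Pi_i$ must certify a false $\R_\GB$-statement, violating soundness. The paper's proof is a short sketch that takes exactly this route (it phrases it contrapositively: since only $|\VM|$ genesis tuples exist, some $\Pi_i^*$ must be false).

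Two remarks on where you diverge. First, your concern about injectivity of $\NymGen(\cdot,1)$ is well placed: the abstract definition of instant linkability does not literally deliver it, and the paper simply asserts that ``because of instant linkability there can only exist $|\VM|$ secret keys that form the pseudonyms $\nym_1$ in a genesis tuple'' without further argument. Your proposed fix---falling back on the Brickell--Li instantiation, where $\NymGen(\sk,\dom)=h(\dom)^{\sk}$ in a prime-order group---is exactly what is being used tacitly.

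Second, your appeal to existential unforgeability of the verifier signature scheme is extraneous. In the credibility experiment the list $\GB$ is maintained by the challenger, and the paper treats $|\GB|=|\VM|$ as a structural invariant of the experiment; observe also that the relation $\R_\GB$ never inspects the signature component of a genesis tuple, so forged entries would not help the adversary anyway unless they were actually present in $\GB$. The bound therefore holds by construction of the game, which is why signature unforgeability does not appear among the theorem's hypotheses. You can safely drop that step.
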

\begin{proof}
The idea behind the proof is as follows. Because we know that to win an adversary has to 
return $t > |\VM|$ valid signatures for one basename but at the same time there exist only
$|\VM|$ genesis tuples. It is easy to see now that if the adversary wins, then there must
exist at least proof $\Pi_i^*$, where $\comm_i^* = (\cdot, \cdot,\cdot,\cdot, \Pi_i^*)$ that is false
 and can be used to break the soundness property of the proof system. Note that this 
 follows from the fact that because of instant linkability there can only exist 
 $|\VM|$ secret keys that form the pseudonyms $\nym_1$ in a genesis tuple. 
\end{proof}

\end{definition}

\subsection{Efficient instantiation of the proof for relation \texorpdfstring{$\R_\GB$}{RGB}}\label{sec:zero-knowledge-instantiation}
Pseudonyms in the scheme by Brickel and Li are of the form $h(\dom)^{\sk_{S}}$, where
$h$ is a collision-resistant hash function that maps elements from $\{0,1\}$ to elements of a 
group $\GG$ of order $q$ and the secret key $\sk_{S}$ is an element in $\ZZ_q^*$.
It follows that in such a case we have $\nym_1 = h(1)^{\sk_{S}}$ and $\nym = h(\dom)^{\sk_{S}}$.

To generate $\Pi$ we will make use of the proof system by Groth and Kohlweiss~\cite{DBLP:conf/eurocrypt/GrothK15}. They showed an interactive $\Sigma$-protocol for the following statement. Given $n$ commitments $c_1, \ldots, c_n$ at least one opens to $0$. The communication size is logarithmic in $n$, which means that
by applying the Fiat-Shamir transformation we receive a non-interactive zero-knowledge proof of the same size. Note that this proof system requires the commitment scheme to be homomorphic and the message space to be 
$\ZZ_q$. In particular, Groth and Kohlweiss show that their proof system works for Pedersen commitments where 
$\Com(x,r) = g^x \cdot \hat{g}^r$ for some elements $g, \hat{g} \in \GG$.

We will now show an efficient proof system for the following statement:
Given $\dom, \nym$ and ledger $L$ with genesis tuples $gb_1, \ldots, gb_b$ (where $gb_i = \left(\cdot,\nym_1^i,\cdot \right)$) there exists a secret key $\sk_{S}$ and an index $j$ such that
$\nym = \NymGen(\sk_{S},\dom)$ and $\nym_1^j = \NymGen(\sk_{S},1)$.
First we notice that by setting $g = h(1)$ and $\hat{g} = h(2)$ we can use $g,\hat{g}$ as parameters for a Pedersen commitment scheme. 
What's more, for all $i \in \{1,\ldots,n\}$ we have $\nym_1^i = \Com(\sk_{S}^{(i)},0)$
where $\sk_{S}^{(i)}$ is the secret key of the user that generated tuple $\gb_i$.

To create proof $\Pi$ the Prover with secret witness $\sk_{S}, j$ proceeds as follows:
\begin{enumerate}
\item computes a commitment $c = \Com(\sk_{S},r)$ using random coins $r \in \ZZ_q$,
\item for all $i \in \{1,\ldots,n\}$ computes $c_i = c / \nym_1^i = \Com(\sk_{S} - \sk_{S}^{(i)}, r)$,
\item computes proof $\pi$ using the system by Groth and Kohlweiss that one of $c_1,\ldots, c_n$ is a commitment to $0$,
\item returns proof $\Pi = (c,\pi)$.
\end{enumerate}
To verify the proof a Verifier proceeds as follows:
\begin{enumerate}
\item for all $i \in \{1,\ldots,n\}$ computes $c_i = c / \nym_1^{(i)}$,
\item verifies proof $\pi$ using $c_1,\ldots,c_n$ as part of the statement and returns true if and only if this
proof is valid.
\end{enumerate}

    \section{Impact on society}\label{sec:society}

We provide a solution for newspapers that want to interact with their
readership, but cannot bear the cost of moderation.
As of now, among
the Top 10 websites in the Alexa `News' section that belongs to a newspaper,
three do not offer on-site commenting, two others disable commenting
functionality for controversial topics and four require a Facebook signup with
a real name. The last one apparently has this functionality, but
did not display any comments or provide a link to leave one, presumably due to a
glitch.
%
%
Hence any technique making this interaction feasible again is an
improvement to the political discourse.
We shall nevertheless discuss some implications in case 
\trollthrottle, or a similar system,
should be adopted in larger parts of the web.

\subsubsection*{Setting the threshold}

From a technical view point, setting the threshold is a matter of
balancing the number of regular users that post beyond this threshold
with some target cost that an astroturfing operation should incur.
From this perspective, there should be a clear demarcation between
bots and regular users, that is characterised among other features by
the number of messages these users send.
\begin{full}
This is, however, not the case, as the journalist Michael Kreil
argues
in response to a scientific
study that used text mining and other learning techniques to recognise
social bots~\cite{hegelich2016social}.
He contacted these purported bots and found out that many
of them were, in fact, real people who post well over 150 politically
divisive messages per day~\cite{kreil-socialbots}. 
\end{full}
%
In our evaluation database, we found that around 2 \% of users are above
the threshold of 20 messages per day. Upon inspection,
some of those can be categorised as bots, but many are just very active users or cannot be clearly distinguished
from those.
One particular user posts a daily average of 96 messages on
a cricket-related forum. He or she is just a big sports fan.
If, in addition to the daily limit,
we impose a limit of 100 comments a week and 300
comments per month, 
then
only an additional $443$ Reddit users will be affected, 
compared to $206\,855$ out of $6\,619\,612$ users affected by the daily limit itself
(in June 2019).

This shows that the system can be set up to avoid affecting intensive users,
but, ultimately, \emph{there is no threshold that distinguishes
trolls from intensive users}.
During the world cup match between India and Afghanistan in June, e.g., the
aforementioned cricket fan posted 928 comments.
The method we propose is thus affecting
the political discourse. It discourages communication patterns employed by
power users.
This is not necessarily a bad thing, 
as collective belief formation is driven both by
learning from the
(stated) beliefs of others and
by some interest in maintaining social acceptance~\cite{kuran1998availability}.
Due to the difficulty of mapping virtual
identities to real-world identities, one may argue that the discourse stands to benefit
from a limit on the messages, which favours thought-out contributions.
%
%
In summary, we propose a method for moderation instead of
a clear-cut filtering mechanism.  It can
enable discussion where, currently, there is none. 
\begin{full}
It can be
adjusted to accommodate for fluctuations in use
by evaluating current patterns,
it will impose restrictions on a minority
of users.
\end{full}

\subsubsection*{Centralisation of discourse}

The public ledger provides a 
centralised view of the
discourse on participating websites,
even if its implementation is decentralised.
This offers several potential advantages: with a slight modification
of the protocol, the user can optionally add a pseudonym
$\nym_\mathit{rep} = \NymGen(\sk_{U},x)$ for some arbitrary $x$ (e.g.,
by signing the comment again under an additional basename $x$), to make a set of messages --- across websites --- cryptographically linkable.
%
%
Users can thus build a reputation across websites. 
Similarly, we may add yet another pseudonym 
$\nym_\mathit{thr} = \NymGen(\sk_{U},\mathit{tid})$,
with
$\mathit{tid}$ some global identifier for discussion threads, to
ensure that authors have only one identity per thread and don't
respond to themselves with a different account.
Websites could, theoretically, stop providing their own infrastructure for
user registration, and only permit signed posts to appear, providing
essentially projections of the ledger's representation of the public
discourse.

\begin{full}
While these features seem appealing, the idea of a centralised
political discourse beyond news websites has to be seen critically.
Most importantly, it may undermine the incentives of the issuer
described in Section~\ref{sec:incentives}, so she might take the risk
and collaborate with a malicious verifier.
In our view, this scenario is unlikely.
According to a 2018 study, about two thirds of U.S.\ adults obtain
their news on social media sites, about 43\% from
Facebook\cite{pew-news-social-media}. 
%
The business model of most of these social media sites is based on
exclusive access to their users’ information; hence they have little
interest to share it.
Our focus is therefore on traditional news pages, who benefit from a healthy discourse.
\end{full}

    \section{Formal analysis of the deferred verification and auditing protocol}\label{app:auditing}

\lstinputlisting{trollthrottle.spthy}

    \subsection{Goals \& Incentives}\label{sec:incentives}


A system like \trollthrottle can only be deployed if all parties
have incentives to run it and we build our design on the following incentives.
\medskip

\noindent
\textbf{Websites:} Websites have an incentive to get information
about the trolls to lessen the burden on moderation and save on personnel.
The system requires paying the issuer a fee for running the
infrastructure; hence these costs must be covered by the websites' fee
to the issuer.
As they benefit from the system, they have an incentive to pay, as long
as it is not 
possible to piggyback on the system.
The information necessary to determine whether a user
is a troll must hence only be available to paying websites.
This can be achieved by using public-key encryption, see below.
\smallskip

\noindent
\textbf{Issuer:} The issuer runs a service and collects a fee. She
relies on the trust of the websites to maintain her business. The
short-term gain of accepting bribery for issuing non-validated DAA
certificates could, however, outweigh the loss of this trust and
the potential failing of her business. 
First, the protocol only allows forging identities in collusion with
a verifier. Second, it is possible to keep track of the number of
identities in the system that each verifier attested to by modifying
the protocol (see Section~\ref{sec:practical-extended}).
If a verifier confirms an unusually
large number of identities 
or it
is inconsistent with public information
(e.g.,
subscriber lists; the circulation of newspapers  
is independently audited in most countries, as it is used to set
advertising rates),
this will raise doubts. 
A fake identity can thus be linked to the verifier that colluded in
creating it.
Hence, both the issuer and the verifier carry the risk of exposure,
which grows with the number of fake identities they produce.
\smallskip

\noindent
\textbf{Verifiers:} The verifier's incentive to participate is
that it is either being paid (IVS), run by one of the participating
websites (subscriber list) or is a readily available governmental
service (smart passports/identity cards).
For an IVS, as well as the government issuing smart identification
documents, trustworthiness is existential. By adding
a pseudo-probabilistic procedure to the protocol (see
Sec.~\ref{sec:practical-verification}),
large-scale fraud can be detected
with high probability, which would terminate the IVS's
business.
For newspapers, the participating websites need to carefully decide, which
subscriber list they accept --- the newspaper should have
a reputation to lose. 
As discussed before, a dishonest issuer has to collude with
a dishonest verifier. In this case, they can quietly skip the audit;
however, 
the public can still determine the number of
identities verified per verifier.
This allows at least some amount of public scrutiny, as, e.g., the
approximate size of the subscriber list is publicly known.
One could make the auditing publicly verifiable. This,
however, comes the expense of some users' privacy.

    \section{Review and adoption of the security model taken from
~\cite{DBLP:journals/iacr/BrickellL10}}\label{sec:daa-brickell}

We review Brickell and Li's security model
\cite{DBLP:journals/iacr/BrickellL10}, including the
user-controlled-anonymity and user-controlled-traceability experiment.
Their DAA scheme satisfies both notions
under the decisional/ strong
Diffie-Hellmann assumption.
We slightly simplify their model, as our protocol's computations are
performed
by a single host and not split between a TPM and an untrusted
device.

\begin{definition}[User-controlled-anonymity]\label{def:user-controlled-anonymity} 
A DAA scheme is user-controlled-anonymous if no PPT  adversary can win
    the following game between a challenger $\challenger$ and an adversary $\A$,
    i.e., if $\Adv[\A[anonymity]_{\DAA}] = \Pr[\A\ \text{wins}]$ is
    negligible:
    \begin{itemize}
        \item 
            \emph{Initial}: 
            $\challenger$ runs $\Setup_\DAA(1^{\lambda})$ and gives the resulting $\sk_{I}$ and $\pk_{I}$ to $\A$.
\item 
    \emph{Phase 1}: $\challenger$ is probed by $\A$ who makes the following queries:
    \begin{itemize}
        \item Sign: $\A$ submits a signer's identity $S$, a basename \dom (either $\perp$ or a data string) and a message m of her choice to $\challenger$, who runs $\Sign_\DAA$ to get a signature $\sigma$ and responds with $\sigma$.
        \item Join: $\A$ submits a signer's identity $S$ of her choice to $\challenger$, who runs $\Join_{\DAA}$ with $\A$ to create $\sk_{S}$ and to obtain a set of valid credentials $\cred_S$ from $\A$. $\challenger$ verifies the validation of $\cred_S$ and keeps $\sk_{S}$ secret.
\item Corrupt: $\A$ submits a signer's identity $S$ of her choice to $\challenger$, who responds with the value $\sk_{\DAA}$ of the signer.
    \end{itemize}
\item 
 \emph{Challenge}: At the end of phase 1, $\A$ chooses two signers' identities $S_0$ and $S_1$, a message m and a basename \dom of her choice to C. A must not have made any Corrupt query on either $S_0$ or $S_1$, and not have made the Sign query with the same \dom if \dom  $\neq \perp$ with either $S_0$ or $S_1$. To make the challenge, $\challenger$ chooses a bit $b$ uniformly at random, signs m associated with \dom under ($\sk_{S_b}, \cred_{S_b}$) to get a signature $\sigma$ and returns $\sigma$ to $\A$.
\item \emph{phase 2:} $\A$ continues to probe $\challenger$ with the same type of queries that it made in phase 1. Again, $\A$ is not allowed to corrupt any signer with the identity either $S_0$ or $S_1$, and not allowed to make any Sign query with \dom if \dom $\neq \perp$  with either $S_0$ or $S_1$.
\item \emph{Response:} $\A$ returns a bit $b'$. We say that the adversary wins the game if $b = b'$
    \end{itemize}
\end{definition}

\begin{definition}[User-controlled-traceability]\label{def:user-controlled-traceability} 
    A DAA scheme is user-controlled-traceable if no probabilistic polynomial-time adversary can win the following game between a challenger $\challenger$ and an adversary $\A$, i.e., if $\Adv[\A[trace]_{\DAA}] = \Pr[\A \text{ wins}]$ is negligible:
    \begin{itemize}
\item \emph{Initial}: $\challenger$ runs $\Setup_\DAA(1^{\lambda})$, gives the resulting $\pk_{I}$ to $\A$ but keeps $\sk_{I}$.
\item \emph{Phase 1}: $\challenger$ is probed by $\A$ who makes the following queries:
    \begin{itemize}
\item Sign: The same as in the game of user-controlled-anonymity.
\item Join: There are two cases of this query. 
Case 1: $\A$ submits a signer's identity $S$ of her choice to $\challenger$, who runs $\JoinIssue_\DAA$ to create $\sk_{\DAA}$ and $\cred$ for the signer.
Case 2: $\A$ submits a signer's identity $S$ with a  $\sk_{\DAA}$ value of her choice to $C$, who runs $\JoinIssue_\DAA$ to create $\cred$ for the signer and puts the given $\sk_{\DAA}$ into a revocation list RL. $\challenger$ responds the query with $\cred$. 

Suppose that $\A$ does not use a single $S$ for both of the cases.
\item Corrupt. This is the same as in the game of user-controlled-anonymity, except that at the end $C$ puts the revealed secret key into the list RL.
    \end{itemize}
\item \emph{Forge}: $\A$ returns a signer's identity $S$, a signature $\sigma$, it's signed message m and the associated basename $\dom$. 
\\We say that the adversary wins the game if
\begin{enumerate}
\item  $\Verify_\DAA(\pk_{I}, m, \dom, \sigma, RL) = 1 $(accepted), but $\sigma$ is no response of the existing Sign queries,
\\and/or
\item In the case of $\dom \neq \perp$, there exists another signature $\sigma'$ associated with the same identity and
\dom, and the output of $\Link_\DAA(\sigma, \sigma')$ is 0 (unlinked).
\end{enumerate}
    \end{itemize}
\end{definition}

We will use the following notation to denote the queries made by the adversary:
\begin{description}
\item $\Sign_\DAA(S,\dom,m)$ - On input of a signer's identity S, a basename $\dom$ (either $\bot$ or a data string), a message m the oracle returns signature $\sigma$.
\item $\HJoin_\DAA(S)$ - On input of a signer's identity $S$ of her choice, the adversary obtains credentials $\cred_S$. The secret key $\sk_S$ is kept secret by the oracle. This oracle corresponds to Case 1 joining.
\item $\Join_\DAA(S,\sk_S)$ - On input of a signer's identity $S$ of her choice and a 
secret key $\sk_S$, the adversary obtains credentials $\cred_S$. This oracle corresponds to Case 2 joining.
\item $\IHJoin(S)$ - On input of a signer's identity $S$ of her choice, this interactive honest user joining oracle
allows the adversary to issue credentials for an honest user in the name of the issuer. This oracle represents the
Join oracle defined in user-controlled-anonymity.
\item $\CorruptU_\DAA(S)$ - On input of a signer's identity $S$, the value $\sk$ is returned to the adversary
\end{description}

\begin{definition}[correctness]\label{def:correctness} 
If both the signer and verifier are honest, then the signatures and their
links generated by the signer will be accepted by the verifier
with overwhelming probability, i.e., for any secret key $\sk_S$ in the user's secret 
key space if
    \begin{align*}
        (\pk_{I},\sk_{I}) & \gets \Setup_\DAA(1^{\lambda}),
        \\
        (\com) & \gets \Join_\DAA(\pk_I,\sk_S), 
        \\
        (\cred_S) & \gets \Issue_\DAA(\sk_I,\com), 
        \\
        \sigma_0 & \gets \Sign_\DAA(\sk_{S},\cred,m_0,\dom), \text{
            and }\\
        \sigma_1 & \gets \Sign_\DAA(\sk_{S},\cred,m_1,\dom),
    \end{align*}
then, with overwhelming probability, 
    \[ 1 \gets \Verify_\DAA(\pk_{I},m,\dom,\sigma_i), i\in\set{0,1} \]
and
    \[ 1 \gets \Link_\DAA(\sigma_0, \sigma_1). \]
\end{definition}

Brickell and Li's scheme is easily shown to fulfil our requirement
that the $\Link$ function can also be represented using the pseudonym
that is included in the signature. This pseudonym is fixed per
identity and per basename.
\begin{theorem}

    For any 
    cryptographic collision resistant hash function  $ h : \bits* \to \GG$,
    Brickell and Li's scheme~\cite{DBLP:journals/iacr/BrickellL10}
    is an instantly linkable DAA scheme if we define:
    \begin{align*}
        \NymExtract(\sigma)
        & \defeq  \sigma_2
        \\
        \NymGen(\sk_{S}, \dom) 
        & \defeq h(\dom)^{\sk_{S}}
        \\
        \VerifyBsn(\sigma,\dom)
        & \defeq
        \begin{cases}
            1 & \text{if $\sigma_1 = h(\dom)$}\\
            0 & \text{otherwise}
        \end{cases}
    \end{align*}
    where $\sigma=(\sigma_1,\ldots,\sigma_9)$.
\end{theorem}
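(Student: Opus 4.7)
The plan is to verify the three clauses of Definition~\ref{def:instantly-linkable-daa-scheme} in order, relying on the structure of Brickell and Li's signatures. Recall that in their scheme, when signing with basename $\dom \neq \bot$, the signer first computes $\sigma_1 \defeq h(\dom)$ and $\sigma_2 \defeq \sigma_1^{\sk_S} = h(\dom)^{\sk_S}$, and the remaining components $\sigma_3,\ldots,\sigma_9$ constitute a non-interactive proof of knowledge binding the signer's secret key, her credential, the pair $(\sigma_1,\sigma_2)$, the basename $\dom$ and the message $m$. The verification algorithm $\Verify_\DAA$ in particular recomputes $h(\dom)$ and checks that it equals $\sigma_1$, and checks the proof components against $\sigma_2$.

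First I would establish clause~2, the identity $\Link_\DAA(\sigma,\sigma') = 1 \Longleftrightarrow \NymExtract(\sigma) = \NymExtract(\sigma')$. The $\Link_\DAA$ algorithm of~\cite{DBLP:journals/iacr/BrickellL10} compares precisely the pseudonym components $\sigma_2$ and $\sigma'_2$ of the two signatures (after the individual validity checks), so setting $\NymExtract(\sigma) \defeq \sigma_2$ gives the identity by construction. Clause~1 then follows immediately: an honestly produced $\sigma$ satisfies $\sigma_2 = h(\dom)^{\sk_S}$ deterministically from $(\sk_S,\dom)$, independently of the signed message; hence $\NymGen(\sk_S,\dom) \defeq h(\dom)^{\sk_S}$ agrees with $\NymExtract \circ \Sign_\DAA$ on any input $(\sk_S,\cred,\dom,\cdot)$.

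For clause~3 (unique recoverability of the basename) I would reduce to collision resistance of $h$. Suppose a PPT adversary $A$ outputs $(\dom,\dom',\sigma,\pk_I,m)$ with $\dom \neq \dom'$ such that both $\VerifyBsn(\sigma,\dom)=1$ and $\VerifyBsn(\sigma,\dom')=1$, and such that $\Verify_\DAA(\pk_I,m,\dom,\sigma,RL_\emptyset)=1$. By the definition of $\VerifyBsn$, both equalities $\sigma_1 = h(\dom)$ and $\sigma_1 = h(\dom')$ hold, so $(\dom,\dom')$ is a collision of $h$. A straightforward reduction $B$ runs $A$, receives this tuple, and outputs $(\dom,\dom')$, breaking collision resistance with the same advantage as $A$'s success probability. (The DAA verification condition is not strictly needed for the reduction, but it is harmless to include; it simply restricts to the interesting case of an actually valid signature.)

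The steps are largely routine bookkeeping once the $(\sigma_1,\sigma_2)$ structure of Brickell--Li is spelled out; the only part requiring a cryptographic reduction is clause~3, and there the reduction to collision resistance of $h$ is immediate. The main obstacle is therefore not mathematical but expository: one must point the reader at the precise components of the Brickell--Li signature that play the roles of the pseudonym and basename hash, and argue that $\Link_\DAA$ as originally specified is indeed equivalent to equality of these components on signatures that have already passed $\Verify_\DAA$ — a subtlety that is glossed over if one forgets that $\Link_\DAA$ is defined to return $\bot$ on invalid inputs.
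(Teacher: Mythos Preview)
Your proposal is correct and follows essentially the same approach as the paper: both arguments rest on identifying $\sigma_2 = h(\dom)^{\sk_S}$ as the pseudonym and $\sigma_1 = h(\dom)$ as the basename hash inside Brickell--Li signatures. Your write-up is in fact more explicit than the paper's own proof, which is a terse paragraph that does not spell out the collision-resistance reduction for clause~3 or the $\Link_\DAA$-returns-$\bot$ subtlety you flag; both points are worth keeping.
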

\begin{proof}
It is easy to see that pseudonyms of the form $h(\dom)^{\sk_{S}}$ are already used by the Brickell and Li 
scheme but are hidden as part of the signature (i.e., as $\sigma_2$). Thus, $\NymExtract$ and $\NymGen$
work according to the definition of instant linkability. Lastly, we note that the first element of the signature $\sigma_1$ is actually the base under which we compute the pseudonym, i.e., $h(\dom)$. Note that in our 
system we always use domain-based $\DAA$ signatures and this element is in the range of the hash function and
not a random element (as also allowed in the Brickell and Li scheme). 
\end{proof}

\begin{theorem}
Brickell and Li's scheme~\cite{DBLP:journals/iacr/BrickellL10} with a minor modification 
is an updateable DAA scheme.
\end{theorem}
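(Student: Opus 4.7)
The plan is to walk through the concrete algorithms of Brickell and Li's scheme and show that the only interaction that matters for $\JoinIssue_\DAA$ is the transmission of a commitment to the user's secret that depends solely on a fixed portion of the public parameters. That portion becomes the persistent key $\gpk_1$, while the issuer's signing material becomes the ephemeral key $\gpk_2$.

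More concretely, I would split $\Setup_\DAA$ as follows. $\Setup_1(\secparam)$ outputs the bilinear group description $(p,\GG_1,\GG_2,\GG_T,e)$ together with the public base elements (generators $g_1,g_2$ and the hash-derived bases used to form commitments and pseudonyms). All of these are sampled via public coins: in Brickell and Li's instantiation the bases are either canonical generators of prime-order groups or outputs of a public hash function, and the group parameters themselves can be generated by a verifiable procedure (e.g.\ a BN-curve with published seed). Hence revealing the coins underlying $\Setup_1$ discloses nothing beyond $\gpk_1$. $\Setup_2(\secparam,\gpk_1)$ then samples the issuer's secret $\sk_I$ (the exponent $\gamma \in \ZZ_q^*$ in the q-SDH-style signing key) and publishes $\gpk_2 = g_2^{\sk_I}$. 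The ``minor modification'' is simply to make this split explicit and to allow $\gpk_2$ to be reissued per epoch while $\gpk_1$ is kept fixed.

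For the protocol split, note that the user's first message in Brickell and Li is a commitment $\com = f^{\sk_S}$ (for a public base $f$ in $\gpk_1$), together with a proof of knowledge. Crucially, $\com$ is a function only of $\gpk_1$ and $\sk_S$, so we set $\User_\DAA(\gpk_1,\sk_S) \defeq \com$ and take $\updateMsg \defeq \com$. The $\Issuer_\DAA(\pk_I,\updateMsg,\sk_I)$ step then produces a credential by computing the SDH-style tuple $(A,x)$ with $A = (g \cdot \com)^{1/(\sk_I + x)}$ for a fresh $x$, which is verifiable under the current $\pk_I = (\gpk_1,\gpk_2)$. Since $\updateMsg$ is independent of $\gpk_2$, the issuer can, on transitioning from epoch $t_e$ to $t_e'$, pick a new $\sk_I'$ and recompute a fresh credential from the same stored $\updateMsg$ without any user interaction -- exactly the non-interactive update required.

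The main obstacle is the third clause of Definition~\ref{def:updateable-daa-scheme}: I must argue that handing the adversary the random coins that produced $\gpk_1$ does not break user-controlled anonymity or traceability. For anonymity, the coins reveal only the group description and generic bases, which are fixed-structure objects; the anonymity reduction in~\cite{DBLP:journals/iacr/BrickellL10} relies on DDH-type assumptions over the group, where the generators are public anyway. For traceability, the relevant hardness (q-SDH) is stated with respect to a publicly given group, so the coins of $\Setup_1$ are already implicitly handed to the adversary in the reduction. What requires care is that the hash-derived bases (used for pseudonym generation) are modelled as random oracle outputs in the original proof; I would argue that exposing the public coins here corresponds to revealing the description of the random oracle, which is standard. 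The hybrid is then essentially syntactic, and the overall reduction loses only a negligible factor, which I would quantify by inspecting each game hop in the original proof and verifying that no step uses a secret component of $\gpk_1$.
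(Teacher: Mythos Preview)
Your high-level strategy---split the public key into a persistent part carrying the commitment base and an ephemeral part carrying the issuer's signing material, then observe that the user's first flow depends only on the former---matches the paper's. Where you diverge is in identifying what the ``minor modification'' actually is, and this is the one non-trivial point of the proof.

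In Brickell--Li, the user does not send a bare commitment $F = h_1^{\sk_S}$; she sends $F$ together with a Schnorr-style proof whose Fiat--Shamir challenge is $c = h(\pk_I \,\|\, \mathsf{nonce} \,\|\, F \,\|\, R)$. Because the \emph{full} $\pk_I$ is hashed into $c$, the proof---and hence the message the issuer would store as $\updateMsg$---becomes invalid the moment $\gpk_2$ is refreshed. The paper's modification is precisely to drop $\pk_I$ from this hash (arguing that the only purpose of including it was to block cross-issuer replay, which here is a feature, not a bug, and that soundness of the Fiat--Shamir transform survives since the remaining hashed values suffice). You instead declare the modification to be ``simply making the split explicit,'' which misses this obstacle entirely.

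Your choice $\updateMsg \defeq \com$ (commitment only, no proof) would sidestep the challenge-dependency problem, but then you have silently discarded the proof of knowledge that the original $\JoinIssue_\DAA$ requires the issuer to verify; you would need to argue either that a one-time verification at initial join suffices for all future epochs, or that the external proof $\Pi_\Join$ in the surrounding protocol absorbs this role. The paper instead keeps the Schnorr proof inside $\updateMsg = (F, c, s_f, \mathsf{nonce})$ and pays for it with the hash modification. Either route can be made to work, but you should name the actual dependency on $\pk_I$ in the challenge and explain how your variant handles it.
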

\begin{proof}\label{sec:proof-updatability}
The main observation is that in the $\JoinIssue_\DAA$ protocol the user computes a Diffie-Hellman
public key $F = h_1^{f}$ and computes a Schnorr like proof for $f$, i.e., it computes $R = h_1^{r_f}$, challenge
$c = h(\pk_I||{\sf nonce}||F||R)$ and proof $s_f = r_f + c\cdot f$. The generator $h_1$ is part of the public key $\pk_I$, ${\sf nonce}$ is some nonce send by the issuer to prevent replay attacks and $\updateMsg = (F, c, s_f, {\sf nonce})$. What's more, the setup algorithm the generator $h_1$ by directly sampling a group element, i.e., it is generated using public coins . 

One can easily notice that we can set $\gpk_1 = (h_1)$ (including the group definition), where $\gpk_2$ contains the remaining parameters (see \cite{DBLP:journals/iacr/BrickellL10} for a full specification). The only problem we have to tackle with is that the challenge $c$, used to generate the proof $s_f$, contains the full public parameters $\pk_I$. 
Indeed, there is no reason to include the full $\pk_I$ besides to protect against cross issuer attacks, i.e., a malicious man-in-the-middle could register a user into a different DAA system (with a different $\pk_I$). However, in our
case we want this to be true. What's more important here is that changing the challenge does not break the soundness of the proof. Since, the used proof is a standard Fiat-Shamir instantiation of a sigma protocol, it is sufficient that challenge contains remaining values.

\end{proof}

    \section{Proofs of security}\label{sec:proofs}

\Lucjan{Note for the proofs. It would be more beneficial to use R instead of A' to denote the reduction we are 
trying to use. Otherwise we have to be really careful and consistent and use the correct algorithms in the subscript description of experiments, i.e., inside the Pr[...] brackets.}

We first define the adversarial model in terms of the oracles at the
attacker's disposal. Then we treat introduce and prove each security
property, one by one.

\subsection{Model Oracles}

To model the security of the protocol, we first define the
adversarial capabilities in terms of a set of oracles that will be
used in the following security definitions. 
The challenger in all these definitions is defined in terms of these
oracles and the winning condition of the adversary.

\begin{definition}\label{def:oracles}
We define the following oracles and global sets $\CU$, $\HU$, $\VM$,
    $\USK$, $CH$, $\COMM$ that are initially set empty:
\begin{description}

\item[$\CorruptU(\uid)$] - on input of the user identifier this oracle, checks if there is a tuple $(\uid,\sk_\uid,\cred_\uid) \in \USK$ then output $(\sk_\uid,\cred_\uid)$. Otherwise it outputs $\bot$.
Finally, it adds $\uid$ to the set $\CU$ and sets $\HU = \HU \setminus \{\uid\}$.

\item[$\CorruptV(\uid,\ver,V)$] - on input of the user identifier and database $\ver$, this oracle sets $\ver[V,\uid] = 1$ and adds
    $(V,\uid)$ to the set $\VM$.
    
\item[$\CUser(\uid,\ver,\sk_\iss,V)$] - this oracle first checks that $\uid \not\in \HU \cup \CU$ and returns $\bot$ if not. Then it sets $\ver[V,\uid] = 1$ and
runs the 
$\JoinIssue$ protocol, receiving $(\sk_\uid,\cred_\uid)$. 
        Finally, it adds 
        $(\uid,\sk_\uid,\cred_\uid,V)$ into $\USK$ and
$\uid$ to $\HU$. 

\item[$\UserJoin(\uid,\ver,\sk_\iss,V)$] - this oracle first checks that $\uid \not\in \HU \cup \CU$,
$\ver[V,\uid] = 1$ and that $(\cdot,\uid) \not\in \VM$. It returns $\bot$, if both checks fail. Then, it interactively executes $\Issue(\sk_\iss,\ver,\uid)$ by communicating with the adversary.

\item[$\HUserJoin(\uid,\ver,\pk_\iss,V)$] - this oracle first checks that $\uid \not\in \HU \cup \CU$
  and $(\cdot,\uid) \not\in \VM$. It returns $\bot$ if both checks fails. 
  Then it sets $\ver[V,\uid] = 1$, samples a fresh secret key $\sk_\uid$ and interactively executes the 
  $\Join(\pk_\iss,\sk_{\uid},\uid)$ protocol with the adversary, receiving $\cred_\uid$. It then adds
  $(\uid,\sk_\uid,\cred_\uid,V)$ to $\USK$ and $\uid$ into $\HU$.

\item[$\CComment(\uid,\dom,m,\pk_\iss)$] - this oracle first checks that $\uid \in \HU$ and then computes 
$(\nym,\comm) \rexec \Comment(\pk_\iss,\sk_\uid, \cred_\uid,\dom,m)$. Finally, it adds $(\uid, \nym, \dom, m, \comm)$ to $\COMM$ and outputs $\comm$ and the pseudonym $\nym$.

\item[$\Chall_b(\uid,\dom,m,\sk_\iss,\pk_\iss)$] - this oracle first checks that $\uid \in \HU$ and returns $\bot$ if not. 
If $(\uid,\dom,\sk_{\uid,\dom},\cred_{\uid,\dom}) \not\in CH$ then the oracle executes the 
$\JoinIssue$ protocol to receive a new secret key $\sk_{\uid,\dom}$ and credential $\cred_{\uid,\dom}$. Then it adds $(\uid,\dom,\sk_{\uid,\dom}, \cred_{\uid,\dom})$ to $CH$ and computes:
\begin{description}
\item[if $b=0$:] $(\nym,\comm) \rexec \Comment(\pk_\iss,\sk_\uid,\cred_\uid,\dom,m)$,
\item[else if $b=1$:] $(\nym,\comm) \rexec \Comment(\pk_\iss,\sk_{\uid,\dom},\allowbreak \cred_{\uid,\dom},\allowbreak \dom,m)$
\end{description}
Finally, it outputs $\comm$ and pseudonym $\nym$.
\end{description}
\end{definition}

        \subsection{Protection against trolling}

\begin{definition}\label{def:trolling}
We say that the system protects against trolling if
for every adversary $\A$, every $\secparam$, the probability
$\Pr[\ExpTroll(\secparam)=1]$ is negligible $\secparam$.

\begin{figure}[H]
\begin{experiment}{\ExpTroll(\secparam)}
\flushleft
$\CU \leftarrow \emptyset$; $\HU \leftarrow \emptyset$; $V \leftarrow \emptyset$;   \\
$\USK \leftarrow \emptyset$; $\COMM \leftarrow \emptyset$; $\ver \leftarrow \emptyset$;  \\
$(\sk_\iss,\pk_\iss) \rexec \Setup(\secparam)$ \\
 $\Or = \{ \CorruptU(\cdot),  \CUser(\cdot,\ver,\sk_\iss),$ \\
$\CorruptV(\cdot,\ver), \UserJoin(\cdot,\ver,\sk_\iss), \CComment(\cdot,\cdot,\cdot,\pk_\iss)\}$\\
    $\OUT = \{(\nym_i^*,\dom_i^*,m_i^*,\comm_i^*)\}_{i=1}^{k} \rexec \A^\Or(\crs,\pk_\iss)$\\
$\OUT = \OUT \setminus \{ (\nym,\dom,m,\comm) : (\cdot, \nym,\dom,m,\comm) \in \COMM \}$\\
Return $0$ if $\Verify(\pk_\iss,\nym_i^*,\dom_i^*,m_i^*,\comm_i^*)=0$ for any $i \in \{1,\ldots,k\}$\\
$S = \{ (\nym,\dom) : (\nym,\dom,\cdot,\cdot) \in \OUT\}$\\
$t = \max_{(\cdot,\dom) \in S}
 \; \;  \lvert \{ (\nym) :  (\nym, \dom) \in S\}   \rvert $\\
Return $1$ if $t > \lvert \CU \rvert + \lvert \VM \rvert $ \\
\end{experiment}
\caption{}
\end{figure}


\end{definition}

\begin{theorem}[Protection against trolling]\label{thm:trolling-p}
    The \trollthrottle protocol (see
    Def.~\ref{def:trollthrottle})
    satisfies protection against trolling (see Def.~\ref{def:trolling})
    if the underlying DAA scheme provides user-controlled traceability
    (Def.~\ref{def:user-controlled-traceability})
    and is instantly linkable
    (Def.~\ref{def:instantly-linkable-daa-scheme}), the user hash function is
    collision-resistant and
    the proof system for relation $\R_\Join$ is a proof of knowledge.
\end{theorem}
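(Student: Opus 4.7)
The plan is to proceed by reduction: assuming a PPT adversary $\A$ wins $\ExpTroll$ with non-negligible probability, I would construct a reduction $\R$ that breaks either user-controlled-traceability of the underlying DAA scheme, collision-resistance of $\h$, or knowledge-soundness of the proof system for $\R_\Join$. The reduction embeds the DAA challenge by setting $\pk_\iss \defeq \pk_{I}$ (received from the DAA challenger) and generates $\crs_\Join$ via the knowledge-extractor setup, so that $\R$ holds an extraction trapdoor $\tau_\Join$. Then $\R$ simulates $\ExpTroll$'s oracles for $\A$: $\CUser$ is answered by invoking the DAA $\IHJoin$ oracle (so honest-user keys remain hidden even from $\R$); $\CComment$ uses the DAA $\Sign_\DAA$ oracle; $\CorruptU$ forwards to $\CorruptU_\DAA$ in the DAA game; $\CorruptV$ is handled locally by recording the corrupted verifier and the compromised $(V,\uid)$ pair; and $\UserJoin$ queries --- where $\A$ supplies $(\com,\Pi_\Join)$ --- are processed by extracting $\sk_S$ from $\Pi_\Join$ using $\tau_\Join$, then invoking DAA's Case~2 $\Join_\DAA$ oracle with the extracted $\sk_S$ to obtain the credentials that $\R$ forwards to $\A$.

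When $\A$ halts with output $\OUT$ containing, after the filtering step, $t > \lvert\CU\rvert + \lvert\VM\rvert$ distinct valid $(\nym,\dom^*)$ pairs for some basename $\dom^*$, the counting argument proceeds as follows. By instant linkability --- in particular by the uniqueness of the basename enforced by $\VerifyBsn$ and by the bijective relation between nyms and underlying secret keys at a fixed basename --- each distinct $\nym$ at $\dom^*$ must arise from a distinct signing key used on $\h(m)$ with basename $\dom^*$. The only secret keys under $\A$'s control are those of the $\lvert\CU\rvert$ corrupted users (revealed via $\CorruptU_\DAA$) and the $\lvert\VM\rvert$ maliciously-verified users (whose keys $\R$ extracted). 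By pigeonhole there exists at least one surviving tuple $(\nym_j,\dom^*,m_j,\comm_j)$ whose embedded DAA signature $\sigma_j$ is signed under a key that $\R$ neither extracted nor revealed, i.e., a key belonging either to a user still in $\HU$ or to no user at all.

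It remains to argue that $\sigma_j$ is a genuine forgery in the user-controlled-traceability game, that is, not a response to any of $\R$'s $\Sign_\DAA$ queries. Those queries arise only from $\CComment$ invocations, which record $(\uid,\nym,\dom,m,\comm) \in \COMM$, and the filter of $\ExpTroll$ removes exactly those $(\nym,\dom,m,\comm)$ tuples from $\OUT$. Hence if the surviving tuple uses an $\HU$ key but matches no $\COMM$ entry, it differs in either $m$ or $\comm$. If $m$ differs, then $\sigma_j$ signs a hash $\h(m_j)$ never queried to $\Sign_\DAA$ under this $(\sk,\dom^*)$ --- unless $\h(m_j)$ collides with some previously queried $\h(m_j')$, in which case $\R$ instead outputs $(m_j,m_j')$ as a collision against $\h$. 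If only $\comm$ differs, $\sigma_j$ itself is a fresh DAA signature and hence a forgery. In all remaining cases $\R$ outputs $(\sigma_j,\h(m_j),\dom^*)$ as its response in the user-controlled-traceability game.

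The main obstacle is the faithful simulation of $\UserJoin$: the knowledge extraction from $\Pi_\Join$ must yield an $\sk_S$ consistent with $\com$ so that the Case~2 $\Join_\DAA$ oracle produces credentials indistinguishable from those an honest $\Issue$ would have generated, and the extractor-mode $\crs_\Join$ must remain computationally close to an honest $\Setup_\ZK$ output. Both of these facts are exactly the content of the proof-of-knowledge property of $\Pi_\Join$ and contribute the dominant term of $\R$'s advantage; the remainder of the analysis is a routine hybrid together with a union bound over the polynomially many basenames at which the threshold~$t$ could peak.
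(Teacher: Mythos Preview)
Your proposal is correct and follows essentially the same route as the paper's proof: the same reduction to user-controlled traceability, the same oracle simulation (in particular, knowledge-extraction on $\Pi_\Join$ to feed the Case~2 $\Join_\DAA$ oracle and place adversarial keys on $RL$), the same pigeonhole argument on distinct nyms at the peak basename, and the same hash-collision sub-case; the paper merely organises these into an explicit three-case split up front whereas you weave them together. One small slip: for $\CUser$ you should invoke the traceability game's $\HJoin_\DAA$ (Case~1) oracle rather than $\IHJoin$, which belongs to the \emph{anonymity} game where the adversary plays issuer --- your parenthetical ``honest-user keys remain hidden even from $\R$'' is exactly the semantics of $\HJoin_\DAA$, so this is only a naming error.
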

%

\begin{proof}
We begin this proof by noting that the adversary can only win the trolling experiment in three ways.
\begin{enumerate}
\item by finding a comment with a DAA signature $\sigma$ under basename $\dom_1$ and 
message $m$ that is also  valid for basename $\dom_2$ and message $m$, where $\dom_1 \not= \dom_2$.
\item forging a signature for an honest user
for message $m^*$ where there exists a tuple $(\cdot,\cdot,\cdot,m,\cdot) \in \COMM$ for which $h(m^*) = h(m)$.
\item by creating a ``fake'' new user without interacting with the issuer or forging a signature for an honest user
for message $m^*$ where there exists no tuple $(\cdot,\cdot,\cdot,m,\cdot) \in \COMM$ for which $h(m^*) = h(m)$.
\end{enumerate}

We will now show that any adversary has only a negligible probability to actually perform any of the
above attacks. To do so, we will create reductions that interact with the adversary and a DAA challenger
for the user-controlled traceability. However, first we show how those reductions will answer the oracle queries
of the adversary. In every case, the reduction will play the role of the adversary against the DAA scheme.
Thus, it will receive the public key of the DAA issuer $\pk_I$. 
Moreover, every reduction will keep its own local vector $\ver$ (initially zero) and initially empty lists 
$\HU_\R$,  $\CU_\R$ (different than the ones used in the definitions) and $\COMM$.

\paragraph{Generic way of answering oracle queries by the reduction}\label{it:oracle-sim}
\begin{description}
\item[$\CorruptV(\uid,\ver)$] - on a corrupt verification query, the reduction just sets the local value
$\ver[\uid]$ to $1$.

\item[$\CUser(\uid,\ver,\sk_\iss)$] - on an honest user creation query, the reduction returns $\cred_\uid$
if $(\uid, \cred_\uid) \in \HU_\R$. If such a tuple does not exist, it queries the 
DAA $\HJoin_\DAA(S)$ oracle, where $S = \uid$, receives a DAA credentials $\cred_S$ and
adds $(\uid,\cred_\uid)$ into $\HU_\R$ and returns $\cred_\uid = \cred_S$. 

\item[$\CorruptU(\uid)$] - on a corrupt honest user query, the reduction returns $\bot$ if there exists 
no tuple $(\uid,\cred_\uid) \in \HU_\R$. Otherwise, it queries the DAA oracle $\CorruptU_\DAA(S)$, where $S = \uid$ and
receives the DAA secret key $\sk_\uid = \sk_S$ and credentials $\cred_\uid = \cred_S$. 
The reductions updates $\HU_\R = \HU_\R \setminus \{(\uid,\cred_\uid)\}$, add
$(\uid,\sk_\uid,\cred_\uid)$ to $\CU_\R$ and returns $\sk_\uid = \sk_S$.

\item[$\UserJoin(\uid,\ver,\sk_\iss)$] - on a corrupt user joining query, the reduction returns $\bot$ if
$\ver[\uid]=0$ or $(\uid,\cdot) \in \HU_\R$. The reduction then  uses the extraction algorithm $\Extr$
to extract $\sk_S$ from $\Pi_\Join$. It then uses its own $\Join_\DAA(\uid,\sk_S)$ and obtains credentials
$\cred_S$. The reduction then adds $(\uid,\sk_S,\cred_S)$ to $\CU_\R$ and returns $\cred_\uid = \cred_S$ to the adversary.

\item[$\CComment(\uid,\dom,m,\pk_\iss)$] - on a commenting query, the reduction first checks that
the query is for an honest user, i.e., that $(\uid,\cdot) \in \HU_\R$ and returns $\bot$ if this is not the case. 
It then uses its own signing oracle to query $\Sign_\DAA(\uid,\dom,m)$ receiving signature $\sigma$ and
computes $\nym = \NymExtract(\sigma)$. Finally,
it returns $\comm= (\sigma,\nym,m,\dom)$ and it adds $(\uid,\nym,\dom,m,\comm)$ into $\COMM$.
\end{description}
If the $\Extr$ fail, then the reduction also fails. Thus, it is easy to see that the 
probability of any reduction in simulating the real experiment without error depends heavily
on this algorithm. Fortunately, we assumed that they fail only with negligible probability, so does 
our reduction.

\paragraph{Case 1}
We will now discuss that there cannot exist any adversary that can use the first attack strategy.
This basically follows from the instant linkability of the DAA scheme, i.e.,
because of the $\VerifyBsn$, we ensure that signatures are linked to basenames and the 
probability that any $\A$ finds such a ``collision'' is negligible. 

To show this more formally, let us assume that there exists an adversary $\A$ that wins by returning
a valid comment $\comm_1 = (\sigma^*,\cdot,\dom^*,m^*)$ where $(\cdot,\nym^*,\dom^*,m^*,\comm_1) \not\in \COMM$ but there
exists a tuple $\comm_2 = (\sigma^*,\cdot,\dom,m^*) $ such that $(\cdot,\nym^*,\dom,m^*,\comm_2) \in \COMM$, which is what we assumed
in this case. However, because both commitments are valid we know that $\VerifyBsn(\sigma^*,\dom^*)=1$
and $\VerifyBsn(\sigma^*,\dom)=1$, and $\Verify_\DAA(\pk_I,m,\dom^*,\sigma^*,RL_\emptyset) = 1$
and $\Verify_\DAA(\pk_I,m,\dom,\sigma^*,RL_\emptyset) = 1$. Thus, we found a ``collision'' and 
broke the instant linkability
property of the DAA scheme for which we assumed that there exists no PPT adversary 
with non-negligible
probability.
 
 \paragraph{Case 2}
 It is easy to see that by winning in this case the adversary $\A$ can be used to break collision-resistance
 of the hash function $h$. The reduction just returns $(m,m^*)$ as a collision for $h$.

\paragraph{Case 3}
 We will now show that in case 2 if there exists an adversary $\A$ against the trolling experiment, then
 we can use it to construct a reduction $\R$ against the user-controlled-traceability experiment. In particular, 
 we have shown above that how $\R$ can answer all possible queries of $\A$ using its own oracles for the user-controlled-traceability experiment. Thus, at some point $\A$ will conclude and return a list $\OUT$. We assume without loss of generality that this list does not contain any of the signatures returned as part of the $\CComment$ oracle queries. Note that
 the experiment explicitly disallows such tuples. Since we assumed that $\A$ wins the experiment, thus there must exist a basename $\dom$ for which $\max_{(\cdot,\dom) \in S} = t > |\CU| + |\VM|$, where $S = \{ (\nym, \dom) : (\cdot,\nym,\dom,\cdot) \in \OUT\}$. However, what this implies is that there must exist exactly $t$ valid DAA signatures in $\OUT$ for the basename $\dom$. Let us denote those signatures under respectively pseudonyms
 $\nym_1, \ldots, \nym_t$ and messages $m_1, \ldots, m_t$ as $\sigma_1, \ldots, \sigma_t$. 
 We also know that all $\nym_1, \ldots, \nym_t$ are distinct. 
 What's more, because of instant linkability we know that there exist secret keys $\sk_1,\ldots,\sk_t$ for which
 $\nym_i = \NymGen(\sk_i,\dom)$, where there is at least one secret key $\sk_j$ which was not extracted by the 
 reduction (and put in on the revocation list by the user-controlled-traceability experiment). It follows that
 for the revocation list $RL = \{\sk_1,\ldots, \sk_{j-1}, \sk_{j+1},\ldots,\sk_t \}$ we have that
 $\Verify_\DAA(\pk_I,m_j,\dom,\sigma_j,RL) = 1$. Thus, since we know that all signatures in $\OUT$ 
 are not an output of the $\CComment$ oracle and there exists at least one valid signature despite using a 
 revocation list with the secret keys of all corrupted users by returning $(\uid,\sigma_j,m_j,\dom)$ 
 for some $\uid$ of a signer, the reduction
 wins the user-controlled-traceability experiment. The $\uid$ is chosen depending on the type of forgery.
 If $\sk_j$ does not correspond to a secret key of any honest user (the reduction can check this asking its
 oracle for a signature under a dummy message for all honest users in basename $\dom$ and comparing the
 corresponding pseudonym with $\nym_j$) $\uid$ is chosen as an identifier of a corrupted user and
 otherwise as the identifier of the honest user with secret key $\sk_j$.

Thus, we have 
\begin{align*}
    \Pr[\ExpTroll\secparam)=1] & = 3 \cdot ( \Adv[\R^{trace}_{\DAA}] +  \Adv[\R^{{\sf collision}}_{h}] + 
    \\
                               & \phantom{=} + \Adv[\R^{\VerifyBsn}_\DAA])  + \Adv[\R^{E_1}_{\Pi}] 
                               \\
                               & \leq \negl.
\end{align*}
\end{proof}

        \subsection{Non-frameability}

\begin{definition}\label{def:no-frame}
We say that the system is non-frameable if
for every adversary $\A$, every $\secparam$, the probability
$\Pr[\ExpFrame(\secparam)=1]$ is negligible $\secparam$.

\begin{figure}[H]
\begin{experiment}{\ExpFrame(\secparam)}
\flushleft
$\CU \leftarrow \emptyset$; $\HU \leftarrow \emptyset$; $V \leftarrow \emptyset$;   \\
$\USK \leftarrow \emptyset$; $\COMM \leftarrow \emptyset$; $\ver \leftarrow \emptyset$;  \\
$(\sk_\iss,\pk_\iss) \rexec \Setup(\secparam)$ \\
$\Or = \{ \CorruptU(\cdot),  \CUser(\cdot,\ver,\sk_\iss),$ \\
$\CorruptV(\cdot,\ver), \CComment(\cdot,\cdot,\cdot,\pk_\iss), \UserJoin(\cdot,\ver,\sk_\iss)\}$\\
 $ (\nym^*,\dom^*,m^*\comm^*) \rexec \A^\Or(\crs,\pk_\iss)$\\
Return $0$ if $\Verify(\pk_\iss,\nym^*,\dom^*,m^*,\comm^*)=0$\\
Return $1$ if:\\
1) there exists no tuple $(\cdot,\nym^*,\dom^*,m^*,\comm^*)$ in $\COMM$, and\\
2) $\exists_{\uid^* \in \HU}$ $\nym^* = \nym_{\uid^*}$, where
$(\nym_{\uid^*},\cdot) \rexec \Comment(\pk_\iss,\sk_{\uid^*},\cred_{\uid^*},\dom^*,m^*)$,
and $(\uid^*, \sk_{\uid^*},\cred_{\uid^*},\cdot) \in \USK$\\\
\end{experiment}
\caption{}
\end{figure}


\end{definition}

\begin{theorem}[Non-frameability]\label{thm:no-frame-p}
    The \trollthrottle protocol (see
    Def.~\ref{def:trollthrottle})
    satisfies Non-frameability (see Def.~\ref{def:trolling})
    if the underlying DAA scheme provides user-controlled traceability
    (Def.~\ref{def:user-controlled-traceability}) and is instantly linkable
    (Def.~\ref{def:instantly-linkable-daa-scheme}), the hash function $h$ is collision-resistant
    and
    the proof system for relation $\R_\Join$ is a proof of knowledge.
\end{theorem}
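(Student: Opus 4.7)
The plan is to construct a reduction $\R$ that, given an adversary $\A$ winning $\ExpFrame$, either forges a DAA signature (breaking user-controlled-traceability), or finds a collision for $h$, or breaks instant linkability. The reduction plays against the user-controlled-traceability challenger for the DAA scheme and simulates $\A$'s oracles essentially as in the proof of Theorem~\ref{thm:trolling-p}: honest users are created via the $\HJoin_\DAA$ oracle; corrupt user joins go through $\Join_\DAA$ after first extracting the signer's secret key from $\Pi_\Join$ using the knowledge extractor $\Extr$ for $\R_\Join$; $\CorruptU$ queries are forwarded to $\CorruptU_\DAA$ (which also inserts the revealed key into $RL$); and $\CComment$ queries are translated into $\Sign_\DAA$ queries, with the signature's embedded pseudonym returned via $\NymExtract$. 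The knowledge extraction step fails only with negligible probability, so this simulation is faithful up to negligible error.

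When $\A$ halts and outputs $(\nym^*,\dom^*,m^*,\comm^*)$, parse $\comm^* = (\sigma^*,\nym^*,\dom^*,h^*)$. By the winning condition, $\Verify$ accepts, so $h^* = h(m^*)$, $\VerifyBsn(\sigma^*,\dom^*)=1$, $\NymExtract(\sigma^*)=\nym^*$, and $\Verify_\DAA(\pk_I,h^*,\dom^*,\sigma^*,RL)=1$; moreover there is an honest $\uid^* \in \HU$ with $\NymGen(\sk_{\uid^*},\dom^*)=\nym^*$. I distinguish two cases based on whether $\sigma^*$ was ever returned by the simulated $\Sign_\DAA$ oracle. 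If $\sigma^*$ was never so returned, then $(\uid^*, h(m^*), \dom^*, \sigma^*)$ is a forgery against user-controlled-traceability: $\uid^*$ is honest, hence $\sk_{\uid^*} \notin RL$, and by instant linkability $\NymExtract(\sigma^*)=\NymGen(\sk_{\uid^*},\dom^*)$ binds the signature to this key. The reduction outputs this tuple and wins.

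If on the other hand $\sigma^*$ was produced by a $\CComment(\uid',\dom',m',\pk_\iss)$ query that placed $(\uid',\nym', \dom', m', \comm') \in \COMM$, then by the uniqueness clause of $\VerifyBsn$ (instant linkability, third property) we have $\dom' = \dom^*$, and by $\NymExtract$ we get $\nym' = \nym^*$. Since the adversary's tuple is not in $\COMM$, we must have $m' \neq m^*$; but $\Verify$ requires $h(m^*) = h^* = h(m')$, yielding a collision $(m',m^*)$ for $h$. The reduction outputs this collision and breaks collision-resistance. Putting the cases together,
\[
    \Pr[\ExpFrame(\secparam)=1] \le \Adv[\R^{\mathrm{trace}}_{\DAA}] + \Adv[\R^{\mathrm{collision}}_{h}] + \Adv[\R^{\VerifyBsn}_{\DAA}] + \Adv[\R^{E_1}_{\Pi}],
\]
each of which is negligible by assumption. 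The main subtlety is the second case: without the uniqueness of $\dom$ guaranteed by instant linkability, the adversary could reinterpret an honest signature under a different basename, yielding a framing attack that is neither a DAA forgery nor a hash collision; this is precisely what $\VerifyBsn$ prevents.
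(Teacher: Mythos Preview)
Your proof is correct and follows essentially the same approach as the paper: simulate $\A$'s oracles via the DAA traceability challenger (using the knowledge extractor for $\R_\Join$ on corrupt joins), then split into a hash-collision case and a DAA-forgery case. Your case distinction is slightly more explicit than the paper's---you separate on whether $\sigma^*$ itself was produced by a $\Sign_\DAA$ query and invoke the $\VerifyBsn$ uniqueness clause to pin down $\dom'=\dom^*$ in the reuse case---which is why your bound carries the extra $\Adv[\R^{\VerifyBsn}_{\DAA}]$ term that the paper omits; this is a harmless refinement, not a different argument.
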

\begin{proof}
Assume there exists a PPT adversary $\A$ that can win the non-frameability game with non-negligible probability.
Then by instant linkability there $\exists \dom, \uid \in \HU, \nym= \NymGen(\sk_{\uid},\dom)$, s.t. $\Verify(\pk_{\iss},\nym,\dom, m,\comm)=1$ and $\neg\exists(\cdot,\nym,\allowbreak \dom,m,\comm)\in \COMM$.
In other words, the adversary is able to create a valid $\comm = (\sigma, \nym,\dom,\allowbreak m)$ tuple that corresponds to an honest user $\uid \in \HU$. We first exclude the case that the adversary wins by
using a comment from $\COMM$ under a different message $m'$. It is easy to see that this basically corresponds
to an attack against the collision-resistance of the hash function and we can use $\A$ to break the security of the 
hash function $h$.
We will show that in any adversary winning in any other can be used to create reduction $\R$ that wins the user-controlled traceability experiment for the DAA scheme with non-negligible probability.

$\R$ simulates the non-frameability experiment for $\A$ according to ~\ref{it:oracle-sim}. With non-negligible probability, $\R$ 
obtains $(\nym^*,\dom^*,m^*,\comm^*) \rexec \A^{\Or}(\crs,\pk_\iss)$.
Let $out=(\uid^*,\sigma^*,m^*,\dom^*)$, where $\comm^* = (\sigma^*, \allowbreak  \nym^*, 
\allowbreak \dom^*, \allowbreak  m^*)$ and
$\uid^*$ corresponds to the honest user with $\nym^* = \NymGen(\sk_{\uid^*},\dom^*)$.
According to winning conditions of the non-frameability experiment we have that \linebreak $\Verify_\DAA(\pk_{I,\DAA},m^*,\allowbreak \dom^*, \sigma^*,RL_\emptyset)$ will return $1$ and $\sigma^*$ is not the result of any signing query made by the reduction $\R$. What's more, the winning conditions require that the signature
corresponds to an honest user, which by instant linkability means that $\sigma^*$ is also valid for the revocation
list containing the secret keys of all corrupted users in $\CU$.
Thus, by returning $out$, the reduction wins the user-controlled-traceability experiment with a non-negligible probability. It follows that we have 
\begin{align*}
\Pr[\ExpFrame(\secparam)=1] =& \Adv[\R^{trace}_{\DAA}]  +  \Adv[\R^{{\sf collision}}_{h}]  + \\
&\Adv[\R^{E_1}_{\Pi}] \leq \negl.
\end{align*}

\end{proof}

        \subsection{Anonymity}

\begin{definition}\label{def:anonymity}
We say that the system is anonymous if
for every adversary $\A$, every $\secparam$, the probability
    $\Pr[\ExpAnon(\secparam)=1] = \frac{1}{2} + \negl$.

\begin{figure}[H]
\begin{experiment}{\ExpAnon(\secparam)}
\flushleft
$\CU \leftarrow \emptyset$; $\HU \leftarrow \emptyset$; $V \leftarrow \emptyset$;   \\
$\USK \leftarrow \emptyset$; $\COMM \leftarrow \emptyset$; $\ver \leftarrow \emptyset$;  \\
$(\sk_\iss,\pk_\iss) \rexec \Setup(\secparam)$ \\
$\Or = \{ \CorruptU(\cdot), \CUser(\cdot,\ver,\sk_\iss),$ \\
$\HUserJoin(\cdot,\ver,\pk_\iss,V),  \CComment(\cdot,\cdot,\cdot,\pk_\iss) \}$ \\
$(\uid^*,\dom^*,m^*) \rexec \A^{\Or}(\crs,\sk_\iss)$\\ 
$b \rexec \{0,1\}$\\
$\chall =\Chall_b(\uid^*,\dom^*,m^*,\sk_\iss,\pk_\iss)$\\
$\Or_2 = \{ \CorruptU(\cdot), \CUser(\cdot,\ver,\sk_\iss),  $ \\
$\HUserJoin(\cdot,\ver,\pk_\iss,V), \CComment(\cdot,\cdot,\cdot,\pk_\iss) \}$ \\
$b^* \rexec \A^{\Or_2}(\crs,\sk_\iss,\chall)$\\
Return $0$ if $(\uid^*,\cdot, \dom^*,\cdot,\cdot) \in COMM$\\
Return $1$ iff $b = b^*$ and $\uid^* \in \HU$\\
\end{experiment}
\caption{}
\end{figure}


\end{definition}

\begin{theorem}[Anonymity]\label{thm:anonymity}
    The \trollthrottle protocol (see
    Def.~\ref{def:trollthrottle})
    satisfies Anonymity (see Def.~\ref{def:anonymity}),
    if the underlying DAA scheme provides user-controlled anonymity
    (Def.~\ref{def:user-controlled-anonymity})
    and
    the proof system for relation $\R_\Join$ is a proof of knowledge.
\end{theorem}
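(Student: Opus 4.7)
The plan is to reduce anonymity of \trollthrottle to user-controlled anonymity of the DAA scheme, with the zero-knowledge property of the proof system serving to simulate the $\JoinIssue$ transcripts. Concretely, I would build a reduction $\R$ that plays the DAA user-controlled-anonymity game while internally simulating $\ExpAnon$ for $\A$. On setup, $\R$ receives $(\sk_I,\pk_I)$ from the DAA challenger, sets $\pk_\iss=\pk_I$, and invokes the zero-knowledge simulator $S_1(\secparam)$ to produce $(\crs_\Join,\tau)$; it then hands $(\crs,\sk_\iss)$ to $\A$.

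Oracle simulation proceeds as in the proofs of Theorems~\ref{thm:trolling-p} and \ref{thm:no-frame-p}, with two differences: $\CorruptU$ is \emph{refused} on $\uid^*$ (guaranteed by the winning condition $\uid^*\in\HU$), and $\HUserJoin(\uid,\ver,\pk_\iss,V)$ is answered by invoking the DAA join oracle (so that the DAA challenger internally samples and keeps $\sk_\uid$), taking the resulting commitment $\com$ and producing a \emph{simulated} proof $\Pi_\Join \gets S_2(\crs_\Join,\tau,(\com,\pk_I))$, which is forwarded to $\A$ acting as issuer. Commenting queries $\CComment(\uid,\dom,m)$ are answered by calling the DAA $\Sign_\DAA$ oracle on $(\uid,\dom,\h(m))$ and returning $(\sigma,\NymExtract(\sigma),\dom,\h(m))$.

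For the challenge, when $\A$ submits $(\uid^*,\dom^*,m^*)$, the reduction introduces a fresh honest identity $\uid'$ via the DAA join oracle (mirroring the extra join performed by $\Chall_b$) and forwards the DAA challenge on the pair $(\uid^*,\uid')$ with basename $\dom^*$ and message $\h(m^*)$. The bit $b$ chosen by the DAA challenger determines whether $\uid^*$ or $\uid'$ signs, which matches exactly the two branches of $\Chall_b$. The resulting $\sigma$ is returned to $\A$ together with $\nym=\NymExtract(\sigma)$. Phase-2 queries are answered identically; the anonymity-experiment restrictions ($\uid^*\notin\CU$ and $(\uid^*,\cdot,\dom^*,\cdot,\cdot)\notin\COMM$) translate directly into the admissibility conditions required by the DAA game (no corruption of $\uid^*,\uid'$ and no $\Sign_\DAA$ query on $\dom^*$ for either identity). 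Finally $\R$ outputs whatever bit $\A$ outputs.

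A hybrid argument bounds the gap between $\ExpAnon$ and the simulated experiment by $\Adv[\A^{\ZK}_{\Pi}]$, since the only difference is the replacement of honestly generated $\Pi_\Join$ proofs by simulated ones across polynomially many $\HUserJoin$ queries; all other oracles are perfectly simulated. Hence
\[
\Pr[\ExpAnon(\secparam)=1] \;\le\; \tfrac{1}{2} + \Adv[\R^{{\sf anonymity}}_{\DAA}] + \Adv[\A^{\ZK}_{\Pi}] \;\le\; \tfrac{1}{2} + \negl.
\]
The main obstacle is the $\HUserJoin$ simulation: the reduction does not hold $\sk_\uid$ for honest users created adversarially, yet must still produce a convincing proof $\Pi_\Join$ tied to the commitment $\com$ that the DAA oracle generated. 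The zero-knowledge property (rather than mere soundness/extraction used in the other two proofs) is precisely what makes this simulation go through; if the stated hypothesis of the theorem were only proof-of-knowledge, one would need to strengthen it to zero-knowledge as the sketch already indicates.
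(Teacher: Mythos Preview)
Your proposal is correct and essentially mirrors the paper's own proof: both build a reduction to the DAA user-controlled-anonymity game, simulate $\HUserJoin$ by calling the DAA $\IHJoin$ oracle together with the zero-knowledge simulator for $\Pi_\Join$, and answer the challenge by registering a fresh identity $\uid'$ and forwarding $(S_0,S_1)=(\uid^*,\uid')$ with $(m^*,\dom^*)$ to the DAA challenger, arriving at the same bound $\tfrac{1}{2}+\Adv[\R^{\anon}_{\DAA}]+\Adv[\cdot^{\ZK}_{\Pi}]$. Your closing remark is also apt: the theorem as stated assumes only a proof of knowledge, but both your argument and the paper's actual proof rely on the zero-knowledge property (indeed the paper's earlier proof sketch states the hypothesis as zero-knowledge), so the formal hypothesis should be strengthened accordingly.
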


\begin{proof}
Assume there exists a PPT adversary $\A$ that can break the anonymity
game with probability $\frac{1}{2} + \varepsilon(\secpar)$ where $\epsilon$
is non-negligible in $\secpar$.
We construct a PPT adversary $\R$ against the user-controlled-anonymity game that wins with
the same probability as $\A$.
We construct $\R$ as follows:
\begin{enumerate}

\item Experiment simulation for $\A$:
\begin{itemize}
\item The oracles $\CorruptU(\cdot), \CUser(\cdot,\ver,\sk_\iss),$ \newline $\CComment(\cdot,\cdot,\cdot,\pk_\iss)$ are simulated according to ~\ref{it:oracle-sim}.
\item To answer queries to the $\HUserJoin$ oracle, the reduction uses its own $\IHJoin(S)$ oracle and the
simulator $\Sim$ to generate the proof for relation $\R_\Join$ (here we require the zero-knowledge property).
\item The oracle $\Chall_b(\uid,\dom,m,\sk_\iss,\pk_\iss)$ is simulated as follows:
\begin{itemize}
\item $\R$ chooses some $\uid' \neq \uid$ and $(\uid',\cdot,\cdot,\cdot) \not\in \USK$ uniformly at random.
\item $\R$ queries it's Join oracle on $\uid'$ obtaining $\cred_{\uid'}$.
\item $\R$ gives $S_0 = \uid$ and $S_1 = \uid'$ , $m$ and $\dom$ to its challenger and receives the challenge 
$\sigma_b' =\Sign_\DAA(\sk_{S_b},\cred_{S_b},m,\dom)$.
\item $\R$ sends $(\NymExtract(\sigma_b'),(\sigma_b', \NymExtract(\sigma_b'),\dom,m))$ to $\A$.
If the challenged bit is $0$, the simulation of $\R$ is equivalent to $\Chall_0(\cdot)$. Otherwise, it 
the challenged bit is $1$, the simulation is equivalent to $\Chall_1(\cdot)$.
\end{itemize}
\item With non-negligible probability, $\R$ obtains $b_A \rexec \A^{\Or}(\crs,\sk_\iss)$.\\
\end{itemize} 
\item $\R$ returns $out = b_A$ and wins the user-controlled anonymity experiment if $\A$ wins against it's anonymity game. 
\end{enumerate}

Thus, by user-controlled anonymity of the $\DAA$ scheme and the zero-knowledge of the proof system 
we have $\Pr[\ExpAnon(\secparam)=1] = \frac{1}{2} +\Adv[\R^{anon}_{\DAA}] + \Adv[\R^{\ZK}_{\Pi}]$.
\end{proof}

        \subsection{Accountability}

The system shall provide accountability against censorship by allowing
a participant to claim and prove that a website censored its comment. 
The party provides evidence that can be used to prove that an entry in the
public ledger belongs to a certain message and basename. Deciding when
a message is acceptable is a matter of public opinion and not modelled
here. 

\subsubsection{Soundness}

\begin{definition}\label{def:accsound}
We say that the system's accountability mechanism is sound if
for every adversary $\A$ and $\secparam$, the probability
$\Pr[\ExpAccsound(\secparam)=1]$ is negligible $\secparam$.

\begin{center}
\begin{figure}[H]
    \begin{experiment}{\ExpAccsound(\secparam)}
    \flushleft
$\CU \leftarrow \emptyset$; $\HU \leftarrow \emptyset$; $\VM \leftarrow \emptyset$;   \\
$\USK \leftarrow \emptyset$; $\COMM \leftarrow \emptyset$; $\ver \leftarrow \emptyset$;  \\
$(\sk_\iss,\pk_\iss) \rexec \Setup(\secparam)$ \\
$\Or = \{ \CorruptU(\cdot), \CUser(\cdot,\ver,\sk_\iss),$ \\
$\CorruptV(\cdot,\ver), \CComment(\cdot,\cdot,\cdot,\pk_\iss), \UserJoin(\cdot,\ver,\sk_\iss)\}$\\
$(\dom^*,m^*,\comm^*,\evidence^*) \rexec \A^{\Or}(\crs,\pk_\iss)$\\
Return $1$ if all of the following hold true
\begin{itemize}
    \item $\COMM$ contains no tuple $(\cdot,\dom^*, m^*, \cdot)$
    \item $\COMM$ contains a tuple $(\cdot,\cdot,\cdot,\cdot,\comm^*)$
    \item $\VerifyClaim(\pk_\iss,\dom^*,m^*,\comm^*,\evidence^*)=1$
\end{itemize}
\end{experiment}
\caption{}
\end{figure}
\end{center}
\end{definition}

\begin{theorem}[Sound accountability]\label{thm:soundacc-p}
    The \trollthrottle protocol (see
    Def.~\ref{def:trollthrottle})
    has a sound accountability mechanism (see Def.~\ref{def:accsound})
    if $\h$ is collision resistant.
\end{theorem}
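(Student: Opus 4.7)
The plan is to give a reduction $\R$ from an adversary $\A$ winning $\ExpAccsound$ with non-negligible advantage to finding collisions in $\h$. Since $\R$ is free to generate $(\sk_\iss,\pk_\iss)$ itself, it can simulate every oracle in $\ExpAccsound$ perfectly, with no abort, deviation, or rewinding, and simply wait for $\A$ to output the forged tuple $(\dom^*,m^*,\comm^*,\evidence^*)$.

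The key observation I would then exploit is that the winning conditions hand us, for free, a second honestly-generated pair $(\dom',m')$ that disagrees with the adversary's claim while sharing the very same $\comm^*$. Concretely, $\comm^* \in \COMM$ means that some earlier $\CComment$ call produced $\comm^* = (\sigma', \nym', \dom', \h(m'))$ and recorded $(\uid', \nym', \dom', m', \comm^*)$. On the other hand, $\VerifyClaim$ accepting $(\dom^*,m^*,\comm^*,\evidence^*)$ unfolds to $\Verify$ accepting, whose parse step forces the basename component stored in $\comm^*$ to equal the input $\dom^*$ (so $\dom^* = \dom'$) and whose explicit check $\h(m^*) = h^*$ forces $\h(m^*) = \h(m')$. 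Combined with the remaining winning condition that no tuple in $\COMM$ has $(\dom^*,m^*)$ as its basename/message pair, this rules out $m^* = m'$, so $(m^*, m')$ is the desired collision that $\R$ outputs.

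I expect the main obstacle to be a purely bookkeeping one: making sure that the $\comm^*$ supplied by the adversary is literally the entry placed on $\COMM$ (so that $\sigma'$, $\nym'$, $\dom'$ and $\h(m')$ are uniquely determined by $\comm^*$) and that we genuinely rule out the degenerate case $m^* = m'$ from the winning conditions alone, before concluding the collision. Beyond collision resistance of $\h$, no further cryptographic assumption is needed, since the parse inside $\Verify$ already pins $\dom^*$ to the stored basename, so none of the DAA properties (in particular, no instant linkability of $\VerifyBsn$) should be invoked in this reduction.
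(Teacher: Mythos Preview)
Your proposal is correct and follows essentially the same argument as the paper's proof: both unfold $\VerifyClaim$/$\Verify$ on $\comm^*$ to pin down $\dom^*$ and $\h(m^*)$ against the components stored by the honest $\CComment$ call that produced $\comm^*$, then use the first winning condition to rule out $m^* = m'$ and output the collision $(m^*,m')$. Your write-up is slightly more explicit about the reduction's trivial simulation and about why no DAA property (in particular $\VerifyBsn$) is needed, which is a welcome clarification but not a different approach.
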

\begin{proof}
Assume there exists a PPT adversary $\A$ that can win the sound
accountability game with non-negligible probability.
For the case where the experiment returns $1$,
the adversary returns
    $(\dom^*,m^*,\allowbreak \comm^*,\evidence^*)$
such that,
    $(\cdot,\cdot, \dom^*, m^*, \cdot)\not\in\COMM$,
but, for some $\uid'$, $\nym'$, $\dom'$ and $m'$,
    $(\uid',\nym', \dom', m', \comm^*)\in\COMM$,
and 
    (by definition of $\VerifyClaim$),
    $\comm^* = (\sigma^*,\nym^*,h(m^*),\dom^*)$ 
    for some 
    $\sigma^*$
    and
    $\nym^*$,
    as well as
    $\Verify(\pk_\iss,\nym^*,\dom^*, m^*,\comm^*)=1$.
By definition of $\CComment$, 
    $(\nym',\comm^*) \rexec \Comment(\pk_\iss,\sk_{\uid'}, \cred_{\uid'},\dom',m')$,
and thus by definition of $\Comment$,
    $\comm^* = (\sigma', \nym', h(m'), \dom') = (\sigma^*,\nym^*,h(m^*),\dom^*)$.
Hence, 
    $ (\uid',\nym^*, \dom^*, m', \comm^*)\in\COMM$.
As we assume $\h$ to be collision-resistant,
    $m'\neq m$ while $h(m')=h(m)$ would constitute an attack.
Hence, if $\A'$ does not find a collision this way, then
    $(\uid',\nym^*, \dom^*, m^*, \comm^*)\in\COMM$, 
contradicting the assumption that no such tuple is in $\COMM$.
\end{proof}

\subsubsection{Completeness}

\begin{definition}\label{def:acccompl}
We say that the system's accountability mechanism is complete if
for every adversary $\A$, every $\secparam$, the probability
$\Pr[\ExpAcccompl(\secparam)=1]$ is negligible $\secparam$.

\begin{center}
\begin{figure}[H]
    \begin{experiment}{\ExpAcccompl(\secparam)}
   \flushleft
$\CU \leftarrow \emptyset$; $\HU \leftarrow \emptyset$; $V \leftarrow \emptyset$;   \\
$\USK \leftarrow \emptyset$; $\COMM \leftarrow \emptyset$; $\ver \leftarrow \emptyset$;  \\
$(\sk_\iss,\pk_\iss) \rexec \Setup(\secparam)$ \\
$\Or = \{ \CorruptU(\cdot), \CUser(\cdot,\ver,\sk_\iss),$ \\
$\CorruptV(\cdot,\ver),\CComment(\cdot,\cdot,\cdot,\pk_\iss), \UserJoin(\cdot,\ver,\sk_\iss)\}$\\
$(\uid^*,\nym^*,\dom^*,m^*,\comm^*,\evidence^*) \rexec \A^{\Or}(\crs,\pk_\iss)$\\
Return $1$ if there exist tuples $(\uid^*,\nym^*, \dom^*, m^*, \comm^*)\in\COMM$,\\
        and $(\uid^*,\sk_\uid,\cred_\uid,\cdot)\in \USK$ such that \\
$\VerifyClaim(\nym^*,\dom_i^*,m^*,\comm^*,x)=0$\\
for $x=\Claim(\pk_\iss,\sk_\uid,\cred_U,\dom^*,m^*,\comm^*,\nym^*)$.\\
\end{experiment}
\caption{}
\end{figure}
\end{center}
\end{definition}

\begin{theorem}[Completeness of accountability mechanism]\label{thm:complacc-p}
    The \trollthrottle protocol (see
    Def.~\ref{def:trollthrottle})
    has a complete accountability mechanism (see Def.~\ref{def:anonymity}),
    if correctness holds.
\end{theorem}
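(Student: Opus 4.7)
The plan is to observe that completeness follows almost immediately by unfolding the definitions of $\Claim$ and $\VerifyClaim$ and invoking correctness of the underlying scheme. In particular, inspecting Def.~\ref{def:trollthrottle}, we see that $\Claim(\pk_\iss,\sk_\uid,\cred_\uid,\dom^*,m^*,\comm^*)$ simply outputs $\evidence = \comm^*$, and $\VerifyClaim(\pk_\iss,\dom^*,m^*,\comm^*,\evidence)$ just returns $\Verify(\pk_\iss,\nym^*,\dom^*,m^*,\comm^*)$ (where $\nym^*$ is the pseudonym parsed from $\comm^*$). Hence the winning condition of $\ExpAcccompl$ reduces to the event that $\Verify(\pk_\iss,\nym^*,\dom^*,m^*,\comm^*)=0$ even though $(\uid^*,\nym^*,\dom^*,m^*,\comm^*)\in\COMM$.

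First I would argue that, because the tuple lies in $\COMM$, the oracle $\CComment$ must have produced $\comm^*$ via an honest execution of $\Comment(\pk_\iss,\sk_\uid,\cred_\uid,\dom^*,m^*)$ using the user's registered secret key and credential stored in $\USK$. By construction of $\Comment$, this means $\comm^*=(\sigma,\nym^*,\dom^*,h(m^*))$ with $\sigma \leftarrow \Sign_\DAA(\sk_\uid,\cred_\uid,\dom^*,h(m^*))$ and $\nym^* = \NymExtract(\sigma) = \NymGen(\sk_\uid,\dom^*)$, where $(\sk_\uid,\cred_\uid)$ were generated during an honest run of $\JoinIssue$.

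Next I would appeal to correctness of the underlying instantly linkable DAA scheme (Def.~\ref{def:correctness}) together with the instant-linkability guarantees (Def.~\ref{def:instantly-linkable-daa-scheme}) to conclude that, with overwhelming probability, each of the clauses checked by $\Verify$ holds: $\Verify_\DAA$ accepts $\sigma$ on $h(m^*)$ under basename $\dom^*$ with an empty revocation list; $h(m^*)=h^*$ trivially; $\NymExtract(\sigma)=\nym^*$ by construction; and $\VerifyBsn(\sigma,\dom^*)=1$ since $\sigma$ was produced under basename $\dom^*$. Thus $\Verify$ outputs $1$, and hence $\VerifyClaim$ accepts, so the adversary's probability of winning is bounded by the (negligible) correctness error of the DAA scheme and the hash.

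There is no real obstacle here beyond bookkeeping: every ingredient needed is supplied by correctness and the definition of $\Claim$ as the identity on $\comm$. The one subtlety worth spelling out is that since commentary may be produced while an adversary is simultaneously querying $\UserJoin$ and $\CorruptV$, one should check that these oracles do not alter the honest user's entries in $\USK$ and hence do not affect the validity of $\comm^*$; this follows immediately from the fact that those oracles only modify $\ver$, $\VM$, and $\CU$, never $\USK$ for users in $\HU$.
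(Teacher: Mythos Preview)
Your proposal is correct and follows the same approach as the paper, which simply states that the claim follows immediately from the definition of $\CComment$ and correctness of the DAA scheme. Your version is more explicit, in particular in also invoking the instant-linkability properties to justify the $\NymExtract$ and $\VerifyBsn$ clauses of $\Verify$, which the paper leaves implicit.
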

\begin{proof}
    This follows immediately from the definition of $\CComment$
    (Def.~\ref{def:oracles}), and the correctness of the DAA scheme
    (Def.~\ref{def:correctness}).
\end{proof}

    \end{full}

\end{document}